\newcommand{\inConference}[1]{\iftoggle{conference}{#1}{}} 
\newcommand{\inArxiv}[1]{\iftoggle{conference}{}{#1}}  
	\newcommand{\qedhere}{}
\tikzset{
	node/.style      = {draw, circle, node distance = 2cm},
}
\begin{document}

\newcommand {\ignore} [1] {}

\inArxiv{
	\newtheorem{theorem}{Theorem}[section]
	\newtheorem{lemma}[theorem]{Lemma}
	\newtheorem{fact}[theorem]{Fact}
	\newtheorem{corollary}[theorem]{Corollary}
	\newtheorem{proposition}[theorem]{Proposition}
	\newtheorem{definition}{Definition}[section]
	\newtheorem{observation}[theorem]{Observation}
	\newtheorem{claim}[theorem]{Claim}
	\newtheorem{assumption}[theorem]{Assumption}
	\newtheorem{notation}[theorem]{Notation}
	\newtheorem{reduction}{Reduction}
}
\inConference
{
	\newtheorem{observation}{Observation}
}

\def \ee   {\varepsilon}

\def \PP   {{\cal P}}
\def \QQ   {{\cal Q}}
\def \DD   {{\cal D}}
\def \NN   {{\cal N}}
\def \GG   {{\cal G}}
\def \II   {{\cal I}}
\def \MM   {{\cal M}}
\def \VV   {{\cal V}}
\def \EE   {{\cal E}}

\newcommand{\ie}{{\it i.e.}}
\newcommand{\eg}{{\it e.g.}}
\newcommand{\characteristic}{{\mathbf{1}}}
\newcommand{\RSet}{{\mathtt{R}}}
\newcommand{\Win}{{w_{\mathrm{in}}}}
\newcommand{\Wout}{{w_{\mathrm{out}}}}

\newcommand{\create}[1]
{
	\expandafter\newcommand\csname #1\endcsname{{the \csname core#1\endcsname}}
	\expandafter\newcommand\csname S#1\endcsname{{The \csname core#1\endcsname}}
	\expandafter\newcommand\csname G#1\endcsname{{the general \csname core#1\endcsname}}
}
\newcommand{\coreUM}{{\texttt{unconst\-rained-model}}} \create{UM}
\newcommand{\coreDM}{{\texttt{dicut-model}}} \create{DM}
\newcommand{\coreCM}{{\texttt{cardi\-nal\-ity-mo\-del}}} \create{CM}
\newcommand{\email}[1]{{\inArxiv{\href{mailto:#1}{\texttt{#1}}}\inConference{\texttt{#1}}}}

\pagenumbering{arabic}

\title{Online Submodular Maximization with Preemption}

\author{
 Niv Buchbinder\thanks{Statistics and Operations Research Dept., Tel Aviv University, Israel. E-mail: \email{niv.buchbinder@gmail.com}. Research supported by ISF grant 954/11 and BSF grant 2010426.}
 \and
 Moran Feldman\thanks{School of Computer and Communication Sciences, EPFL, Switzerland. E-mail: \email{moran.feldman@epfl.ch}. Research supported in part by ERC Starting Grant 335288-OptApprox.}
 \and
 Roy Schwartz\thanks{Dept. of Computer Science, Princeton University, Princeton, NJ. E-mail: \email{roysch@cs.princeton.edu}.}
}

\inConference{\date{}}
\maketitle

\begin{abstract}
\expandafter\csname\inConference{small}\endcsname
\inConference{\baselineskip=9pt}
Submodular function maximization has been studied extensively in recent years under various constraints and models. The problem plays a major role in various disciplines.
We study a natural online variant of this problem in which elements arrive one-by-one and the algorithm has to maintain a solution obeying certain constraints at all times. Upon arrival of an element, the algorithm has to decide whether to accept the element into its solution and may preempt previously chosen elements. The goal is to maximize a submodular function over the set of elements in the solution.

We study two special cases of this general problem and derive upper and lower bounds on the competitive ratio.  
Specifically, we design a $1/e$-competitive algorithm for the unconstrained case in which the algorithm may hold any subset of the elements, and constant competitive ratio algorithms for the case where the algorithm may hold at most $k$ elements in its solution.

\end{abstract} 

\inArxiv{
	\pagenumbering{Alph}
	\thispagestyle{empty}
	\newpage
	\setcounter{page}{1}
	\pagenumbering{arabic}
}

\section{Introduction}

Submodular function maximization has been studied extensively in recent years under various constraints and models.
Submodular functions, which capture well the intuitive idea of diminishing returns, play a major role in various disciplines, including combinatorics, optimization, economics, information theory, operations research, algorithmic game theory and machine learning. Submodular maximization captures well known combinatorial optimization problems such as: Max-Cut~\cite{GW95,H01,K72,KKMO07,TSSW00}, Max-DiCut~\cite{FG95,GW95,HZ01},
Generalized Assignment~\cite{CK05,CKR06,FV06,FGMS06}, and Max-Facility-Location~\cite{AS99,CFN77a,CFN77b}. Additionally, one can find submodular maximization problems in many other settings. In machine learning, \inArxiv{maximization of submodular functions}\inConference{submodular maximization} has been used for document summarization~\cite{LB10,LB11}, sensor placement~\cite{KSG08,KG05,KLGVF08}, and information gathering~\cite{KG07}. In algorithmic game theory, calculating market expansion~\cite{DRS09} and computing core values of certain types of games~\cite{SU13} are two examples where the problem can be reduced to submodular maximization.

It is natural to consider online variants of submodular maximization, such as the following setting. There is an unknown ground set of elements $\NN = \{u_1, u_2, \dotsc, u_n\}$, a non-negative submodular function $f : 2^\NN \to \mathbb{R}^+$ and perhaps a constraint determining the feasible sets of elements that may be chosen. The elements of $\NN$ arrive one by one in an online fashion. Upon arrival, the online algorithm must decide whether to accept each revealed element into its solution and this decision is irrevocable. As in the offline case, the algorithm has access to the function $f$ via a value oracle, but in the online case it may only query subsets of elements that have already been revealed. For a concrete example, consider a soccer team manager preparing his team for the 2014 World Cup and hiring $k~(=11)$ players. Occasionally, the manager gets an additional application from a new potential player, and may recruit him to the team. The goal is to maximize the quality of the team. The quality of the team is indeed a (non-monotone) submodular function of the chosen players.

Although natural, it is not difficult to see that no algorithm has a constant competitive ratio for the above na\"{\i}ve model even with a simple cardinality constraint. Therefore, in order to obtain meaningful results, one must relax the model. One possible relaxation is to consider a random arrival model in which elements arrive at a random order \cite{BHZ10,FNS11c,GRST10}. This relaxation leads to ``secretary type" algorithms. We propose here a different approach that still allows for an adversarial arrival. Here, we allow the algorithm to reject (preempt) previously accepted elements. Preemption appears as part of many online models (see, \eg, \cite{AA99,EW06,ELSW13,F14}). The use of preemption in our model is mathematically elegant and fits well with natural scenarios. For example, it is natural to assume that a team manager may release one of the players (while hiring another instead) if it benefits the quality of the team.

\subsection{Our Results\inConference{.}}

We study online submodular maximization problems with two simple constraints on the feasible sets. The unconstrained case in which no restrictions are inflicted on the sets of elements that \inArxiv{the algorithm may choose}\inConference{can be chosen}, and the cardinality constraint in which the online algorithm may hold at any time at most $k$ elements. We provide positive results as well as hardness results.
Our hardness results apply both to polynomial and non-polynomial time algorithms, but all our algorithms are polynomial (in some\inArxiv{ of them}, a polynomial time implementation losses an $\ee$ in the competitive ratio).

\paragraph{The unconstrained case:}
The first special case we consider is, arguably, the most basic case in which there is no constraint on the sets of elements that may be chosen. In this case the problem is trivial if the submodular function $f$ is monotone\footnote{A set function $f : 2^\NN \to \mathbb{R}$ is monotone if $A \subseteq B \subseteq \NN$ implies $f(A) \leq f(B)$.} as the algorithm may take all elements. Therefore, we only consider non-monotone functions $f$. One simple candidate algorithm is an algorithm that selects each revealed element independently with probability $1/2$. Feige et al.~\cite{FMV11} proved that this algorithm, which does not use preemption, is $1/4$-competitive. We prove that this is essentially optimal without using the additional power of preemption.

\begin{theorem} \label{th:unconstrained_no_preemption}
An online algorithm using no preemption cannot be $(1/4 + \ee)$-competitive for any constant $\ee > 0$, even if $f$ is guaranteed to be a cut function of a weighted directed graph.
\end{theorem}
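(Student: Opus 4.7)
The plan is to use Yao's minimax principle: exhibit a distribution over directed cut instances on which every deterministic online algorithm that never preempts achieves in expectation at most a $(1/4+\ee)$-fraction of $\operatorname{OPT}$. The template instance behind the distribution is the bipartite directed graph with vertex classes $L,R$ of size $n$ and all unit-weight arcs $l\to r$ for $l\in L$, $r\in R$: there $\operatorname{OPT}=n^2$ (attained at $S=L$), while a product distribution that puts each element in $S$ independently with probability $1/2$ has expected value only $n^2/4$, which is exactly the ratio we want to force.

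First I would fix the analytic lemma that powers the lower bound. For any set $S\subseteq L\cup R$ whose (random) choice is independent of the partition $(L,R)$ drawn uniformly among balanced bipartitions of $L\cup R$, $\mathbb{E}[|L\cap S|\cdot|R\setminus S|]\le n^2/4+O(n^{3/2})$; this is a direct hypergeometric calculation, with the maximum over $|S|$ attained at $|S|=n$.

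The crux is then to design the distribution so that the algorithm's oracle queries cannot reveal the partition before the algorithm has irrevocably committed on every element of $L\cup R$. In the raw bipartite graph the marginals $f(\{u\})=n\cdot\mathbf{1}[u\in L]$ immediately leak the partition. To block this I would draw the partition uniformly at random and pad the instance with a set of ``shadow'' sink vertices that arrive only after the $2n$ original elements, together with balancing arcs out of the $R$-vertices into those sinks whose weights are calibrated so that $f(S)$ on every subset of the revealed original elements is a symmetric function of the partition---equivalently: invariant under swapping $L\leftrightarrow R$. A coupling argument then guarantees that the algorithm's transcript of oracle responses during the first $2n$ arrivals is identical under either labelling, so the distribution of its final set on $L\cup R$ is independent of the partition and the lemma above applies.

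The main obstacle is that any balancing gadget added to equalize queries tends to produce an alternative high-value set (for instance, $S=R$) that the algorithm could take without ever knowing the partition, thereby tying $\operatorname{OPT}$. I expect this to be handled by routing all balancing arcs into pure sinks (vertices with zero out-weight), so that including them in $S$ can never add to the cut value; combined with a careful scaling, the contribution of the shadow structure to $\mathbb{E}[f(S)]$ for any oblivious $S$ remains $O(n)$, which preserves the bound $\mathbb{E}[f(S)]\le n^2/4+O(n)$ while keeping $\mathbb{E}[\operatorname{OPT}]=n^2$. Taking $n\to\infty$ drives the competitive ratio below $1/4+\ee$, completing the argument.
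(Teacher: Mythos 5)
There is a genuine gap, and it sits exactly where you flagged the ``main obstacle'': the hiding gadget cannot do both jobs you need it to do. To make every oracle answer on revealed subsets invariant under swapping $L\leftrightarrow R$, each $R$-vertex must carry the same out-weight as each $L$-vertex (already the singleton queries $f(\{u\})$, which return the full out-weight of $u$ including arcs to unrevealed sinks, force this). But once each $R$-vertex has out-weight $n$ into sinks, the instance has $f(S)=|S\cap L|\cdot(n-|S\cap R|)+n\,|S\cap R|$, so the oblivious solution $S=L\cup R$ (or $S=$ either class) already attains $n^2=\mathrm{OPT}$; a no-preemption algorithm that simply accepts every original element is $1$-competitive on your distribution. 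Your proposed fix --- scaling the shadow arcs so their contribution is $O(n)$ --- destroys the symmetry it was introduced to create: with small shadow weights the singleton values $f(\{u\})$ distinguish $L$ from $R$, the coupling argument collapses, and the algorithm can take exactly $L$. So as stated the construction proves nothing, and the hypergeometric lemma, which is fine in itself, never gets to apply. (A secondary, smaller issue: Yao's principle for competitive ratios needs care about ratio of expectations versus expected ratio, but that is not the core problem.)

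The paper's proof avoids a fixed symmetric instance altogether and instead exploits the no-preemption constraint dynamically with an adaptive adversary. Elements arrive with arcs of geometrically increasing weight $m^i$ pointing to \emph{all} previously revealed nodes; at step $i$ the algorithm's expected gain from the dominant new arcs is essentially $\sum_S p_{i,S}q_{i,S}\bigl(1-\tfrac{|S|}{i-1}\bigr)$, and if this ever drops below $1/4+\ee/4$ the adversary pads with isolated nodes and the last heavy star alone witnesses a ratio below $1/4+\ee$. If it never drops, a potential function $\Phi(i)=\mathbb{E}\bigl[(1-|B_{i-1}|/(i-1))^2\bigr]$ (the squared fraction of revealed elements \emph{not} held, which irrevocability lets one track) must decrease by $\Omega(\ee/i)$ at every step, contradicting $\Phi\in[0,1]$ after $e^{\Theta(1/\ee)}$ steps. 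If you want to pursue a Yao-style argument, the natural route is to randomize the stopping/padding time in that geometric-weight construction, not to hide a bipartition --- hiding forces symmetry, and symmetry hands the algorithm an oblivious near-optimal set.
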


On the positive side we prove that preemption is beneficial for {\UM}.

\begin{theorem} \label{th:unconsrained_positive}
There \inArxiv{exists}\inConference{is} a $1/e$-competitive algorithm for {\UM}, which can be implemented in polynomial time at the cost of an $\ee$ loss in the competitive ratio (for an arbitrary small constant $\ee > 0$).
\end{theorem}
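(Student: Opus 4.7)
The plan is to design an online algorithm that maintains a single set $S$ and updates it using randomized preemption, then to analyze it via a potential-function argument comparing $S$ to an arbitrary adversarial offline optimum $OPT$. A deterministic ``accept if the marginal is positive'' rule cannot beat $1/4$, since by Theorem~\ref{th:unconstrained_no_preemption} no preemption-free algorithm does better and such a rule never preempts; so both preemption and randomization appear to be needed. The natural candidate I would try is: on arrival of element $u$, compute the locally best update among the options ``add $u$'', ``swap $u$ for some $v \in S$'', ``drop some $v \in S$'', ``do nothing'', and execute it with a carefully chosen probability $p$ (with the remaining probability mass going to ``do nothing''); the parameter $p$ is tuned so that the telescoped analysis yields exactly $1/e$.

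For the analysis, fix $OPT$ and let $S_t$ denote the algorithm's set after $t$ arrivals. I would introduce a potential of the form $\Phi_t := \mathbb{E}[f(S_t \cup OPT)] + \alpha \cdot \mathbb{E}[f(S_t)]$ for a tunable constant $\alpha > 0$, with $\Phi_0 = f(OPT)$. Using submodularity, each local update either raises $\mathbb{E}[f(S_t)]$ or lowers $\mathbb{E}[f(S_t \cup OPT)]$ by at most a matching amount, so a step-by-step inequality $\Phi_t \geq \Phi_{t-1}$ should hold in expectation once $p$ is chosen correctly. Setting $\alpha = 1/(e-1)$ and invoking $\Phi_n \leq (1 + \alpha)\,\mathbb{E}[f(S_n)]$ (since $S_n \cup OPT$ has value at most $\mathbb{E}[f(S_n)]$ plus the $OPT$-contribution that $\alpha$ absorbs) would give $\mathbb{E}[f(S_n)] \geq (1/e)\, f(OPT)$.

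The polynomial-time penalty of $\ee$ would come from discretizing $\alpha$ (or the decision probability) on a logarithmic grid and running one instance per grid point, outputting the best final set; each instance is polynomial per arrival, since the local best update ranges over only $|S|+2$ options. The main obstacle will be pinning down the precise swap rule and matching potential so that the factor works out to exactly $1/e$ (rather than $1/4$ or $1/2$); in particular, the non-monotonicity of $f$ forces the analysis to charge the drop of $\mathbb{E}[f(S_t \cup OPT)]$ element-by-element via discrete derivatives and to couple these charges to the algorithm's local gain at the same step. Since $OPT$ is unknown to the algorithm, the step-by-step inequalities must reference only quantities computable at time $t$ — this is what makes choosing the swap rule and $\alpha$ delicate.
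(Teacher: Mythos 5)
Your plan has a genuine gap, and in fact the accounting as written cannot be completed. First, the two load-bearing steps are only asserted: you never exhibit a swap rule and probability $p$ for which the per-step inequality $\Phi_t \geq \Phi_{t-1}$ holds, and the terminal inequality $\Phi_n \leq (1+\alpha)\,\mathbb{E}[f(S_n)]$ is simply false in general for non-monotone $f$, since $f(S_n \cup OPT)$ can be far larger than $f(S_n)$ and there is no mechanism by which the additive term $\alpha\,\mathbb{E}[f(S_n)]$ ``absorbs'' the $OPT$-contribution. Second, even if both steps did hold, the arithmetic does not give $1/e$: from $\Phi_0 = f(OPT)$, $\Phi_n \geq \Phi_0$ and $\Phi_n \leq (1+\alpha)\,\mathbb{E}[f(S_n)]$ with $\alpha = 1/(e-1)$ you would conclude $\mathbb{E}[f(S_n)] \geq (1-1/e)\cdot f(OPT)$, which exceeds the information-theoretic $1/2$ barrier of Feige et al.\ for unconstrained submodular maximization (and the paper's own online hardness results), so no choice of swap rule and $p$ can make this chain of inequalities true. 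This is not a matter of ``pinning down the precise rule''; the potential itself must be redesigned (for instance, with a coefficient in front of $\mathbb{E}[f(S_t \cup OPT)]$ that decays over the steps, as in the offline double-greedy analyses), and that is where the real work lies.

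For comparison, the paper's route is quite different and avoids maintaining a single deterministically-updated set. Each arriving element $u_i$ draws a uniformly random threshold $\theta_i$, and after $i$ arrivals the solution is $S_i = \{u_j : \partial_{u_j} F(\theta_j \cdot \NN_i) \geq 0\}$, where $F$ is the multilinear extension; submodularity makes these derivatives non-increasing as elements arrive, so elements are only ever preempted, which is what places the algorithm in the preemption model. The final set contains each element independently with an explicitly characterized probability $y_{u}$, so $\mathbb{E}[f(S_n)] = F(y)$, and $F(y)$ is lower-bounded by a differential inequality whose solution is $-(1-b)\ln(1-b)\cdot f(OPT)$, maximized at $b = 1 - e^{-1}$ to give exactly $1/e$. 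The $\ee$ loss for polynomial time comes from sampling-based estimation of $\partial_{u} F$ (with errors scaled by $i^{-2}$ because $n$ is unknown, and with correlated samples so the estimates remain monotone in $i$), not from discretizing a parameter grid as you propose.
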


A special case of {\UM} that we study is {\DM}. In this model every set is feasible and the objective function $f$ is a cut function of a weighted directed graph $G = (V, A)$ having $\NN \subseteq V$ (\ie, a subset of the nodes of $G$ form the ground set). We assume an algorithm for {\DM} knows $V$ (but not $\NN$) and can query the weight (and existence) of every arc leaving a \emph{revealed} node. Observe that the algorithm can in fact calculate the value of $f$ for every set of revealed elements using this information and never needs to query $f$ via the oracle.

{\SDM} can be viewed as an online model of the well-known Max-DiCut problem (see Section~\ref{ssc:related_work} for a discussion of another online model of Max-DiCut). Additionally, since {\DM} is a special case of {\UM} (with more power for the algorithm), it inherits the positive result given by Theorem~\ref{th:unconsrained_positive}. The next theorem gives a stronger result.

\begin{theorem} \label{th:dicut_positive}
There exists a polynomial time $0.438$-competitive algorithm for {\DM}.
\end{theorem}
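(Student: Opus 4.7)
The plan is to exploit the structural advantage of {\DM} over {\UM}: when node $v$ is revealed, the algorithm learns the weights of all arcs leaving $v$, and $f$ therefore decomposes explicitly as a sum of arc contributions over any subset of revealed nodes. This per-arc view should enable a more refined decision rule than the generic $1/e$-competitive algorithm from Theorem~\ref{th:unconsrained_positive}. I would design a randomized online algorithm that, upon arrival of $v$, (i) tentatively adds $v$ to the current solution $S$ with a bias $p$ to be tuned, and (ii) performs a preemption step, removing any previously selected $u \in S$ whose net contribution to the cut, given the arcs now known, has become negative. A natural variant is to run this in convex combination with the naive ``keep each arriving node with probability $1/2$'' algorithm, which is $1/4$-competitive by Feige et~al.~\cite{FMV11}; the mixing weight and $p$ would be jointly optimized to push the guarantee to $0.438$.

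For the analysis, I would fix an offline optimum partition $(O, \NN \setminus O)$ of value $f(O)$ and bound, arc by arc, the probability that the algorithm's final solution $S$ contains the tail and excludes the head. The target inequality is that every arc in OPT's cut (tail in $O$, head outside $O$) is captured by the algorithm's cut with probability at least $0.438$, while arcs outside OPT's cut contribute to the algorithm's cut only in a controlled way (bounded in terms of $f(O)$ via the non-negativity and submodularity of $f$). Summing these per-arc contributions, weighted by the arc weights, then yields $\mathbb{E}[f(S)] \geq 0.438 \cdot f(O)$.

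The main obstacle will be that preemption correlates the inclusion statuses of different nodes: whether an arc $(u, w)$ is captured depends jointly on both endpoints' final statuses, and these are no longer independent once previously accepted nodes may be dropped in response to later arrivals. I would address this through an inductive argument that maintains, after each arrival, the joint distribution of the indicators $\{u \in S\}$ on each arc's two endpoints, choosing $p$ and the preemption rule so that the distribution at termination certifies the desired per-arc capture probability. Optimizing over $p$ and the mixing weight yields a small, explicit optimization problem whose optimum is the $0.438$ constant. Polynomial running time is immediate, since every step only inspects $v$'s incident arcs in the currently revealed subgraph and evaluates a constant number of candidate moves.
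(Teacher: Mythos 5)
There is a genuine gap: your argument never actually establishes the constant. The heart of the claim is that some bias $p$, a ``drop nodes whose known net contribution is negative'' preemption rule, and a mixing weight with the $1/4$-competitive random algorithm of~\cite{FMV11} can be tuned so that every arc of $OPT$'s cut is captured with probability at least $0.438$; this is asserted as the outcome of ``a small, explicit optimization problem,'' but no such problem is written down, and there is no reason to believe this two-parameter family reaches the bound. Indeed, the per-arc target is strictly stronger than the aggregate guarantee $\mathbb{E}[f(S)] \geq 0.438 \cdot f(OPT)$ (the remark about controlling arcs outside $OPT$'s cut is unnecessary, since all arc weights are non-negative), and the one place where you confront the real difficulty---the correlations introduced by a deterministic preemption rule that reacts to later arrivals---is deferred to an unspecified ``inductive argument that maintains the joint distribution'' of the two endpoints of each arc. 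That induction is exactly the missing proof, and nothing in the proposal indicates how the invariant would be stated, let alone verified, for an adversarial arrival order.

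The paper takes a different and much more economical route, and the constant in the theorem is inherited from it rather than from any per-arc accounting. Feige and Jozeph (Theorem~\ref{th:oblivious}, via~\cite{FJ10}) prove the existence of a non-decreasing function $h$ such that selecting each node $u$ independently with probability $h\bigl(\Wout(u)/(\Win(u)+\Wout(u))\bigr)$ yields an expected cut of at least $0.483 \cdot f(OPT)$; constructing $h$ is itself a nontrivial result, not something recoverable by tuning a bias and a mixing weight. The online implementation (Algorithm~\ref{alg:DegreesRatio}) then exploits two facts your proposal does not use: in {\DM} the total weight $\Wout(u)$ is known the moment $u$ is revealed, while the known in-weight $\Win(u,i)$ only grows over time, so the ratio---and hence, by monotonicity of $h$, the selection probability $y_u(i)$---only decreases; comparing $y_u(i)$ against a threshold $\theta_u$ drawn once at arrival therefore gives a rule that only ever preempts, never re-adds, and leaves the final memberships independent with the Feige--Jozeph marginals, so $\mathbb{E}[f(S_n)] = F(y) \geq 0.483 \cdot f(OPT)$. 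Without an ingredient of this kind---either the function $h$ with its global (not per-arc) guarantee, or a genuinely new analysis of your insertion/preemption dynamics---the proposal does not prove the theorem.
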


Theorem~\ref{th:dicut_positive} is proved by showing that an offline algorithm suggested by~\cite{FJ10} can be implemented under {\DM}. We complement Theorems~\ref{th:unconsrained_positive} and~\ref{th:dicut_positive} by the following theorem which gives hardness results for {\DM} (and thus, also for {\UM}).

\begin{theorem} \label{th:dicut_negative}
\inArxiv{No}\inConference{The competitive ratio of no} randomized (deterministic) algorithm for {\DM} \inArxiv{has a competitive ratio}\inConference{is} better than $4/5$ ($\frac{5 - \sqrt{17}}{2} \approx 0.438$).\footnote{Theorem~\ref{th:dicut_negative} holds even if we allow the algorithm access to all arcs of $G$, including arcs leaving unrevealed elements.}
\end{theorem}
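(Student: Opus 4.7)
Both lower bounds will be established by exhibiting explicit hardness instances of {\DM}. The underlying source of hardness (stronger than what the footnote demands) is that even an algorithm given the full graph $G$ in advance does not know which subset $\NN \subseteq V$ will actually be revealed nor in what order, and its rejection decisions are permanent: a preempted or never-accepted element cannot be re-added later.

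For the randomized $4/5$ bound I would invoke Yao's minimax principle, reducing the task to the construction of a distribution $\DD$ over input sequences on which every deterministic online algorithm achieves expected cut value at most $\tfrac{4}{5}\,\mathbb{E}_{\DD}[\mathrm{OPT}]$. The natural candidate is a small, highly symmetric directed graph (e.g.\ a two-node instance with a single unit-weight arc, possibly replicated in parallel to amplify the gap) together with a distribution over $\NN\subseteq V$ and arrival orders so that (i) after observing the first revealed element the algorithm cannot distinguish between several branches of $\DD$, and (ii) in at least one branch any fixed accept/reject decision on that element costs a $1/5$ fraction of the optimum. A case analysis over the algorithm's first-move probability then yields the $4/5$ bound.

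For the deterministic bound I would use an adaptive adversary that designs the arrival order and completion of the graph while watching the algorithm's responses. The specific constant $\alpha = (5-\sqrt{17})/2$ is the smaller root of $\alpha^2 - 5\alpha + 2 = 0$, equivalently $\alpha(5-\alpha) = 2$; this quadratic strongly suggests a two-branch construction in which the adversary first reveals a short prefix, observes whether the algorithm has included a designated element, and then reveals one of two completion gadgets whose weights are tuned so that accepting is refuted by branch~A and rejecting is refuted by branch~B, with both branches yielding precisely ratio $\alpha$. Equating the two branch ratios gives the defining quadratic.

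The main obstacle is purely combinatorial: designing the small gadget graphs so that \emph{every} deterministic reaction is matched by a continuation achieving ratio at most $\alpha$, and simultaneously showing the distribution used in the randomized case is tight. Once the gadgets are pinned down, the remaining analysis reduces to a finite case check over the algorithm's possible behaviors on the revealed elements, with no further analytic content. Finally, I would observe that in both constructions the algorithm is not helped by learning arcs leaving unrevealed nodes, since those arcs can be made either absent or symmetric across the adversary's branches; this gives the stronger statement promised in the footnote.
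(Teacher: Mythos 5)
There is a genuine gap: what you have written is a strategy outline, and the entire mathematical content of this theorem lies in the part you defer as ``purely combinatorial'' --- namely the explicit gadget graphs, their weights, and the finite case analysis. You correctly identify that the adversary should be adaptive for the deterministic bound and that $\alpha = (5-\sqrt{17})/2$ solves $\alpha^2-5\alpha+2=0$ (the paper's weights $\frac{\sqrt{17}-3}{4}$ and $\frac{\sqrt{17}+1}{4}$ are indeed tuned so that several branches tie at exactly $\alpha$), but without the graph and the branch-by-branch computation no bound has been proved. Moreover, your one concrete suggestion for the randomized case --- ``a two-node instance with a single unit-weight arc, possibly replicated in parallel'' --- cannot work: on such instances the algorithm that accepts every revealed source node and never preempts is optimal, so no tension between accepting and rejecting ever arises. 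The hardness requires opposing arcs together with pendant in-arcs, as in the paper's graph $w_1 \to u_1 \rightleftarrows u_2 \leftarrow w_2$ (all weights $1$): holding both $u_1$ and $u_2$ is worthless if the input stops, holding exactly one, say $u_1$, is punished by revealing $w_1$ (any subset of $\{u_1,w_1\}$ has value $1$ while $\{u_2,w_1\}$ has value $2$), and irrevocability of rejection is what prevents recovery. The deterministic construction needs a further layer (the paper uses six nodes $v_1,u_1,u_2,v_2,w_1,w_2$ with the $\sqrt{17}$ weights and three adversary stages), which your ``two-branch'' sketch does not supply.

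Two smaller points. First, your route for the randomized bound differs from the paper's: the paper does not invoke Yao but lets an adaptive adversary react to the algorithm's acceptance probabilities $p_1,p_2$ after $u_1,u_2$ are revealed (stop if $p_1+p_2\le 4/5$, otherwise reveal $w_1$ or $w_2$ according to which probability is larger). A Yao-style argument is in fact viable with the \emph{same} four-node graph --- e.g., stop with probability $1/3$ and continue with $w_1$ or $w_2$ each with probability $1/3$ yields $\mathbb{E}[ALG] \le \tfrac{4}{5}\mathbb{E}[OPT]$ for every deterministic algorithm --- so your framework could be completed, but only once the correct gadget replaces the single-arc one. Second, your closing remark about the footnote is fine in spirit and matches the paper: the hard instances are fixed graphs known in advance, and the adversary's power comes solely from choosing which nodes of $V$ form $\NN$ and in what order they arrive.
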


Notice that for polynomial time algorithms a hardness result of $1/2$ proved by~\cite{FMV11} for offline algorithms extends immediately to {\UM}. For {\DM} there exists a polynomial time $0.874$-approximation offline algorithm~\cite{LLZ02}, thus, polynomial time offline algorithms are strictly stronger than online algorithms in this model.

\paragraph{Cardinality constraint:}
The second case we consider is {\CM} in which a set is feasible if and only if its size is at most $k$, for some parameter $0 \leq k \leq n$. Our positive results for {\CM} are summarized by the following theorem.

\begin{theorem} \label{th:cardinality_positive}
There exists a randomized ($56/627 \approx 0.0893$)-competitive algorithm for {\CM}, which can be implemented in polynomial time at the cost of an $\ee$ loss in the competitive ratio (for an arbitrary small constant $\ee > 0$). Moreover, if the objective function is restricted to be monotone, then there exists a polynomial time deterministic ($1/4$)-approximation algorithm for this model.
\end{theorem}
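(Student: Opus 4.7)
The plan is to establish the two assertions of the theorem separately. I would first prove the deterministic $1/4$-approximation for monotone $f$, and then use that algorithm together with the $1/e$-competitive unconstrained algorithm of Theorem~\ref{th:unconsrained_positive} as black boxes to construct the randomized $56/627$-competitive algorithm for the general (non-monotone) case.

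For the monotone sub-claim, I propose a threshold-swap greedy with preemption. The algorithm maintains a solution $S$ with $|S|\le k$ and, for each $v\in S$, a stored weight $w(v)$ equal to the marginal gain that $v$ contributed at the moment it entered $S$. A telescoping identity then guarantees $\sum_{v\in S} w(v)=f(S)$. When a new element $u$ is revealed, the algorithm computes $\Delta(u)=f(S\cup\{u\})-f(S)$. If $|S|<k$ it accepts $u$ with weight $\Delta(u)$; otherwise, letting $v^{\star}$ minimize $w(\cdot)$ in $S$, it swaps $v^{\star}$ out for $u$ exactly when $\Delta(u) \ge (1+c)\, w(v^{\star})$ for a tuned constant $c\ge 0$. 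The analysis charges each $o\in \mathrm{OPT}$ either to the ``slot'' it occupies in $S$ (if $o\in S$ at the end) or to the element that displaced it, and uses submodularity together with monotonicity to upper bound every such charge by a small constant times the stored weight of the occupying element. Summing over $o\in \mathrm{OPT}$ and optimizing $c$ yields $f(\mathrm{OPT})\le 4\, f(S)$, i.e.\ a $1/4$-approximation, in the spirit of the Chakrabarti--Kale streaming analysis.

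For the non-monotone randomized result, the idea is to simulate the monotone algorithm even though $f$ may be non-monotone, while correcting its mistakes via an unconstrained solver in parallel. Concretely, I would run in parallel (i) the monotone threshold-swap algorithm above, treating $f$ as if it were monotone, (ii) the $1/e$-competitive algorithm of Theorem~\ref{th:unconsrained_positive} restricted to the elements ever held by (i), and possibly independent copies run on random sub-samples of the input stream obtained by independently discarding each arriving element with some fixed probability. At all times the algorithm reports the candidate of largest $f$-value that is feasible. The analysis decomposes $f(\mathrm{OPT})$ into a ``monotone-like'' part bounded by the value returned by (i) and a ``correction'' part bounded by an unconstrained optimum over (i)'s holdings; the specific constant $56/627$ will come out of solving the small optimization over the sampling probability and the swap threshold $c$ that equalizes the two resulting bounds.

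The main obstacle I anticipate is the non-monotone analysis: the threshold-swap greedy can accept elements with positive marginal that turn out to be harmful in hindsight, so one must argue that such elements, together with the preempted ones, still contain a subset whose $f$-value is competitive with $\mathrm{OPT}$, which is exactly what the unconstrained subroutine is asked to extract. The polynomial-time implementation, in both parts, follows by replacing the exact test $\Delta(u)\ge(1+c)\, w(v^{\star})$ with its discretized version over $O(\ee^{-1}\log n)$ guesses of $f(\mathrm{OPT})$ and by invoking the polynomial-time variant of Theorem~\ref{th:unconsrained_positive}, costing only an $\ee$ loss in the competitive ratio.
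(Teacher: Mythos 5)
Your monotone sub-claim is fine in substance: the swap rule ``accept $u$ if its marginal gain is at least $(1+c)$ times the smallest stored weight'' is essentially the Ashwinkumar/Chakrabarti--Kale streaming algorithm, which the paper itself cites for exactly this part; the paper's own alternative uses a threshold of $c\cdot f(S_{i-1})/k$ (an average-based rather than minimum-weight-based test) but the $1/4$ guarantee and the charging structure are the same in spirit.

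The non-monotone part, however, has a genuine gap, and it is the part that carries the theorem. First, your construction violates the model: the algorithm must maintain a single solution with $S_i\subseteq S_{i-1}+u_i$, so you cannot run several procedures in parallel and ``at all times report the candidate of largest $f$-value'' --- switching from one candidate to another would require re-inserting previously revealed elements that are not in the current solution, which preemption does not allow (and the unconstrained subroutine's output need not even satisfy $|S|\le k$). Second, the claimed decomposition of $f(\mathrm{OPT})$ into a ``monotone-like'' part bounded by the swap greedy and a ``correction'' part bounded by an unconstrained optimum over the greedy's holdings is asserted, not proved, and there is no reason it should hold: the standard charging only bounds $f(A_n\cup \mathrm{OPT})$, and for non-monotone $f$ this can be far below $f(\mathrm{OPT})$ without any subset of the greedy's holdings being valuable. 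The paper's fix is different and is the crux of the proof: run the swap greedy with capacity $pk$ on the surrogate $g(S)=F(p^{-1}\cdot S)$ (the multilinear extension at the scaled-down solution), use Lemma~2.3 of Feige--Mirrokni--Vondr\'ak to get $g(A_n\cup \mathrm{OPT})\ge p^{-1}(1-p^{-1})f(\mathrm{OPT})$, and then convert the fractional solution to a feasible integral one by an online, place-based randomized rounding satisfying $\mathbb{E}[f(\bar S_n)]\ge g(S_n)$; the constant $56/627$ comes from $c=7/4$, $p=3$ in this analysis. Your stream-subsampling remark points in a potentially workable direction (it can cap the probability that OPT elements enter $A_n$), but then the charging bound on $f(u\mid A_n)$ fails for unsampled OPT elements, and you give no argument replacing it, nor any derivation of the claimed constant.
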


The second part of Theorem~\ref{th:cardinality_positive} follows from the \inArxiv{works of}\inConference{results given by} Ashwinkumar~\cite{A11} and Chakrabarti and Kale~\cite{CK13}. More specifically, these works present a streaming $1/4$-competitive algorithm for maximizing a monotone submodular function subject to a cardinality constraint,\footnote{In fact the algorithm of~\cite{A11} and~\cite{CK13} is $(4m)^{-1}$-competitive for the more general constraint of $m$-matroids intersection.} and this algorithm can be implemented also under {\CM}. We describe a different $1/4$-competitive algorithm for monotone objectives under {\CM}, and then use additional ideas to prove the first part of Theorem~\ref{th:cardinality_positive}.

It is interesting to note that both algorithms guaranteed by Theorem~\ref{th:cardinality_positive} can be implemented in the streaming model of Chakrabarti and Kale~\cite{CK13}. Thus, we give also the first algorithm for maximizing a general non-negative submodular function subject to a cardinality constraint under this streaming model.

On the negative side, notice that {\CM} generalizes {\UM} (by setting $k = n$). Hence, both hardness results given by Theorem~\ref{th:dicut_negative} extend to {\CM}. The following theorem gives a few additional hardness results for this model.

\begin{theorem} \label{th:cardinality_negative}
No algorithm for {\CM} is ($1/2 + \ee$)-competitive for any constant $\ee > 0$. Moreover, even if the objective function is restricted to be monotone, no randomized (deterministic) algorithm for {\CM} is ($3/4 + \ee$)-competitive (($1/2 + \ee)$-competitive).
\end{theorem}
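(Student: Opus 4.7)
The theorem consists of three separate hardness claims: a $1/2$ bound for general (non-monotone) {\CM} against any algorithm, a $3/4$ bound for monotone {\CM} against randomized algorithms, and a $1/2$ bound for monotone {\CM} against deterministic algorithms. Since the $4/5$ and $0.438$ bounds that Theorem~\ref{th:dicut_negative} transfers from {\UM} to {\CM} are all weaker than $1/2$, none of the three claims follows as a free corollary, and each requires its own adversarial instance; the plan is to handle them in turn.

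For the first claim the plan is a ``symmetric gap'' construction. The ground set has size roughly $2k$ and is randomly partitioned into two halves $A,B$ of size $k$; $f$ is chosen so that $f(A)=f(B)$ is the maximum over $k$-subsets, while any $k$-subset splitting non-trivially between $A$ and $B$ has value at most $\tfrac{1}{2} f(A)$. The key requirement is that $f$ be \emph{invisibly symmetric}: no value-oracle query on revealed subsets should break the $A\leftrightarrow B$ symmetry before the algorithm has been forced to preempt sufficiently many eventual-OPT elements. This is enforced by combining the hidden partition with auxiliary masking elements that smear individual identities, so that the algorithm cannot identify which half is which until it is already committed. Invoking Yao's principle with the uniformly random partition then gives that the expected intersection of the algorithm's final $k$-set with $A$ equals that with $B$, and hence its expected $f$-value is at most $\tfrac{1}{2}\,\mathbb{E}[\mathrm{OPT}]$.

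For the randomized monotone $3/4$ bound the plan is a two-phase Yao construction: phase one reveals a first batch of $k$ elements with total value $k$; with probability $p$ the stream ends (so $\mathrm{OPT}=k$), and with probability $1-p$ a disjoint second batch arrives whose on-its-own value is $M\gg k$ but whose marginal over the first batch is small (realized by a weighted partition-matroid-rank-type function). Preemption being irreversible, the algorithm must decide, after phase one and before knowing which branch will occur, how many elements of the first batch to retain; its expected $f$-value reduces to a one-parameter optimization, and choosing $p$ and $M$ so that the best retention attains exactly $3/4$ of the expected optimum yields the bound. For the deterministic monotone $1/2$ bound the adversary adapts to the (fully predictable) state of the algorithm: having observed the current $k$-set $S$ with $f(S)=v$, it reveals a disjoint block of $k$ new elements whose own value is $2v$ but whose marginal over $S$ is small (via a monotone coverage-type function), so that no sequence of preemptions can raise the algorithm past $v$ while OPT is $2v$.

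The main obstacle in all three parts is engineering the submodular function so that the tight gap is realized while monotonicity (in the monotone cases) and non-negativity are preserved. The non-monotone $1/2$ bound is the most subtle because the value-oracle model is extremely powerful in the preemption setting: one must argue that the construction prevents the algorithm from learning the hidden partition via queries on revealed subsets before it is forced to commit, and this is what drives the need for the masking layer that breaks naive pair-query attacks on the symmetric-gap instance.
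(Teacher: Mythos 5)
Your constructions for the two monotone claims are inconsistent with monotonicity, and they miss the actual source of hardness in this model. For the randomized $3/4$ bound you posit a disjoint second batch $B_2$ with stand-alone value $M\gg k$ but small marginal over a first batch $B_1$ of value $k$; monotonicity alone gives $f(B_2\mid B_1)=f(B_1\cup B_2)-f(B_1)\geq f(B_2)-f(B_1)=M-k$, so no such monotone $f$ exists. The same objection kills the deterministic $1/2$ construction (a block of own value $2v$ cannot have small marginal over a set of value $v$), and even the natural repair --- each new element individually has small marginal --- fails, because with preemption the algorithm can simply swap its way element-by-element into the new block and finish holding it, so no factor-$2$ gap is forced by ``better stuff arriving later.'' The hardness has to come instead from \emph{irrevocable discarding among indistinguishable elements}: the paper reveals $2k$ elements on which $f$ is exactly the cardinality function (so no oracle query carries any information), forcing the algorithm to throw away half of them, and only then reveals a final element $w$ whose $k$ units of value duplicate exactly the value of the $k$ decoys $v_i$, via $f(S)=|S\cap\{u_i\}_{i=1}^k|+\min\{k,\,|S\cap\{v_i\}_{i=1}^k|+k\cdot|S\cap\{w\}|\}$. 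A deterministic algorithm can be fed an order under which it retains only decoys (final value at most $k$ versus $OPT=2k-1$), and a randomized one retains at most $k/2$ good elements in expectation (final value at most $3k/2$), which is precisely where $1/2+\ee$ and $3/4+\ee$ come from; no Yao-style randomized stopping is needed.

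For the general (non-monotone) $1/2$ bound, your plan leaves the essential ingredient --- the ``masking layer'' --- unconstructed, and that is exactly where the difficulty lies: for any cut-like function realizing $f(A)=f(B)$ with balanced $k$-sets worth only half as much, queries on pairs of revealed elements distinguish same-side from cross-side pairs, and since \emph{either} side is optimal the algorithm only needs to learn the partition classes (not which is which) to keep a one-sided set; without an explicit function that hides this, there is no proof. Moreover, the inference ``equal expected intersections with $A$ and $B$ implies expected value at most $\tfrac12 f(A)$'' is invalid because $f$ is not linear: an algorithm that keeps all of $A$ or all of $B$ with probability $1/2$ each has equal expected intersections and full value. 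You would need the stronger fact that the algorithm's choices are independent of the hidden partition, followed by a direct (hypergeometric-type) computation. The paper avoids all of this with an asymmetric construction: $k$ good elements $u_i$, $h=2k\lceil\ee^{-1}\rceil\gg k$ decoys $v_i$, and $f(S)=|S|$ if $w\notin S$ while $f(S)=k+|S\cap\{u_i\}_{i=1}^k|$ otherwise. The pre-$w$ prefix is again information-free, so the algorithm retains in expectation only about $k^2/(k+h)$ good elements, and after $w$ arrives its value is at most $k+k^2/(k+h)$ against $OPT=2k-1$, giving $1/2+\ee$; non-monotonicity enters only through the possibly negative marginal of $w$, not through a symmetry gap of the kind you describe.
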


Notice that polynomial time hardness results of $0.491$ and $1 - 1/e$ for {\CM} and {\CM} with a monotone objective follow from~\cite{GV11} and~\cite{NW78}, respectively. All the results for {\CM} are summarized in Table~\ref{th:cardinality_results}.

\begin{table*}
\inArxiv{\small}
\begin{tabular}{|l|cr|cr|cr|cr|}
\cline{2-9}
\multicolumn{1}{c|}{} & \multicolumn{4}{|c|}{\textbf{Monotone Objective}} & \multicolumn{4}{|c|}{\textbf{General Objective}}\\
\cline{2-9}
\multicolumn{1}{c|}{} & \multicolumn{2}{|c|}{\textbf{Deterministic}} & \multicolumn{2}{|c|}{\textbf{Randomized}} & \multicolumn{2}{|c|}{\textbf{Deterministic}} & \multicolumn{2}{|c|}{\textbf{Randomized}} \\
\hline
\rule[-0.25cm]{0pt}{0.7cm}
Algorithm &	$1/4$																		& \cite{CK13,A11} 								& $1/4$									 	 &  \cite{CK13,A11}
					& \multicolumn{2}{|c|}{$-$}									 																& $\frac{56}{627} \approx 0.0893$ & (\ref{th:cardinality_positive}) \\
\hline
\rule[-0.25cm]{0pt}{0.7cm}
Hardness	& $1/2 + \ee$ 														& (\ref{th:cardinality_negative}) & $3/4 + \ee$ 						 & (\ref{th:cardinality_negative})
				  & $\frac{5 - \sqrt{17}}{2} \approx 0.438$ & (\ref{th:dicut_negative}) 		 	& $1/2 + \ee$ 						 & (\ref{th:cardinality_negative}) \\
\hline
Polynomial & \multirow{2}{*}{$1/2 + \ee$} & \multirow{2}{*}{(\ref{th:cardinality_negative})} & \multirow{2}{*}{$1 - 1/e$} & \multirow{2}{*}{\cite{NW78}} & \multirow{2}{*}{$\frac{5 - \sqrt{17}}{2} \approx 0.438$} & \multirow{2}{*}{(\ref{th:dicut_negative})} & \multirow{2}{*}{$0.491$} & \multirow{2}{*}{\cite{GV11}} \\
time hardness &&&&&&&&\\
\hline
\end{tabular}
\caption{Summary of the results for {\CM}. To the right of each result appears the number of the theorem (or references) proving it.} \label{th:cardinality_results}
\end{table*}

\subsection{Related Work\inConference{.}} \label{ssc:related_work}

The literature on submodular maximization problems is very large, and therefore, we mention below only a few of the most relevant works. The classical result of~\cite{NWF78} states that the simple discrete greedy algorithm is a ($1-1/e$)-approximation algorithm for maximizing a monotone submodular function subject to a cardinality constraint. This result is known to be tight~\cite{NW78}, even in the case where the objective function is a coverage function~\cite{F98}. However, when one considers submodular objectives which are not monotone, less is known. An approximation of $0.309$ was given by~\cite{V13}, which was later improved
to $0.325$~\cite{GV11} using a simulated annealing technique. Extending the continuous greedy algorithm of~\cite{CCPV11} to general non-negative submodular objectives, \cite{FNS11} obtained an improved approximation of $1/e-o(1)$. Finally,~\cite{BFNS14} gave a fast $1/e$-approximation algorithm and a much slower $(1/e + 0.004)$-approximation algorithm, demonstrating that $1/e$ is not the right approximation ratio for the problem. On the hardness side, it is known that no polynomial time algorithm can have an approximation ratio better than $0.491$~\cite{GV11}.

For the problem of maximizing a non-negative submodular function subject to no constraints, the first approximation algorithm was presented by Feige et al.~\cite{FMV11} who gave a $2/5$-approximation. This was improved in~\cite{GV11} and~\cite{FNS11b} to $0.41$ and $0.42$, respectively. Finally,~\cite{BFNS12} described a $1/2$-approximation linear time algorithm, matching the hardness result given by~\cite{FMV11}. Huang and Borodin~\cite{HB13} study an online model of this problem where the algorithm can access both $f$ and $\bar{f}$. This model is powerful enough to implement the randomized $1/2$-approximation algorithm of~\cite{BFNS12}, and therefore, \cite{HB13} consider only deterministic algorithms. Azar et al.~\cite{AGR11} consider the Submodular Max-SAT problem, and provide a $2/3$-approximation algorithm for it that can be implemented also in a natural online model. We are not aware of any additional works on online models of submodular maximization.

For the problem of Max-DiCut, Goemans and Williamson~\cite{GW95} obtained $0.796$-approximation using semi-definite programming. This was improved through a series of works~\cite{FG95,LLZ02,MM01} to $0.874$. On the hardness side, a $(12/13 + \ee)$-approximation algorithm will imply $\texttt{P} = \texttt{NP}$~\cite{H01}. Assuming the Unique Games Conjecture, the best possible approximation for Max-Cut is $0.878$~\cite{KKMO07,MOO05}, and this hardness result holds also for Max-DiCut since the last generalizes Max-Cut.

Bar-Noy and Lampis~\cite{BL12} gave a $(1/3)$-competitive deterministic algorithm for an online model of Max-Cut where every revealed node is accompanied by its input and output degrees. For the case of a directed acyclic graph, they provide an improved deterministic algorithm with a competitive ratio of $2/3^{1.5} \approx 0.385$, which is optimal against an adaptive adversary. Huang and Borodin~\cite{HB13} notice that the $(1/3)$-competitive deterministic algorithm of~\cite{BL12} is in fact identical to the $(1/3)$-approximation deterministic algorithm of~\cite{BFNS12} for unconstrained submodular maximization. Using the same ideas, it is not difficult to show that the $(1/2)$-approximation randomized algorithm of~\cite{BFNS12} implies a $(1/2)$-competitive algorithm in this online model. Finally, Feige and Jozeph~\cite{FJ10} consider oblivious algorithms for Max-DiCut---algorithms in which every node is selected into the cut independently with a probability depending solely on its input and output degrees. They show a $0.483$-approximation oblivious algorithm, and prove that no oblivious algorithm has an approximation ratio of $0.4899$.

Finally, the vast literature on buyback problems considers problems that are similar to our model, but assume a linear objective function. Many of these problems are non-trivial only when preemption has a cost, which is usually assumed to be either constant or linear in the value of the preempted element. The work from this literature most closely related to ours is the work of Babaioff et al.~\cite{BHK08,BHK09} who considered a matroid constraint. For other buyback results see, \eg,~\cite{CFMP09,HKM13,IT02,A11,AK09}.

\paragraph{Paper Organization.} Section~\ref{sc:pre} defines additional notation. Section~\ref{sc:unconstrained} gives our results for {\UM} and {\DM}, except for Theorem~\ref{th:unconstrained_no_preemption} whose proof is deferred to Appendix~\ref{sec:unconstrained_no_preemption}. Finally, Section~\ref{sc:uniform} describes our results for {\CM}.
\section{Preliminaries} \label{sc:pre}

We study the following online variant of the submodular maximization problem. There is an unknown ground set of elements $\NN = \{u_1, u_2, \dotsc, u_n\}$, a non-negative submodular function $f : 2^\NN \to \mathbb{R}^+$ and and a down-monotone\footnote{A collection $\II$ of sets is \emph{down-monotone} if $A \subseteq B \subseteq \NN$ and $B \in \II$ imply $A \in \II$.} collection of feasible sets $\II \subseteq 2^\NN$. The objective of the instance is to find a feasible set maximizing $f$. The elements of $\NN$ are revealed one by one. The algorithm creates $n + 1$ feasible solutions: $S_0, S_1, \dotsc, S_n$ (each $S_i\in\II$). The solution $S_0$ is the empty set $\varnothing$. For every $1 \leq i \leq n$, $S_i$ is the solution selected by the algorithm immediately after element $u_i$ is revealed and the algorithm can choose it to be any feasible subset of $S_{i - 1} + u_i$\footnote{Given a set $S$ and an element $u$, we use $S + u $ and $S - u$ as shorthands for $S \cup \{u\}$ and $S \setminus \{u\}$, respectively.}. It is important to note that the algorithm does not know $n$ (the size of the ground set) in advance, hence, the input might end after every element from the algorithm's point of view.
\section{{\SUM} and {\DM}} \label{sc:unconstrained}

Our positive results for {\UM} and {\DM} (\ie, Theorems~\ref{th:unconsrained_positive} and~\ref{th:dicut_positive}) are proved in Section~\ref{ssc:unconstrained_positive}. The negative result for {\DM} (Theorem~\ref{th:dicut_negative}), which applies also to {\UM}, is proved in Section~\ref{ssc:unconstrained_negative}.

\subsection{Algorithms for {\UM} and {\DM}\inConference{.}} \label{ssc:unconstrained_positive}

Before describing our algorithm for {\UM}, we need some notation. For two vectors $x,y \in [0, 1]^\NN$, we use $x \vee y$ and $x \wedge y$ to denote the coordinate-wise maximum and minimum, respectively, of $x$ and $y$ (formally, $(x \vee y)_u = \max\{x_u, y_u\}$ and $(x \wedge y)_u = \min\{x_u, y_u\}$). We abuse notation both in the description of the algorithm and in its analysis, and unify a set with its characteristic vector and an element with the singleton set containing it. The \emph{multilinear} extension of a set function $f : 2^\NN \to \mathbb{R}^+$ is a function $F : [0, 1]^\NN \to \mathbb{R}^+$ defined by $F(x) = \mathbb{E}[f(\RSet(x))]$, where $\RSet(x)$ is a random set containing every element $u \in \NN$ with probability $x_u$, independently. The multilinear extension is an important tool used in many previous works on submodular maximization (see, \eg,~\cite{CCPV11,FNS11,V13}). We denote by $\partial_u F(x)$ the derivative of $F$ at point $x$ with respect to the coordinate corresponding to $u$. It can be checked that $F$ is a multilinear function, and thus:
\[
	\partial_u F(x)
	=
	F(x \vee u) - F(x \wedge (\NN - u))
	\enspace.
\]

Consider Algorithm~\ref{alg:IncreaseFromZero}. Recall that $S_i$ is the solution that the algorithm produces after seeing element $u_i$.

\begin{algorithm}[h!t]
\caption{\textsf{Marginal Choice}} \label{alg:IncreaseFromZero}
\ForEach{element $u_i$ revealed}
{
	Choose a uniformly random threshold $\theta_i \in [0, 1]$.\\
	Let $\NN_i \gets \{u_1, u_2, \dotsc, u_i\}$.\\
	Let $S_i \gets \{u_j \in \NN_i \mid \partial_{u_j} F(\theta_j \cdot \NN_i) \geq 0\}$.
}
\end{algorithm}

Our first objective is to show that Algorithm~\ref{alg:IncreaseFromZero} is an online algorithm \inArxiv{according to}\inConference{of} {\UM}.

\begin{lemma}
For every $1 \leq i \leq n$, $S_i \subseteq S_{i - 1} + u_i$.
\end{lemma}
\begin{proof}
By definition, $S_i$ contains only elements of $\NN_i$. Fix an element $u_j \in \NN_i - u_i$. Then:
\begin{align*}
	u_j \in S_i
	\Rightarrow{} &
	\partial_{u_j} F(\theta_j \cdot \NN_i) \geq 0 \inConference{\\}
	\Rightarrow \inConference{{} &}
	\partial_{u_j} F(\theta_j \cdot \NN_{i - 1}) \geq 0
	\Rightarrow
	u_j \in S_{i - 1}
	\enspace,
\end{align*}
where the second derivation follows from submodularity.
\end{proof}

Next, we bound the competitive ratio of Algorithm~\ref{alg:IncreaseFromZero}. Fix an element $u_i \in \NN$. By submodularity, $g(z) = \partial_{u_i} F(z \cdot \NN)$ is a continuous non-increasing function of $z$. Hence, by the intermediate value theorem, one of the following must hold: $g(z)$ is always positive in the range $[0, 1]$, $g(z)$ is always negative in the range $[0, 1]$ or $g(z)$ has at least one root $z_0 \in [0, 1]$. In the last case, the set $I_0 \subseteq [0, 1]$ of the roots of $g(z)$ is non-empty. Moreover, by the monotonicity and continuity of $g(z)$, $I_0$ is a closed range. Using these observations, we define a vector $y \in [0, 1]^\NN$ as follows:
\[
	y_{u_i}
	=
	\begin{cases}
		0 & \text{if $\partial_{u_i} F(0 \cdot \NN) < 0$} \enspace, \\
		1 & \text{if $\partial_{u_i} F(1 \cdot \NN) > 0$} \enspace, \\
		\max I_0 & \text{otherwise} \enspace.
	\end{cases}
\]

\noindent \textbf{Remark:} Notice that the case $\partial_{u_i} F(0 \cdot \NN) < 0$ in the definition of $y$ can happen only when $f(\varnothing) > 0$.

\begin{observation} \label{obs:y_alg_equivalent}
Every element $u_i \in \NN$ belongs to $S_n$ with probability $y_{u_i}$, independently. Hence, $\mathbb{E}[f(S_n)] = F(y)$.
\end{observation}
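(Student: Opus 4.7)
The plan is to unpack Algorithm~\ref{alg:IncreaseFromZero}'s final output $S_n$ in terms of the thresholds $\theta_j$ and then match each element's inclusion probability to the definition of $y_{u_j}$. The crucial observation is that once all elements have been revealed we have $\NN_n = \NN$, so the condition defining $S_n$ reads
\[
    u_j \in S_n \iff \partial_{u_j} F(\theta_j \cdot \NN) \geq 0 \iff g_j(\theta_j) \geq 0,
\]
where $g_j(z) = \partial_{u_j} F(z \cdot \NN)$ is exactly the function discussed just above the observation.

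The independence claim is immediate from this: the indicator $\mathbf{1}[u_j \in S_n]$ depends only on the single threshold $\theta_j$, and the $\theta_j$'s are independent across $j$ by construction. So I would state this as the first step.

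For the marginal probability $\Pr[u_j \in S_n]$, I would proceed by splitting into the same three cases used to define $y_{u_j}$. If $\partial_{u_j} F(0 \cdot \NN) < 0$, then $g_j(z) \leq g_j(0) < 0$ for every $z \in [0,1]$ by the non-increasing property of $g_j$ (which follows from submodularity), so $u_j \notin S_n$ with probability $1 = 1 - y_{u_j}$. Symmetrically, if $\partial_{u_j} F(1 \cdot \NN) > 0$, then $g_j(z) > 0$ throughout $[0,1]$ and $u_j \in S_n$ with probability $1 = y_{u_j}$. In the remaining case $g_j(0) \geq 0 \geq g_j(1)$, so by continuity the set of roots $I_0$ is a nonempty closed interval; writing $b = \max I_0$, the monotonicity of $g_j$ gives $g_j(\theta_j) \geq 0 \iff \theta_j \leq b$, and since $\theta_j$ is uniform on $[0,1]$ this event has probability $b = y_{u_j}$.

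Combining the two steps shows that $S_n$ has exactly the distribution of $\RSet(y)$, so by the definition of the multilinear extension
\[
    \mathbb{E}[f(S_n)] = \mathbb{E}[f(\RSet(y))] = F(y).
\]
There is no real obstacle here; the only small point requiring care is to verify in the third case that the relevant event is $\{\theta_j \leq \max I_0\}$ rather than $\{\theta_j < \max I_0\}$ or $\{\theta_j \leq \min I_0\}$, which is handled by the non-increasing (not strictly decreasing) nature of $g_j$ together with the convention $g_j(\theta_j) \geq 0$ (non-strict) in the inclusion rule for $S_n$.
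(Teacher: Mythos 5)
Your proof is correct and follows essentially the same route as the paper: both reduce membership of $u_j$ in $S_n$ to the event $\theta_j \leq y_{u_j}$, note the thresholds are independent and uniform, and conclude $\mathbb{E}[f(S_n)] = F(y)$ from the definition of the multilinear extension. The only difference is that you spell out, via the three-case analysis and the monotonicity of $g_j$, the equivalence $\partial_{u_j} F(\theta_j \cdot \NN) \geq 0 \iff \theta_j \leq y_{u_j}$, which the paper asserts without detail.
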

\begin{proof}
An element $u_i \in \NN$ belongs to $S_n$ if and only if $\partial_{u_i} F(\theta_i \cdot \NN) \geq 0$, which is equivalent to the condition $\theta_i \leq y_{u_i}$. Clearly, the last condition happens with probability $y_{u_i}$, and is independent for different elements.
\end{proof}

The last observation implies that analyzing Algorithm~\ref{alg:IncreaseFromZero} is equivalent to lower bounding $F(y)$.

\begin{lemma} \label{le:all_worse_than_algorithm}
For every $\lambda \in [0, 1]$, $F(y \wedge (\lambda \cdot \NN)) \geq F(\lambda \cdot \NN)$.
\end{lemma}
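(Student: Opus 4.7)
The plan is to move from $\lambda \cdot \NN$ to $y \wedge (\lambda \cdot \NN)$ one coordinate at a time, showing each step does not decrease $F$. Let $A = \{u_i \in \NN : y_{u_i} < \lambda\}$ be the set of elements on which the two vectors differ (elements with $y_{u_i} \geq \lambda$ have coordinate $\lambda$ in both vectors and can be ignored). The key idea is that, for each $u_i \in A$, the definition of $y_{u_i}$ together with submodularity should force $\partial_{u_i} F$ to be non-positive at the relevant intermediate points, so decreasing the coordinate of $u_i$ from $\lambda$ to $y_{u_i}$ can only increase $F$ (by multilinearity).

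The crucial point is that the submodularity comparison must be made against $\partial_{u_i} F(y_{u_i} \cdot \NN)$, which is $\leq 0$ by the definition of $y$ (in the first case of the definition it is strictly negative, and in the third case it equals $0$ by continuity; the second case $y_{u_i}=1$ is incompatible with $u_i \in A$). To apply submodularity correctly, I would enumerate the elements of $A$ as $u_{i_1}, u_{i_2}, \dotsc, u_{i_k}$ in \emph{decreasing} order of the values $y_{u_{i_j}}$, set $v^{(0)} = \lambda \cdot \NN$, and obtain $v^{(j)}$ from $v^{(j-1)}$ by replacing the $u_{i_j}$-coordinate with $y_{u_{i_j}}$, so that $v^{(k)} = y \wedge (\lambda \cdot \NN)$.

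For each $j$, multilinearity gives
\[
    F(v^{(j)}) - F(v^{(j-1)})
    =
    -(\lambda - y_{u_{i_j}}) \cdot \partial_{u_{i_j}} F(v^{(j-1)}),
\]
and since $\lambda - y_{u_{i_j}} > 0$ it suffices to show $\partial_{u_{i_j}} F(v^{(j-1)}) \leq 0$. Every coordinate of $v^{(j-1)}$ other than $u_{i_j}$ is at least $y_{u_{i_j}}$: the already-modified coordinates equal $y_{u_{i_l}}$ for $l < j$, which is $\geq y_{u_{i_j}}$ by the chosen order, while every other coordinate still equals $\lambda > y_{u_{i_j}}$. Hence submodularity (which makes $\partial_{u_{i_j}} F$ non-increasing in every other coordinate) yields $\partial_{u_{i_j}} F(v^{(j-1)}) \leq \partial_{u_{i_j}} F(y_{u_{i_j}} \cdot \NN) \leq 0$, completing the telescoping argument.

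The main obstacle is precisely picking the right order in which to lower the coordinates: in an arbitrary order, the already-modified coordinates could be smaller than $y_{u_{i_j}}$, and the submodularity comparison with $y_{u_{i_j}} \cdot \NN$ would go the wrong way. Sorting $A$ by decreasing $y$-value is what makes the argument clean; the rest is a routine application of multilinearity and the defining property of $y$.
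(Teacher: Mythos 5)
Your proof is correct, but it takes a genuinely different route from the paper's. You perform a discrete, coordinate-by-coordinate interpolation from $\lambda \cdot \NN$ down to $y \wedge (\lambda \cdot \NN)$, lowering the coordinates of $\{u : y_u < \lambda\}$ in decreasing order of $y$-value and using multilinearity plus the fact that cross-derivatives of $F$ are non-positive to show each step cannot decrease $F$; the ordering is exactly the right device, since it guarantees that at the moment coordinate $u_{i_j}$ is lowered all other coordinates are at least $y_{u_{i_j}}$, so submodularity gives $\partial_{u_{i_j}} F(v^{(j-1)}) \leq \partial_{u_{i_j}} F(y_{u_{i_j}} \cdot \NN) \leq 0$, the last inequality following from the case analysis in the definition of $y$ (your handling of the three cases, including ruling out $y_{u}=1$, is right). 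The paper instead argues continuously: it writes both $F(y \wedge (\lambda \cdot \NN))$ and $F(\lambda \cdot \NN)$ as $f(\varnothing)$ plus an integral over $z \in [0,\lambda]$ via the chain rule, and compares the integrands using submodularity ($y \wedge (z \cdot \NN) \leq z \cdot \NN$) together with $\partial_u F(z \cdot \NN) \leq 0$ for $z > y_u$. Your argument is more elementary (no integration along a path) and makes the role of the ordering explicit; the paper's integral formulation has the advantage that the same differential setup is reused verbatim in the proof of Corollary~\ref{cor:unconstrained_ratio}, where a differential inequality in $z$ is actually solved, so the continuous viewpoint is not redundant in the paper even though for this lemma alone your discrete argument suffices.
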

\begin{proof}
Observe that:
\begin{align*}
	F(y \wedge (\lambda \cdot \NN))
	={} &
	f(\varnothing) + \int_0^\lambda \frac{d F(y \wedge (z \cdot \NN))}{d z} dz \inConference{\\}
	=\inConference{{} &}
	f(\varnothing) + \int_0^\lambda \sum_{\substack{u \in \NN \\ z \leq y_u}} \partial_u F(y \wedge (z \cdot \NN)) dz
	\enspace,
\end{align*}
where the second equality is due to the chain rule. By submoduliry and the observation that $\partial_u F(z \cdot \NN)$ is non-positive for every $z > y_u$, we get:
\begin{align*}
	\sum_{\substack{u \in \NN \\ z \leq y_u}} \partial_u F(y \wedge (z \cdot \NN))
	\geq{} &
	\sum_{\substack{u \in \NN \\ z \leq y_u}} \partial_u F(z \cdot \NN)\inConference{\\}
	\geq\inConference{{} &}
	\sum_{u \in \NN} \partial_u F(z \cdot \NN)
	\enspace.
\end{align*}
Combining the above equality and inequality, and using the chain rule again, gives:
\begin{align*}
	F(y \wedge (\lambda \cdot \NN))\inConference{&}
	\geq\inArxiv{{} &}
	f(\varnothing) + \int_0^\lambda \left[\sum_{u \in \NN} \partial_u F(z \cdot \NN)\right] dz\inConference{\\}
	=\inConference{{} &}
	f(\varnothing) + \int_0^\lambda \frac{d F(z \cdot \NN)}{d z} dz
	=
	F(\lambda \cdot \NN)
	\enspace.
	\qedhere
\end{align*}
\end{proof}

We also need a lemma proved by~\cite{FNS11}.

\begin{lemma}[Lemma~III.5 of~\cite{FNS11}] \label{le:y_max_bound}
Consider a vector $x \in [0, 1]^\NN$. Assuming $x_u \leq a$ for every $u \in \NN$, then for every set $S \subseteq \NN$, $F(x \vee S) \geq (1 - a)f(S)$.
\end{lemma}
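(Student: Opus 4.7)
The plan is to write $x\vee S$ (identifying $S$ with its characteristic vector) as a convex combination of $S$ itself and an auxiliary vector $\tilde y\in[0,1]^\NN$ satisfying $\tilde y\geq S$ coordinate-wise, and then invoke concavity of the multilinear extension $F$ along non-negative directions. The hypothesis $x_u\leq a$ will be used in exactly one place: to guarantee that $\tilde y$ stays in $[0,1]^\NN$.

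Concretely, assume $a>0$ (otherwise $x=0$, and $F(x\vee S)=f(S)$, so the bound is immediate). I would define $\tilde y_u=1$ for $u\in S$ and $\tilde y_u=x_u/a$ for $u\notin S$; the assumption $x_u\leq a$ is precisely what keeps $\tilde y\in[0,1]^\NN$. A coordinate check then gives $(1-a)\,S+a\,\tilde y=x\vee S$, and $\tilde y-S\geq 0$ by construction.

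The key step is then to invoke the standard fact that for a submodular $f$, the multilinear extension $F$ is concave along every coordinate-wise non-negative direction. This follows because the mixed second partials satisfy $\partial^2 F/\partial x_u\,\partial x_w\leq 0$ for $u\neq w$ (and vanish for $u=w$ by multilinearity), which is a direct consequence of submodularity applied to the independent random rounding. Applied to the segment from $S$ to $\tilde y$, concavity gives
\[
F(x\vee S)\;=\;F\bigl((1-a)\,S+a\,\tilde y\bigr)\;\geq\;(1-a)\,F(S)+a\,F(\tilde y)\;=\;(1-a)\,f(S)+a\,F(\tilde y),
\]
and discarding the last term (using $f\geq 0$, hence $F\geq 0$) yields the desired bound.

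The only real obstacle is noticing the right convex-combination representation: once $\tilde y$ is written down, both the choice of the weight $a$ and the role of the hypothesis $x_u\leq a$ are forced. The remaining inequality is a one-line application of a standard property of multilinear extensions, so no intricate multi-point submodularity calculation is needed.
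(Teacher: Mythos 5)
Your proof is correct. The decomposition checks out coordinate-wise: for $u \in S$ both sides equal $1$, and for $u \notin S$ you get $(1-a)\cdot 0 + a\cdot(x_u/a) = x_u$, so indeed $x \vee S = (1-a)\,S + a\,\tilde y$ with $\tilde y - S \geq 0$ and $\tilde y \in [0,1]^\NN$ thanks to $x_u \leq a$; the degenerate case $a=0$ is handled separately, and the final step only needs $F(\tilde y) \geq 0$, which holds since $f$ is non-negative. The concavity fact you invoke (that $F$ is concave along any coordinate-wise non-negative direction, because $\partial^2 F/\partial x_u \partial x_w \leq 0$ for $u \neq w$ and the pure second partials vanish by multilinearity) is a standard and correctly stated property of multilinear extensions of submodular functions.

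Note, however, that the paper itself does not prove this lemma; it imports it from~\cite{FNS11}, and the argument there (and the usual one in the literature) is probabilistic rather than analytic: one observes that $F(x \vee S) = \mathbb{E}[f(S \cup R')]$ where $R'$ contains each element outside $S$ with probability $x_u \leq a$, and then applies the Feige--Mirrokni--Vondr\'{a}k lemma stating that for a non-negative submodular $g$ and a random set in which every element appears with probability at most $a$ (not necessarily independently), $\mathbb{E}[g(R')] \geq (1-a)\,g(\varnothing)$, applied to $g(T) = f(S \cup T)$. That route is a close cousin of Lemma~\ref{le:sampling_lowerbound} already quoted in this paper and requires no calculus, and it is robust to correlated roundings; your route trades that for a clean convex-combination identity plus the directional-concavity property of $F$, which is arguably more geometric and makes transparent exactly where the hypothesis $x_u \leq a$ enters. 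Both are complete proofs of the statement.
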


The following corollary follows from the last two lemmata.

\begin{corollary} \label{cor:unconstrained_ratio}
$F(y) \geq e^{-1} \cdot F(OPT)$, where $OPT$ is the subset of $\NN$ maximizing $f(OPT)$.
\end{corollary}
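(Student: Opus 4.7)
The plan is to introduce the single-variable function $A(\lambda) := F(y \wedge (\lambda \cdot \NN))$ on $\lambda \in [0,1]$ (so that $A(0) = f(\varnothing)$ and $A(1) = F(y)$), derive a first-order differential inequality of the form $A'(\lambda) + A(\lambda)/(1-\lambda) \geq F(OPT)$, and then integrate and optimize $\lambda$ to recover the factor $1/e$.

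First I would apply Lemma~\ref{le:y_max_bound} at the uniform point $x = \lambda \cdot \NN$, which gives $F(\lambda \cdot \NN \vee OPT) \geq (1-\lambda)\, F(OPT)$. Then raising the coordinates indexed by $OPT$ from $\lambda$ to $1$ one at a time, multilinearity together with submodularity (used to dominate every intermediate partial derivative by $\partial_u F(\lambda \cdot \NN)$) yields the matching upper bound $F(\lambda \cdot \NN \vee OPT) - F(\lambda \cdot \NN) \leq (1-\lambda) \sum_{u \in OPT} \partial_u F(\lambda \cdot \NN)$. Combining the two and rearranging produces $\sum_{u \in OPT} \partial_u F(\lambda \cdot \NN) \geq F(OPT) - F(\lambda \cdot \NN)/(1-\lambda)$.

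Next I would analyze $A'(\lambda)$ by the chain rule, obtaining $A'(\lambda) = \sum_{u : y_u > \lambda} \partial_u F(y \wedge (\lambda \cdot \NN))$. Submodularity (since $y \wedge (\lambda \cdot \NN) \leq \lambda \cdot \NN$ coordinate-wise) makes each term at least $\partial_u F(\lambda \cdot \NN)$, and for $u$ with $y_u > \lambda$ the defining property of $y$ forces $\partial_u F(\lambda \cdot \NN) \geq 0$. Discarding the non-negative terms indexed by $u \notin OPT$ with $y_u > \lambda$, and then re-including the non-positive terms indexed by $u \in OPT$ with $y_u \leq \lambda$, only decreases the sum, yielding $A'(\lambda) \geq \sum_{u \in OPT} \partial_u F(\lambda \cdot \NN)$. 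Chaining with the bound from the previous paragraph and invoking Lemma~\ref{le:all_worse_than_algorithm} to replace $F(\lambda \cdot \NN)$ by the larger quantity $A(\lambda)$ produces the desired inequality $A'(\lambda) \geq F(OPT) - A(\lambda)/(1-\lambda)$.

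Finally, I would multiply by the integrating factor $1/(1-\lambda)$, which rewrites the inequality as $[A(\lambda)/(1-\lambda)]' \geq F(OPT)/(1-\lambda)$. Integrating from $0$ to $\lambda$ and using $A(0) = f(\varnothing) \geq 0$ gives $A(\lambda) \geq -(1-\lambda)\ln(1-\lambda)\, F(OPT)$, which attains its maximum $F(OPT)/e$ at $\lambda = 1 - 1/e$. Since the same derivative computation also shows $A$ is non-decreasing (every summand in $A'$ was shown to be non-negative), I conclude $F(y) = A(1) \geq A(1 - 1/e) \geq F(OPT)/e$. The main obstacle I foresee is the sign bookkeeping in the third paragraph: obtaining $A'(\lambda) \geq \sum_{u \in OPT} \partial_u F(\lambda \cdot \NN)$ requires carefully partitioning the sum by whether $u \in OPT$ and whether $y_u > \lambda$, because submodularity and the defining property of $y$ enter at different steps and give controls of opposite signs.
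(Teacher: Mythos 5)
Your proposal is correct and follows essentially the same route as the paper: the same function $A(\lambda) = F(y \wedge (\lambda \cdot \NN))$, the same chain-rule/submodularity/sign argument (using the definition of $y$) to reduce to $\sum_{u \in OPT} \partial_u F(\lambda \cdot \NN)$, the same use of Lemmata~\ref{le:all_worse_than_algorithm} and~\ref{le:y_max_bound} to obtain the differential inequality, and the same integration and evaluation at $\lambda = 1 - e^{-1}$ together with monotonicity of $A$. The only difference is presentational (you make the integrating-factor step explicit where the paper just states the solution of the differential inequality), so no changes are needed.
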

\begin{proof}
Fix two values $0 \leq a \leq b \leq 1$. By the chain rule:
\begin{align} \label{eq:integral_form}
	\inConference{&}F(y \wedge (b \cdot \NN)) -  F(y \wedge (a \cdot \NN))\inConference{\\}
	={} &
	\int_a^b \frac{d F(y \wedge (z \cdot \NN))}{dz} dz \inConference{\nonumber} \\
	={} &
	\int_a^b \sum_{\substack{u \in \NN \\ z \leq y_u}} \partial_u F(y \wedge (z \cdot \NN)) dz\inConference{\nonumber \\}
	\geq\inConference{{} &}
	\int_a^b \sum_{\substack{u \in \NN \\ z \leq y_u}} \partial_u F(z \cdot \NN) dz \nonumber
	\enspace,
\end{align}
where the inequality follows from submodularity. We use two ways to lower bound the rightmost hand side of the above inequality. First, observe that $\partial_u F(z \cdot \NN) \geq 0$ whenever $z < y_u$. Thus, $F(y \wedge (z \cdot \NN))$ is a non-decreasing function of $z$. Additionally, $\partial_u F(z \cdot \NN) \leq 0$ whenever $z > y_u$ and $\partial_u F(z \cdot \NN) \geq 0$ whenever $z < y_u$. This allows us to derive the second lower bound, which holds whenever $z$ is not equal to any coordinate of $y$ (\ie, for every value of $z$ except for a finite set of values):
\begin{align*}
	\sum_{\substack{u \in \NN \\ z \leq y_u}} \partial_u F(z \cdot \NN)\inConference{&}
	\geq\inArxiv{{} &}
	\sum_{u \in OPT} \partial_u F(z \cdot \NN)\inConference{\\}
	=\inConference{{} &}
	\frac{\sum_{u \in OPT} [F(u \vee (z \cdot \NN)) - F(z \cdot \NN)]}{1 - z}\\
	\geq{} &
	\frac{F(OPT \vee (z \cdot \NN)) - F(z \cdot \NN)}{1 - z}\inConference{\\}
	\geq\inConference{{} &}
	f(OPT) - \frac{F(y \wedge (z \cdot \NN))}{1 - z}
	\enspace,
\end{align*}
where the last inequality follows from Lemmata~\ref{le:all_worse_than_algorithm} and~\ref{le:y_max_bound}. Plugging this lower bound and $a = 0$ into~\eqref{eq:integral_form} gives:
\begin{align*}
	&
	F(y \wedge (b \cdot \NN)) \inConference{\\}
	\geq\inConference{{} &}
	f(\varnothing) + \int_0^b \left[f(OPT) - \frac{F(y \wedge (z \cdot \NN))}{1 - z}\right] dz
	\enspace.
\end{align*}
The solution of this differential equation is:
\[
	F(y \wedge (b \cdot \NN))
	\geq
	-(1 - b) \cdot \ln (1 - b) \cdot f(OPT)
	\enspace.
\]
The rightmost hand side of the last inequality is maximized for $b = 1 - e^{-1}$. Recalling that $F(y \wedge (z \cdot \NN))$ is a non-decreasing function of $z$, we get:
\begin{align*}
	F(y)
	={} &
	F(y \wedge (1 \cdot \NN))\inConference{\\}
	\geq\inConference{{} &}
	F(y \wedge ((1 - e^{-1}) \cdot \NN))
	\geq
	e^{-1} \cdot f(OPT)
	\enspace.
	\qedhere
\end{align*}
\end{proof}

The last corollary completes the proof of the first part of Theorem~\ref{th:unconsrained_positive}. A na\"{\i}ve implementation of Algorithm~\ref{alg:IncreaseFromZero} requires an exponential time to evaluate $F$. However, for every constant $c > 0$ it is possible to approximate, in polynomial time, $\partial_{u_j} F(\theta_j \cdot \NN_i)$ with probability $1 - c \ee i^{-2}$ up to an additive error of $c\ee i^{-2} \cdot f(OPT)$ using sampling (see~\cite{CCPV11} for an example of a similar estimate).\footnote{It is not possible to decrease the error to something that depends on $n$ because $n$ is unknown to the algorithm when it calculates these estimations.} For a small enough $c$, such an approximation is good enough to affect the competitive ratio of the algorithm by only an $\ee$. To keep the algorithm within {\UM}, the samples used to approximate $\partial_{u_j} F(\theta_j \cdot \NN_i)$ for different values of $i$ need to be correlated to guarantee that the approximation is a decreasing function of $i$. \inConference{We omit details from this extended abstract.}\inArxiv{For more details, see Appendix~\ref{app:unconstrained_polynomial}.}

%
%

The rest of this section considers {\DM} and is devoted for proving Theorem~\ref{th:dicut_positive}. Recall that in this model $f$ is the cut function of some weighted directed graph $G = (V, A)$. Let $\Win(u)$ ($\Wout(u)$) denote the total weight of the arcs entering (leaving) element $u$. Throughout this section we assume there are no arcs of $G$ leaving nodes of $V \setminus \NN$. Removing such arcs does not affect the value of any solution, and thus, our results hold also without the assumption. Feige and Jozeph~\cite{FJ13} proved the following theorem (we rephrased it in our notation).
\begin{theorem} \label{th:oblivious}
There exists a non-decreasing function $h : [0, 1] \to [0, 1]$ such that the vector $y \in [0, 1]^\NN$ defined by:
\[
	y_u
	=
	h\left( \frac{\Wout(u)}{\Win(u) + \Wout(u)} \right)
\]
obeys $F(y) \geq 0.483 \cdot f(OPT)$. Moreover, $h$ is independent of $G$ and can be computed in constant time (assuming numbers comparison takes constant time).
\end{theorem}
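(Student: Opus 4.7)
The plan is to prove Theorem~\ref{th:oblivious} by a factor-revealing argument. For any non-decreasing function $h : [0,1] \to [0,1]$, consider the random set $R$ which includes each node $u$ independently with probability $h(\rho_u)$, where $\rho_u = \Wout(u)/(\Win(u) + \Wout(u))$. Then
\[
F(y) = \mathbb{E}[f(R)] = \sum_{(u,v) \in A} w_{uv} \cdot h(\rho_u)(1 - h(\rho_v)),
\]
while $f(OPT) = \sum_{(u,v) : u \in OPT, v \notin OPT} w_{uv}$. The goal is to exhibit $h$ such that for every weighted directed graph $G$, the ratio $F(y)/f(OPT)$ is at least $0.483$.

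The first step is to reduce the worst-case instance to a canonical form. I would group the nodes by the pair (value of $\rho_u$, whether $u \in OPT$) and argue, by a splitting-and-averaging argument on arc weights, that the worst case is attained on ``symmetric'' instances, where all arcs between two groups carry identical total weight. Crucially, this averaging preserves every node's $\rho$-value because the ratio $\Wout(u)/(\Win(u)+\Wout(u))$ is preserved within each group, so the oblivious algorithm still assigns the same probabilities. Consequently, the worst-case instance is described by a measure $\mu$ on pairs (group, group) that records the arc weight between each pair of $(\rho,\text{OPT-side})$ classes.

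Next I would discretize the $\rho$-axis to a grid $\{0, 1/N, 2/N, \dots, 1\}$ for large $N$ and parameterize $h$ by its values on this grid (interpolated monotonically). The problem now splits into an outer choice of $h(0), h(1/N), \dots, h(1)$ (non-decreasing, in $[0,1]$), and an inner adversary's choice of arc-weight measure $\mu$ between discretized groups, subject to the consistency constraint that the induced per-class $\rho$-values match the grid values. This gives a finite LP (after fixing $h$) whose optimum approximates the continuous worst case up to an $O(1/N)$ error. I would then take $h$ to be the optimizer of the max-min LP for $N$ large enough; its worst-case ratio will be $\geq 0.483$. Since $h$ is a finite table of values together with a monotone interpolation rule, it is plainly independent of $G$ and evaluable in constant time given $\rho$.

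The main obstacle is twofold. First, one must justify that the adversary's worst-case instance really reduces to the symmetric, discretized form without losing more than $o_N(1)$ in the ratio; this is a compactness-plus-continuity step on the space of (weighted) directed graphs together with a monotone convergence argument on $h$. Second, one must actually certify the numerical bound $0.483$, which requires either solving the LP to provable precision or (as in Feige and Jozeph~\cite{FJ10}) writing down an explicit piecewise function $h$ together with a dual witness that establishes $F(y) \geq 0.483 \cdot f(OPT)$ for every instance in the reduced family. I would follow the latter route, since exhibiting explicit primal and dual solutions both yields the function $h$ constructively and gives the cleanest verification of the competitive ratio.
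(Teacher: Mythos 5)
The paper does not prove this statement at all: it is imported verbatim from Feige and Jozeph (cited as~\cite{FJ13} at the point of use and as~\cite{FJ10} in the related-work discussion), and the paper's own contribution is only to observe that the resulting oblivious selection rule can be simulated online (Algorithm~\ref{alg:DegreesRatio}). So there is no in-paper proof to compare against; what you have written is a reconstruction of the strategy of the cited work, and at the level of strategy it is the right one --- express $F(y)$ arc-by-arc as $\sum w_{uv}\,h(\rho_u)(1-h(\rho_v))$, reduce the adversary to a canonical family via symmetrization that preserves the degree ratios, discretize, and solve a max-min (factor-revealing) program over non-decreasing step functions $h$.

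As a proof, however, the proposal has a genuine gap: the entire content of the theorem is quantitative, and the two steps you flag as ``the main obstacle'' --- (i) a rigorous reduction to the symmetric/discretized family with only $o_N(1)$ loss, and (ii) the exhibition of an explicit $h$ together with a dual certificate showing the ratio is at least $0.483$ on that family --- are precisely where all the work lies. Without an explicit $h$ and a verified bound, nothing separates $0.483$ from, say, $0.3$ or $0.5$ (and indeed Feige and Jozeph also prove an upper bound of $0.4899$ for oblivious algorithms, so the constant is delicate). One further point worth making explicit if you carry this out: the monotonicity of $h$ is not merely a convenient restriction of the search space --- it is essential for the online implementation in this paper, since Algorithm~\ref{alg:DegreesRatio} needs $y_{u_j}(i)$ to be non-increasing in $i$ as more in-arcs are revealed; so your outer optimization must impose it as a hard constraint and you must check the certified optimum is attained by a monotone $h$.
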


Notice that in the last theorem $y_u$ is undefined when $\Win(u) = \Wout(u) = 0$. The theorem holds regardless of the value used for $y_u$ in this case because such an element $u$ is isolated. We can now present our algorithm for {\DM} which is depicted as Algorithm~\ref{alg:DegreesRatio}.

\begin{algorithm}[h!t]
\caption{\textsf{Degrees Ratio Choice}} \label{alg:DegreesRatio}
\DontPrintSemicolon
\ForEach{element $u_i$ revealed}
{
	Choose a uniformly random threshold $\theta_i \in [0, 1]$.\\
	Initialize $S_i \leftarrow \varnothing$.\\
	\For{$j$ = $1$ \KwTo $i$}
	{
		Let $\Win(u_j, i)$ denote the total weight of arcs from \emph{revealed} elements to $u_j$.\\
		\lIf{$\Win(u_j, i) + \Wout(u_j) > 0$}
		{
			Let $y_{u_j}(i) \gets h\left( \frac{\Wout(u_j)}{\Win(u_j, i) + \Wout(u_j)} \right)$.
		}
		\lElse
		{
			Let $y_{u_j}(i) \gets 1$.
		}
		
		\BlankLine
		
		\lIf{$y_{u_j}(i) \geq \theta_j$}
		{
			Add $u_j$ to $S_i$.
		}
	}
}
\end{algorithm}

\begin{observation}
The competitive ratio of Algorithm~\ref{alg:DegreesRatio} is at least $0.483$.
\end{observation}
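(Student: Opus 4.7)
The plan is to observe that Algorithm~\ref{alg:DegreesRatio} is simply an online implementation of the oblivious algorithm of Feige and Jozeph guaranteed by Theorem~\ref{th:oblivious}, and then invoke that theorem. The two things I need to verify are (a) that the algorithm is a legal online algorithm for {\DM}, i.e.\ that $S_i \subseteq S_{i-1} + u_i$, and (b) that the distribution of the final set $S_n$ matches the distribution induced by the Feige--Jozeph vector.

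For (a), I would argue that $y_{u_j}(i)$ is non-increasing in $i$ for every fixed $u_j$. Indeed, the quantity $\Win(u_j,i)$ only accumulates weight as more elements are revealed, so it is non-decreasing in $i$; consequently $\Wout(u_j)/(\Win(u_j,i) + \Wout(u_j))$ is non-increasing, and since $h$ is non-decreasing (by Theorem~\ref{th:oblivious}), so is $y_{u_j}(i)$. (The case $\Win(u_j,i) + \Wout(u_j) = 0$ only shrinks as $i$ grows, so the transition from $y_{u_j}(i)=1$ to a value of $h(\cdot)$ also only decreases $y_{u_j}$.) Thus $u_j \in S_i$ implies $y_{u_j}(i-1)\geq y_{u_j}(i) \geq \theta_j$, hence $u_j \in S_{i-1}$, so that $S_i \subseteq S_{i-1} + u_i$.

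For (b), observe that once every element of $\NN$ has been revealed, we have $\Win(u_j,n) = \Win(u_j)$ because all arcs into $u_j$ originate at nodes of $\NN$ (we assumed no arcs leave $V\setminus\NN$). Therefore $y_{u_j}(n) = h(\Wout(u_j)/(\Win(u_j)+\Wout(u_j)))$, which is exactly the coordinate $y_{u_j}$ in Theorem~\ref{th:oblivious}. Since each $\theta_j$ is uniform in $[0,1]$ and independent across $j$, the event $u_j\in S_n$ occurs with probability exactly $y_{u_j}$, independently across elements. Hence the distribution of $S_n$ coincides with $\RSet(y)$, giving $\mathbb{E}[f(S_n)] = F(y) \geq 0.483 \cdot f(OPT)$ by Theorem~\ref{th:oblivious}.

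I do not anticipate a real obstacle here: the argument is essentially bookkeeping that ties the online updates back to the oblivious vector. The only subtle point is the monotonicity of $y_{u_j}(i)$ needed to certify the preemption property, and handling the boundary case where $\Win(u_j,i)+\Wout(u_j)=0$; both are routine once one recognizes that $h$ is non-decreasing.
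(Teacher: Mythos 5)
Your proposal is correct and follows essentially the same route as the paper: part (b) is exactly the paper's proof of the observation (identify $y(n)$ with the Feige--Jozeph vector, note independent inclusion with probabilities $y_u(n)$, and invoke Theorem~\ref{th:oblivious}), while your part (a) reproduces the paper's separate lemma establishing $S_i \subseteq S_{i-1}+u_i$, including the same treatment of the $\Win(u_j,i)+\Wout(u_j)=0$ boundary case.
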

\begin{proof}
Notice that the vector $y(n)$ defined by the algorithm is equal to the vector $y$ defined by Theorem~\ref{th:oblivious} (wherever the last is defined). Notice also that every element $u \in \NN$ belongs to $S_n$ with probability $y_u(n)$, independently. Hence, $\mathbb{E}[f(S_n)] = F(y)$. The observation now follows from Theorem~\ref{th:oblivious}.
\end{proof}

To complete the proof of Theorem~\ref{th:dicut_positive}, we only need the following lemma which shows that Algorithm~\ref{alg:DegreesRatio} is an online algorithm of {\DM}.

\begin{lemma}
For every $1 \leq i \leq n$, $S_i \subseteq S_{i - 1} + u_i$.
\end{lemma}
\begin{proof}
For $i = 1$ there is nothing to prove since $S_1 \subseteq \{u_1\}$. Thus, we assume from now on $i \geq 2$. By definition, $S_i$ contains only elements of $\NN_i$. Fix an element $u_j \in \NN_i - u_i$. We have to show that $u_j \in S_{i - 1}$. First, let us prove that:
\begin{equation} \label{eq:y_decrease}
	y_{u_j}(i) \leq y_{u_j}(i - 1)
	\enspace.
\end{equation}
There are two cases to consider. If $\Win(u_j, i - 1) + \Wout(u_j) \leq 0$ then $y_{u_j}(i - 1) = 1$, which proves~\eqref{eq:y_decrease} since $y_{u_j}(i) \in [0, 1]$. Otherwise, since $\Win(u_j, i)$ is a non-decreasing function of $i$ we get $\Win(u_j, i) + \Wout(u_j) > 0$. Inequality~\eqref{eq:y_decrease} now follows since $h$ is also non-decreasing. Using~\eqref{eq:y_decrease} we now get:
\[
	u_j \in S_i
	\Rightarrow
	y_{u_j}(i) \geq \theta_j
	\Rightarrow
	y_{u_j}(i - 1) \geq \theta_j
	\Rightarrow
	u_j \in S_{i - 1}
	\enspace.
	\qedhere
\]
\end{proof}

\subsection{Hardness results for {\DM}\inConference{.}} \label{ssc:unconstrained_negative} 

In this section we prove Theorem~\ref{th:dicut_negative}. The proof of the theorem is split between two lemmata. Lemma~\ref{le:digraph_negative_deterministic} proves the part of Theorem~\ref{th:dicut_negative} referring to deterministic algorithms, and Lemma~\ref{le:digraph_negative_randomized} proves the part referring to randomized algorithms. Both lemmata present an absolute graph $G = (V, A)$, and then fix an algorithm $ALG$ and describe an adversary that reveals some of the nodes of $V$. The choice which nodes to reveal is done iteratively, \ie, the adversary initially reveals some nodes and then can reveal additional nodes based on the decisions of $ALG$ (or the probabilities of various decisions in case $ALG$ is randomized). Formally, in the hard instance for $ALG$, the set $\NN$ is the set of nodes revealed by the adversary, and the hard instance reveals these nodes in the same order they are revealed by the adversary.

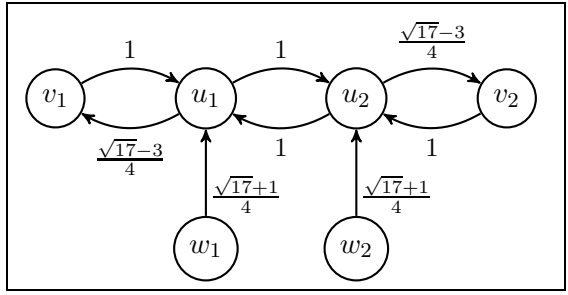
\begin{figure} 
\center
\fbox{
	\begin{tikzpicture}[->,>=stealth',thick]

	\draw
		node[node] (v1) {$v_1$}
		node[node] (u1) [right of=v1] {$u_1$}
		node[node] (u2) [right of=u1] {$u_2$}
		node[node] (v2) [right of=u2] {$v_2$}
		node[node] (w1) [below of=u1] {$w_1$}
		node[node] (w2) [below of=u2] {$w_2$}
	;

  \path[every node/.style={font=\sffamily\small}]
    (v1) edge[bend left] node[above] {$1$} (u1)
		(u1) edge[bend left] node[below] {$\frac{\sqrt{17} - 3}{4}$} (v1)
    (u1) edge[bend left] node[above] {$1$} (u2)
    (u2) edge[bend left] node[below] {$1$} (u1)
		(u2) edge[bend left] node[above] {$\frac{\sqrt{17} - 3}{4}$} (v2)
    (v2) edge[bend left] node[below] {$1$} (u2)
		(w1) edge node[right, pos=0.25] {\hspace{-0.2cm} $\frac{\sqrt{17} + 1}{4}$} (u1)
		(w2) edge node[right, pos=0.25] {\hspace{-0.2cm} $\frac{\sqrt{17} + 1}{4}$} (u2)
	;
\end{tikzpicture}
}
\caption{Graph for Lemma~\ref{le:digraph_negative_deterministic}. The number beside each arc represents its weight.} \label{fig:digraph_negative_deterministic}
\end{figure}

\begin{lemma} \label{le:digraph_negative_deterministic}
\inArxiv{No}\inConference{The competitive ratio of no} deterministic online algorithm \inArxiv{has a competitive ratio}\inConference{is} better than $(5 - \sqrt{17})/2$ for {\DM}.
\end{lemma}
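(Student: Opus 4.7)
The plan is to fix an arbitrary deterministic online algorithm and design an adaptive adversary on the graph $G$ from Figure~\ref{fig:digraph_negative_deterministic} that forces $f(S_n)/f(OPT) \le (5-\sqrt{17})/2$. The adversary would always reveal $u_1$ first and then $u_2$, and would then decide how to continue based on the deterministic solution $S_2$ returned after these two reveals. I expect every branch to terminate in a configuration matching one of the identities $f(\{u_1,u_2\})/f(\{u_1\}) = (5-\sqrt{17})/2$ or $f(\{u_1,v_1\})/f(\{v_1,u_2\}) = (5-\sqrt{17})/2$, from which the bound is immediate.

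I would first dispose of the easy branches. If $S_2 = \{u_1,u_2\}$ the adversary stops: $f(S_2) = (\sqrt{17}-3)/2$ because only $u_1 \to v_1$ and $u_2 \to v_2$ leave, while $f(\{u_1\}) = (\sqrt{17}+1)/4$ is a feasible alternative, giving ratio exactly $(5-\sqrt{17})/2$. If $S_2 = \emptyset$ the adversary stops with ratio $0$. The remaining case $S_2 = \{u_1\}$ (the case $S_2 = \{u_2\}$ is symmetric) is the interesting one, because $u_2$ has been irrevocably preempted.

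In this case I would reveal $v_1$ and branch on $S_3$. If $v_1 \in S_3$, that is $S_3 \in \{\{v_1\},\{u_1,v_1\}\}$, I would halt: both sets have $f$-value $1$ (the two arcs between $u_1$ and $v_1$ are internal and exactly one unit-weight leaving arc survives), while $f(OPT) \ge f(\{v_1,u_2\}) = 2 + (\sqrt{17}-3)/4 = (5+\sqrt{17})/4$, which gives ratio $4/(5+\sqrt{17}) = (5-\sqrt{17})/2$. If $S_3 = \emptyset$ I would stop with ratio $0$. The subtle subcase is $S_3 = \{u_1\}$, meaning the algorithm refused $v_1$. Here I would reveal $w_1$ and then stop. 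Since $w_1 \to u_1$ is internal whenever both elements lie in the solution, every non-empty subset of $\{u_1,w_1\}$ has $f$-value exactly $(\sqrt{17}+1)/4$, whereas $f(OPT) \ge f(\{v_1,u_2,w_1\}) = (3+\sqrt{17})/2$, obtained by summing the leaving arcs $v_1\to u_1$, $u_2\to u_1$, $u_2\to v_2$, and $w_1\to u_1$. The resulting ratio $(7-\sqrt{17})/8$ is strictly less than $(5-\sqrt{17})/2$.

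The hard part will be choosing the right moment to halt. Revealing additional nodes would let a clever algorithm preempt $u_1$ at a later step and reach the globally optimal cut $\{v_1,v_2,w_1,w_2\}$ of value $(5+\sqrt{17})/2$, which would destroy the bound. The asymmetric role of $v_1$ versus $w_1$ in the analysis---the former triggers an immediate stop via the comparison with $\{v_1,u_2\}$, while the latter is wasted on $u_1$ in the algorithm's solution but contributes substantially to the optimal three-element cut $\{v_1,u_2,w_1\}$---is precisely what forces the ratio down to $(5-\sqrt{17})/2$.
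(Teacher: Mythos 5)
Your proposal is correct and follows essentially the same route as the paper's proof: the same weighted graph, the same adaptive adversary that reveals $u_1,u_2$, then $v_1$, then (if refused) $w_1$, and the same value computations yielding the ratios $(5-\sqrt{17})/2$, $(5-\sqrt{17})/2$, and $(7-\sqrt{17})/8$ in the respective branches. The only cosmetic difference is that you stop immediately with ratio $0$ when the solution becomes empty after $v_1$, whereas the paper folds that case into the subsequent $w_1$ step; this changes nothing.
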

\begin{proof}
Consider the directed graph $G$ describe in Figure~\ref{fig:digraph_negative_deterministic}, and fix a deterministic algorithm $ALG$ for {\DM}. We describe an adversary that reveals some of the nodes of $G$ and forces $ALG$ to be no better than $((5 - \sqrt{17}) / 2)$-competitive. For every pair of nodes $x$ and $y$ of $G$, we denote by $c(xy)$ the weight of the arc from $x$ to $y$. Our adversary begins by revealing $u_1$ and $u_2$ and stops if $ALG$'s solution at this point is either empty or equal to $\{u_1, u_2\}$. If $ALG$'s solution is empty then $ALG$ is not competitive at all. On the other hand, if $ALG$'s solution is $\{u_1, u_2\}$ then its solution has a value of:
\[
	c(u_1v_1) + c(u_2v_2)
	=
	2 \cdot \frac{\sqrt{17} - 3}{4}
	=
	\frac{\sqrt{17} - 3}{2}
	\enspace.
\]
On the other hand, the optimal solution is $\{u_1\}$, whose value is:
\[
	c(u_1v_1) + c(u_1u_2)
	=
	\frac{\sqrt{17} - 3}{4} + 1
	=
	\frac{\sqrt{17} + 1}{4}
	\enspace.
\]
Thus, in this case, the competitive ratio of $ALG$ is at most:
\[
	\frac{\sqrt{17} - 3}{2} : \frac{\sqrt{17} + 1}{4}
	=
	\frac{2\sqrt{17} - 6}{\sqrt{17} + 1}
	=
	\frac{5 - \sqrt{17}}{2}
	\enspace.
\]

By the above discussion, the only interesting case is when $ALG$'s solution contains exactly one of the elements $u_1$ or $u_2$. Notice that $G$ is symmetric in the sense that switching every node having the index $1$ with the corresponding node having the weight $2$ does not change the graph. Hence, it is safe to assume $ALG$'s solution is exactly $\{u_1\}$. The next step of the adversary is to reveal $v_1$. If $v_1$ enters the solution of $ALG$, then, regardless of weather $u_1$ is kept in the solution, the value of $ALG$'s solution is $c(u_1u_2) = c(v_1u_1) = 1$. On the other hand, the optimal solution at this point is $\{u_2, v_1\}$ whose value is:
\[
	c(u_2u_1) + c(u_2v_2) + c(v_1u_1)
	=
	1 + \frac{\sqrt{17} - 3}{4} + 1
	=
	\frac{\sqrt{17} + 5}{4}
	\enspace.
\]
Hence, the competitive ratio of $ALG$ is at most:
\[
	\frac{4}{\sqrt{17} + 5}
	=
	\frac{5 - \sqrt{17}}{2}
	\enspace.
\]

We are left to handle the case in which $v_1$ does not enter $ALG$'s solution. In this case, the adversary reveals also $w_1$. $ALG$'s solution at this point must be a subset of $\{u_1, w_1\}$ and every such subset has a value of at most:
\[
	c(u_1v_1) + c(u_1u_2) = c(w_1u_1)
	=
	\frac{\sqrt{17} + 1}{4}
	\enspace.
\]
On the other hand, the optimal solution at this point is $\{u_2, v_1, w_1\}$ whose value is:
\begin{align*}
	&
	c(u_2u_1) + c(u_2v_2) + c(v_1u_1) + c(w_1u_1)\inConference{\\}
	=\inConference{{} &}
	1 + \frac{\sqrt{17} - 3}{4} + 1 + \frac{\sqrt{17} + 1}{4}
	=
	\frac{\sqrt{17} + 3}{2}
	\enspace.
\end{align*}
Hence, the competitive ratio of $ALG$ is at most:
\begin{align*}
	\frac{\sqrt{17} + 1}{4} : \frac{\sqrt{17} + 3}{2}
	={} &
	\frac{\sqrt{17} + 1}{2\sqrt{17} + 6}\inConference{\\}
	=\inConference{{} &}
	\frac{7 - \sqrt{17}}{8}
	<
	\frac{5 - \sqrt{17}}{2}
	\enspace.
	\qedhere
\end{align*}
\end{proof}

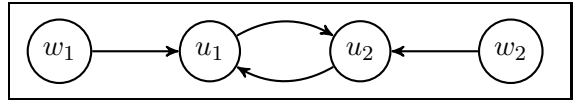
\begin{figure} 
\center
\fbox{
	\begin{tikzpicture}[->,>=stealth',thick]

	\draw
		node[node] (w1) {$w_1$}
		node[node] (u1) [right of=w1] {$u_1$}
		node[node] (u2) [right of=u1] {$u_2$}
		node[node] (w2) [right of=u2] {$w_2$}
	;

  \path
    (w1) edge node {} (u1)
    (u1) edge[bend left] node {} (u2)
    (u2) edge[bend left] node {} (u1)
    (w2) edge node {} (u2)
	;
\end{tikzpicture}
}
\caption{Graph for Lemma~\ref{le:digraph_negative_randomized}. All arcs have an identical weight assumed to be $1$.} \label{fig:digraph_negative_randomized}
\end{figure}

\begin{lemma} \label{le:digraph_negative_randomized}
No randomized online algorithm has a competitive ratio better than $4/5$ for {\DM}.
\end{lemma}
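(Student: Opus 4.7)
The plan is to apply Yao's minimax principle: I will exhibit a distribution $D$ over adversarial input sequences on the graph $G$ of Figure~\ref{fig:digraph_negative_randomized} such that every deterministic online algorithm has expected competitive ratio at most $4/5$ on $D$. Exploiting the $u_1 \leftrightarrow u_2$, $w_1 \leftrightarrow w_2$ symmetry of $G$, the distribution I would use first reveals $u_1$ and then $u_2$, and then with probability $q_0$ stops, with probability $q_1$ reveals the additional node $w_1$, and with probability $q_2 = q_1$ reveals $w_2$ instead, where $q_0 + 2q_1 = 1$ and the pair $(q_0, q_1)$ remains to be fixed. Denote the three resulting instances by $I_0, I_1, I_2$; direct inspection of the cut function of $G$ gives $OPT(I_0) = f(\{u_1\}) = 1$ and $OPT(I_1) = f(\{u_2, w_1\}) = OPT(I_2) = f(\{u_1, w_2\}) = 2$.

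The second step is to enumerate the deterministic algorithm's options. Its state $S$ after $u_1$ and $u_2$ have been revealed must lie in $\{\varnothing, \{u_1\}, \{u_2\}, \{u_1, u_2\}\}$, and for each such $S$ I would compute, via a short case analysis respecting the preemption rule (only subsets of $S \cup \{w_j\}$ remain available after a $w_j$ arrives), the best value the algorithm can still attain on each of $I_0, I_1, I_2$. I expect the resulting ratio-to-$OPT$ triples to be $(1, 1/2, 1)$ for $S = \{u_1\}$---the $1/2$ arising because once $u_2$ has been discarded every feasible subset of $\{u_1, w_1\}$ has value $1$ while $OPT(I_1) = 2$; symmetrically $(1, 1, 1/2)$ for $S = \{u_2\}$; $(0, 1, 1)$ for $S = \{u_1, u_2\}$, since on $I_1$ and $I_2$ the algorithm can drop the ``wrong'' $u_i$ and add the new $w_j$ to reach the value-$2$ optimum; and the dominated $(0, 1/2, 1/2)$ for $S = \varnothing$.

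The final step is to choose $q_0$ and $q_1$ so that each non-dominated row has expected ratio at most $4/5$. Equating the expected ratios of $S = \{u_1\}$ (namely $q_0 + q_1/2 + q_1$) and $S = \{u_1, u_2\}$ (namely $2q_1$), subject to $q_0 + 2q_1 = 1$, forces $q_0 = 1/5$ and $q_1 = q_2 = 2/5$; at these values every row evaluates to exactly $4/5$, and Yao's principle converts this into the claimed upper bound on the competitive ratio of every randomized algorithm. I do not anticipate a real technical obstacle beyond this enumeration; the only subtle point is to verify that the preemption rule genuinely prevents the algorithm from resurrecting a dropped $u_i$, which is exactly what pins the middle entry of the $S = \{u_i\}$ row at $1/2$ and makes the bound $4/5$ tight.
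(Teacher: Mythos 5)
Your proposal is correct, and your case analysis checks out: $OPT(I_0)=1$, $OPT(I_1)=OPT(I_2)=2$, the best-response rows are $(1,\nicefrac12,1)$, $(1,1,\nicefrac12)$, $(0,1,1)$, $(0,\nicefrac12,\nicefrac12)$, and with $(q_0,q_1,q_2)=(\nicefrac15,\nicefrac25,\nicefrac25)$ every non-dominated row has expected ratio exactly $4/5$ (only a wording slip: the dominated $S=\varnothing$ row evaluates to $2/5$, not $4/5$). The route differs from the paper's mainly in packaging. You fix a hard distribution over the three continuations (stop, reveal $w_1$, reveal $w_2$) and invoke Yao's principle, enumerating deterministic behaviors by their state after the common prefix; the paper instead argues directly against the randomized algorithm on the same graph: letting $p_1$ (resp.\ $p_2$) be the probability that $u_1$ (resp.\ $u_2$) is alone in the solution after $u_1,u_2$ are revealed, the adversary stops if $p_1+p_2\le 4/5$ (then $OPT=1$ while the expected value is $p_1+p_2$), and otherwise reveals the $w_j$ corresponding to the larger $p_j$, so the expected value is at most $p_j\cdot 1+(1-p_j)\cdot 2=2-p_j<\nicefrac85$ against $OPT=2$. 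The two arguments are essentially dual: the paper's adaptive-threshold adversary is shorter and needs no minimax machinery, while your fixed distribution isolates the analysis to deterministic algorithms and makes explicit both the hard distribution and the tightness of $4/5$ on this gadget. If you write it up, state Yao's principle in the per-instance-ratio form you actually use (for every deterministic algorithm, $\mathbb{E}_{I\sim D}[\mathrm{ALG}(I)/\mathrm{OPT}(I)]\le 4/5$ implies no randomized algorithm is better than $4/5$-competitive against an oblivious adversary), and note, as you did, that preemption only allows subsets of $S_{i-1}+u_i$, which is what pins the $\nicefrac12$ entries.
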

\begin{proof}
Consider the directed graph $G$ describe in Figure~\ref{fig:digraph_negative_randomized}, and fix an algorithm $ALG$ for {\DM}. We describe an adversary that reveals some of the nodes of $G$ and forces $ALG$ to be no better than $4/5$-competitive. The adversary begins by revealing the nodes $u_1$ and $u_2$ of the graph. At this point, let $p_1$ be the probability that $u_1$ is alone in the solution of $ALG$ and $p_2$ be the probability that $u_2$ is alone in this solution. If $p_1 + p_2 \leq 4/5$ then the adversary stops here, and the competitive ratio of $ALG$ is no better than $4/5$ since the optimal solution at this point has a value of $1$ and the expected value of the $ALG$'s solution is only $p_1 + p_2$. Thus, we may assume from now on $p_1 + p_2 > 4/5$.

By symmetry, we may also assume, without loss of generality, $p_2 \leq p_1$, which implies: $2p_1 \geq p_1 + p_2 > 4/5$. The next step of the adversary is revealing $w_1$. The optimal solution at this point is $\{u_2, w_1\}$, whose value is $2$. Let analyze the best solution $ALG$ can produce. With probability $p_1$, $ALG$ must pick a subset of $\{u_1, w_1\}$, and thus, cannot end up with a solution of value better than $1$. With the remaining probability, $ALG$ produces a solution no better than the optimal solution, \ie, its value is at most $2$. Hence, the expected value of $ALG$'s solution is at most:
\[
	p_1 + 2(1 - p_1)
	=
	2 - p_1
	<
	2 - 2/5
	=
	2 \cdot (4/5)
	\enspace.
	\qedhere
\]
\end{proof}
\section{\SCM} \label{sc:uniform}

Our positive and negative results for {\CM} (\ie, Theorems~\ref{th:cardinality_positive} and~\ref{th:cardinality_negative}, respectively) are proved in Sections~\ref{ssc:cardinality_positive} and~\ref{ssc:cardinality_negative}, respectively.

\subsection{Algorithms for {\CM}\inConference{.}} \label{ssc:cardinality_positive}

We first consider the special case of {\CM} where the objective function $f$ is monotone (in addition to being non-negative and submodular). For this case it is possible to derive a $1/4$-competitive algorithm from the works of Ashwinkumar~\cite{A11} and Chakrabarti and Kale~\cite{CK13} on streaming algorithms. We describe an alternative algorithm which achieves the same competitive ratio and has an easier analysis. A variant of this algorithm is applied later in this section to {\GCM}.

Our alternative algorithm has a parameter $c > 0$ and is given as Algorithm~\ref{alg:GreedyWithThreshold}. Intuitively the algorithm gets the first $k$ revealed elements unconditionally. For every element $u_i$ arriving after the first $k$ elements, the algorithm finds the best swap replacing an element of the current solution with $u_i$. If this swap increases the value of the current solution by enough to pass a given threshold, then it is carried out and $u_i$ enters the solution. 

\begin{algorithm*}[h!t]
\caption{\textsf{Greedy with Theshold}$(c)$} \label{alg:GreedyWithThreshold}
Let $S_0 \gets \varnothing$.\\
\ForEach{element $u_i$ revealed}
{
	\If{$i \leq k$}
	{
		Let $S_i \gets S_{i - 1} + u_i$.\\
	}
	\Else
	{
		Let $u'_i$ be the element of $S_{i - 1}$ maximizing $f(S_{i - 1} + u_i - u'_i)$.\\
		\If{$f(S_{i - 1} + u_i - u'_i) - f(S_{i - 1}) \geq c \cdot f(S_{i - 1}) / k$}
		{
			Let $S_i \gets S_{i - 1} + u_i - u'_i$.
		}
		\Else
		{
			Let $S_i \gets S_{i - 1}$.
		}
	}
}
\end{algorithm*}

Let us define some notation. For every $0 \leq i \leq n$, let $A_i = \cup_{j = 1}^i S_j$. Informally, $A_i$ is the set of all elements of $\{u_1, u_2, \dotsc, u_i\}$ originally accepted by Algorithm~\ref{alg:GreedyWithThreshold}, regardless of whether they are preempted at a later stage or not. We also use $f(u \mid S)$ to denote the marginal contribution of an element $u$ to a set $S$. Formally, $f(u \mid S) = f(S + u) - f(S)$. The following technical lemma is used later in the proof.

\begin{lemma} \label{le:maximum_better_than_average}
For every $k < i \leq n$, $f(S_{i - 1} + u_i - u'_i) - f(S_{i - 1}) \geq f(u_i \mid A_{i - 1}) - f(S_{i - 1}) / k$.
\end{lemma}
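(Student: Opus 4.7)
The plan is an averaging argument over the $k$ possible swaps. Fix $k < i \leq n$. Since the algorithm accepted the first $k$ elements unconditionally and every later step preserves $|S_i|=k$, we have $|S_{i-1}|=k$, and moreover $S_{i-1} \subseteq A_{i-1}$ (the algorithm only keeps elements it originally accepted). Note also $u_i \notin A_{i-1}$ because $u_i$ has not yet been revealed.

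First, for any $v \in S_{i-1}$, decompose
\begin{align*}
f(S_{i-1}+u_i-v) - f(S_{i-1})
&= \bigl[f(S_{i-1}+u_i-v) - f(S_{i-1}-v)\bigr] - \bigl[f(S_{i-1}) - f(S_{i-1}-v)\bigr] \\
&= f(u_i \mid S_{i-1}-v) - f(v \mid S_{i-1}-v).
\end{align*}
Since $u'_i$ maximizes $f(S_{i-1}+u_i-v)$ over $v \in S_{i-1}$, the left-hand side is at least the average of the right-hand side over $v \in S_{i-1}$:
\begin{align*}
f(S_{i-1}+u_i-u'_i) - f(S_{i-1})
\geq \frac{1}{k}\sum_{v \in S_{i-1}} f(u_i \mid S_{i-1}-v) - \frac{1}{k}\sum_{v \in S_{i-1}} f(v \mid S_{i-1}-v).
\end{align*}

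For the first sum, since $S_{i-1}-v \subseteq A_{i-1}$ and $u_i \notin A_{i-1}$, submodularity gives $f(u_i \mid S_{i-1}-v) \geq f(u_i \mid A_{i-1})$ for every $v$, so the first term is at least $f(u_i \mid A_{i-1})$. For the second sum, the key inequality I will invoke is the well-known consequence of submodularity and non-negativity
\[
\sum_{v \in S_{i-1}} f(v \mid S_{i-1}-v) \leq f(S_{i-1}) - f(\varnothing) \leq f(S_{i-1}),
\]
which is proved by writing $f(S_{i-1}) - f(\varnothing)$ as a telescoping sum of marginals and applying submodularity term-by-term to each marginal $f(v \mid \cdot)$ so that it is dominated by $f(v \mid S_{i-1}-v)$. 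Combining the two bounds yields exactly
\[
f(S_{i-1}+u_i-u'_i) - f(S_{i-1}) \geq f(u_i \mid A_{i-1}) - f(S_{i-1})/k,
\]
as claimed.

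The only subtle point is the auxiliary inequality $\sum_{v \in S} f(v \mid S - v) \leq f(S)$ for non-negative submodular $f$; once that is in hand, the rest is pure bookkeeping based on the choice of $u'_i$ and the containment $S_{i-1} \subseteq A_{i-1}$. I expect that auxiliary bound to be the only place where any non-trivial work happens, and it is standard.
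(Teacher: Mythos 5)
Your proof is correct and follows essentially the same route as the paper: an averaging argument over the $k$ candidate swaps using the optimality of $u'_i$, bounding the gain term by $f(u_i \mid A_{i-1})$ via decreasing marginals and $S_{i-1} \subseteq A_{i-1}$, and bounding the removal term by $f(S_{i-1})/k$ via the standard telescoping fact $\sum_{v \in S} f(v \mid S - v) \leq f(S) - f(\varnothing)$ together with non-negativity. The only cosmetic difference is that you split $f(S_{i-1}+u_i-v)-f(S_{i-1})$ by an exact identity and then apply submodularity to each marginal, whereas the paper applies the crossing inequality $f(S_{i-1}+u_i-v)+f(S_{i-1}) \geq f(S_{i-1}+u_i)+f(S_{i-1}-v)$ first; the two are equivalent bookkeeping for the same argument.
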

\begin{proof}
By the choice of $u'_i$ and an averaging argument,
\begin{align*}
	\inConference{&}
	f(S_{i - 1} + u_i - u'_i) \inArxiv{&}- f(S_{i - 1}) \inConference{\\}
	\geq\inConference{{} &}
	\frac{\sum_{u' \in S_{i - 1}} [f(S_{i - 1} + u_i - u') - f(S_{i - 1})]}{k}\\
	\geq{} &
	\frac{\sum_{u' \in S_{i - 1}} [f(S_{i - 1} + u_i) - f(S_{i - 1})]}{k} \inConference{\\ &} + \frac{\sum_{u' \in S_{i - 1}} [f(S_{i - 1} - u') - f(S_{i - 1})]}{k}
	\enspace,
\end{align*}
where the second inequality follows from submodularity. The first term on the rightmost hand side is $f(u_i \mid S_{i - 1})$, which can be lower bounded by $f(u_i \mid A_{i - 1})$  because $S_{i - 1} \subseteq A_{i - 1}$. The second term on the rightmost hand side can be lower bounded using the submodularity and non-negativity of $f$ as follows:
\begin{align*}
	\frac{\sum_{u' \in S_{i - 1}} [f(S_{i - 1} - u') - f(S_{i - 1})]}{k}
	\geq{} &
	\frac{f(\varnothing) - f(S_{i - 1})}{k}\inConference{\\}
	\geq\inConference{{} &}
	- \frac{f(S_{i - 1})}{k}
	\enspace.
	\qedhere
\end{align*}
\end{proof}

The following lemma relates $S_i$ and $A_i$.

\begin{lemma} \label{le:S_fraction_of_A}
For every $0 \leq i \leq n$, $f(S_i) \geq \frac{c}{c + 1} \cdot f(A_i)$.
\end{lemma}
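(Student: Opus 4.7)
My plan is to prove the statement by induction on $i$. The base case $i=0$ is immediate because $S_0 = A_0 = \varnothing$. For the inductive step, I would split on the three possible behaviors of the algorithm at step $i$. When $i \leq k$, the element $u_i$ is appended unconditionally and no preemption has yet occurred, so $S_i = \{u_1, \ldots, u_i\} = A_i$ and the claim reduces to $1 \geq c/(c+1)$. When $i > k$ but the threshold condition fails, $S_i = S_{i-1}$ and $u_i$ is never accepted into any later $S_j$ (each element arrives exactly once), so $A_i = A_{i-1}$, and the induction hypothesis transfers the inequality directly from step $i-1$.

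The substantive case is when $i > k$ and a swap occurs, giving $S_i = S_{i-1} + u_i - u'_i$ and $A_i = A_{i-1} + u_i$. Here I would invoke Lemma~\ref{le:maximum_better_than_average} in the rearranged form $f(u_i \mid A_{i-1}) \leq [f(S_i) - f(S_{i-1})] + f(S_{i-1})/k$, then use the swap-acceptance condition $f(S_i) - f(S_{i-1}) \geq c \cdot f(S_{i-1})/k$, i.e.\ $f(S_{i-1})/k \leq [f(S_i) - f(S_{i-1})]/c$, to eliminate the $f(S_{i-1})/k$ term. This yields $f(u_i \mid A_{i-1}) \leq \frac{c+1}{c}[f(S_i) - f(S_{i-1})]$, or equivalently $\frac{c}{c+1} f(u_i \mid A_{i-1}) \leq f(S_i) - f(S_{i-1})$. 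Combining this with the inductive hypothesis $\frac{c}{c+1} f(A_{i-1}) \leq f(S_{i-1})$ and the expansion $f(A_i) = f(A_{i-1}) + f(u_i \mid A_{i-1})$ telescopes to the desired bound $\frac{c}{c+1} f(A_i) \leq f(S_i)$.

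The only real obstacle is the swap case: one must balance the ``best-swap gain'' lower bound from Lemma~\ref{le:maximum_better_than_average} (which loses $f(S_{i-1})/k$) against the threshold $c \cdot f(S_{i-1})/k$ that authorizes the swap. The ratio $c/(c+1)$ appearing in the statement is precisely the constant for which these two $1/k$ corrections cancel and the induction closes without slack; choosing any smaller threshold parameter would force a weaker constant. The remaining two cases are bookkeeping, and submodularity is used only indirectly (inside Lemma~\ref{le:maximum_better_than_average}) rather than directly in this lemma.
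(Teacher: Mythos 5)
Your proof is correct and follows essentially the same route as the paper: both establish the per-step increment bound $f(S_i)-f(S_{i-1}) \geq \frac{c}{c+1}\,[f(A_i)-f(A_{i-1})]$ by playing the lower bound of Lemma~\ref{le:maximum_better_than_average} against the acceptance threshold $c\cdot f(S_{i-1})/k$ (your substitution of $f(S_{i-1})/k \leq [f(S_i)-f(S_{i-1})]/c$ is algebraically the same as the paper's convex combination of the two bounds), and then sum/induct over $i$. The only cosmetic difference is that you note $S_i=A_i$ for $i\leq k$ where the paper invokes submodularity; both are fine.
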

\begin{proof}
For $i = 0$, the lemma clearly holds since $f(A_0) = f(\varnothing) = f(S_0)$. Thus, it is enough to prove that for every $1 \leq i \leq n$, $f(S_i) - f(S_{i - 1}) \geq \frac{c}{c + 1} \cdot [f(A_i) - f(A_{i - 1})]$. If the algorithm does not accept $u_i$ then $S_i = S_{i - 1}$ and  $A_i = A_{i - 1}$, and the claim is trivial. Hence, we only need to consider the case in which the algorithm accepts element $u_i$.

If $i \leq k$ then $f(S_i) - f(S_{i - 1}) = f(u_i \mid S_{i - 1}) \geq f(u_i \mid A_{i - 1}) = f(A_i) - f(A_{i - 1})$, where the inequality follows from submodularity since $S_{i - 1} \subseteq A_{i - 1}$. If $i > k$, then it is possible to lower bound $f(S_i) - f(S_{i - 1}) = f(S_{i - 1} + u_i - u'_i) - f(S_{i - 1})$ in two ways. First, a lower bound of $f(u_i \mid A_{i - 1}) - f(S_{i - 1}) / k$ is given by Lemma~\ref{le:maximum_better_than_average}. A second lower bound of $c \cdot f(S_{i - 1}) / k$ follows since the algorithm accepts $u_i$. Using both lower bounds, we get:
\begin{align*}
	\inConference{&}
	f(S_i) - f(S_{i - 1})\inConference{\\}
	\geq{} &
	\max\{f(u_i \mid A_{i - 1}) - f(S_{i - 1})/k, c \cdot f(S_{i - 1})/k\}\\
	\geq{} &
	\frac{c \cdot [f(u_i \mid A_{i - 1}) - f(S_{i - 1})/k] + 1 \cdot [c \cdot f(S_{i - 1})/k]}{c + 1}\\
	\geq{} &
	\frac{c}{c + 1} \cdot f(u_i \mid A_{i - 1})
	=
	\frac{c}{c + 1} \cdot [f(A_i) - f(A_{i - 1})]
	\enspace.
	\qedhere
\end{align*}
\end{proof}

We are now ready to prove the competitive ratio of Algorithm~\ref{alg:GreedyWithThreshold}, and thus, also the second part of Theorem~\ref{th:cardinality_positive}.

\begin{corollary}
The competitive ratio of Algorithm~\ref{alg:GreedyWithThreshold} is at least $\frac{c}{(c + 1)^2}$. Hence, for $c = 1$ the competitive ratio of Algorithm~\ref{alg:GreedyWithThreshold} is at least $1/4$.
\end{corollary}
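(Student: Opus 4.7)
The plan is to combine Lemma~\ref{le:S_fraction_of_A} (which compares $f(S_n)$ with $f(A_n)$) with a separate bound relating $f(A_n)$ to $f(\mathit{OPT})$. Concretely, I will show
\[
	f(\mathit{OPT}) \;\leq\; f(A_n) + (c+1)\,f(S_n),
\]
and then plug in $f(A_n) \leq \tfrac{c+1}{c} f(S_n)$ from Lemma~\ref{le:S_fraction_of_A} to get $f(\mathit{OPT}) \leq \tfrac{(c+1)^2}{c} f(S_n)$, which is exactly the claim.

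The first step is to observe that since every element arriving at position $i \leq k$ is unconditionally added to $A_i$, any $u \in \mathit{OPT} \setminus A_n$ must be some $u_i$ with $i > k$ that failed the acceptance test. By the else-branch condition, $f(S_{i-1} + u_i - u'_i) - f(S_{i-1}) < c\,f(S_{i-1})/k$, while Lemma~\ref{le:maximum_better_than_average} gives $f(S_{i-1} + u_i - u'_i) - f(S_{i-1}) \geq f(u_i \mid A_{i-1}) - f(S_{i-1})/k$. Combining, $f(u_i \mid A_{i-1}) < (c+1)\,f(S_{i-1})/k$. Since $f(S_{i-1})$ is non-decreasing in $i$ (it strictly increases whenever a swap occurs, by the threshold, and stays put otherwise), we have $f(S_{i-1}) \leq f(S_n)$, and hence
\[
	f(u_i \mid A_{i-1}) \;\leq\; (c+1)\,f(S_n)/k.
\]

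The second step uses monotonicity and submodularity to bound $f(\mathit{OPT})$. Since $f$ is monotone, $f(\mathit{OPT}) \leq f(\mathit{OPT} \cup A_n)$, and adding the elements of $\mathit{OPT} \setminus A_n$ one at a time together with submodularity gives
\[
	f(\mathit{OPT} \cup A_n) \;\leq\; f(A_n) + \sum_{u \in \mathit{OPT} \setminus A_n} f(u \mid A_n).
\]
For each such $u = u_i$, submodularity (with $A_{i-1} \subseteq A_n$) yields $f(u_i \mid A_n) \leq f(u_i \mid A_{i-1}) \leq (c+1) f(S_n)/k$, and since $|\mathit{OPT} \setminus A_n| \leq |\mathit{OPT}| \leq k$, the sum is at most $(c+1)\,f(S_n)$. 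Combined with Lemma~\ref{le:S_fraction_of_A}, this gives the displayed inequality above and completes the proof; setting $c = 1$ yields $1/4$.

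The argument is short and I don't expect a real obstacle; the only subtle point is justifying $f(S_{i-1}) \leq f(S_n)$ (i.e., monotonicity of $f(S_i)$ along the run), which follows immediately from the threshold test and the fact that a rejection leaves $S_i = S_{i-1}$.
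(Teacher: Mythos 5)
Your proposal is correct and follows essentially the same route as the paper: it combines the rejection condition with Lemma~\ref{le:maximum_better_than_average} to get $f(u_i \mid A_{i-1}) < \tfrac{c+1}{k} f(S_{i-1}) \leq \tfrac{c+1}{k} f(S_n)$, then bounds $f(OPT)$ by $f(A_n) + \sum_{u \in OPT\setminus A_n} f(u \mid A_n)$ via monotonicity and submodularity, and finishes with Lemma~\ref{le:S_fraction_of_A}. The subtle point you flag (monotonicity of $f(S_i)$ in $i$) is used identically in the paper's proof and your justification of it is adequate.
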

\begin{proof}
Let $OPT$ be the optimal solution, and consider an element $u_i \in OPT \setminus A_i$. Since $u_i$ was rejected by Algorithm~\ref{alg:GreedyWithThreshold}, the following inequality must hold:
\begin{align*}
	\frac{c \cdot f(S_{i - 1})}{k}
	>{} &
	f(S_{i - 1} + u_i - u'_i) - f(S_{i - 1}) \inConference{\\}
	\geq\inConference{{} &}
	f(u_i \mid A_{i - 1}) - \frac{f(S_{i - 1})}{k}
	\enspace,
\end{align*}
where the second inequality follows from Lemma~\ref{le:maximum_better_than_average}. Rearranging yields:
\[
	f(u_i \mid A_{i - 1})
	<
	\frac{c+1}{k} \cdot f(S_{i - 1})
	\leq
	\frac{c+1}{k} \cdot f(S_n)
	\enspace,
\]
where the last inequality uses the monotonicity of $f(S_i)$ (as a function of $i$). In conclusion, by the submodularity and monotonicity of $f$ and Lemma~\ref{le:S_fraction_of_A},
\begin{align*}
	f(OPT)
	\leq{} &
	f(A_n) + \sum_{u \in OPT \setminus A_n} f(u \mid A_n) \inConference{\\}
	<\inConference{{} &}
	f(A_n) + \sum_{u \in OPT \setminus A_n} \left(\frac{c + 1}{k} \cdot f(S_n)\right)\\
	\leq{} &
	f(A_n) + (c + 1) \cdot f(S_n)
	\leq
	\frac{(c + 1)^2}{c} \cdot f(S_n)
	\enspace.
	\qedhere
\end{align*}
\end{proof}

At this point we return the focus to {\GCM} allowing the objective $f$ to be non-monotone. For the algorithm and analysis we present we need the notation described at the beginning of Section~\ref{sc:unconstrained}. More specifically, we use the multilinear extension $F$ of $f$ and the assumption that a set $S \subseteq \NN$ represents also its characteristic vector. As a first step, consider a variant of Algorithm~\ref{alg:GreedyWithThreshold} modified in two ways:
\begin{compactitem}
	\item The value $k$ is replaced with $pk$ for an integer parameter $p \geq 1$.
	\item The algorithm uses an auxiliary function $g : 2^\NN \to \mathbb{R}^+$ instead of the real objective function $f$. The auxiliary function $g$ is defined as $g = F(p^{-1} \cdot S)$.
\end{compactitem}

Notice that the modified algorithm does not produce a feasible solution. Still, we are interested in analyzing it for reasons that are explained later. The proofs of Lemmata~\ref{le:maximum_better_than_average} and~\ref{le:S_fraction_of_A} do not use the monotonicity of $f$. Hence, both lemmata still hold with $pk$ and $g$ taking the roles of $k$ and $f$, respectively (notice that $g$ is non-negative and submodular). The resulting lemmata after the replacements are given below.

\begin{lemma} \label{le:maximum_better_than_average_random}
For every $pk < i \leq n$, $g(S_{i - 1} + u_i - u'_i) - g(S_{i - 1}) \geq g(u_i \mid A_{i - 1}) - g(S_{i - 1}) / (pk)$.
\end{lemma}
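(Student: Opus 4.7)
The plan is to follow the proof of Lemma~\ref{le:maximum_better_than_average} verbatim, with $pk$ and $g$ taking the roles of $k$ and $f$. The only ingredients of that proof were the choice of $u'_i$ as the maximizer, an averaging argument, and submodularity plus non-negativity of the objective; monotonicity was never used. So the first thing I would verify is that $g(S) = F(p^{-1} \cdot S)$ is non-negative and submodular. Non-negativity is immediate from non-negativity of $f$. For submodularity, I would use that the multilinear extension satisfies $\partial_u F(x) \geq \partial_u F(y)$ whenever $x \leq y$ (this is the continuous form of submodularity of $f$), combine it with the identity $g(S+u) - g(S) = \int_0^{1/p} \partial_u F(p^{-1}\cdot S + t \cdot e_u)\, dt$, and conclude that the marginal $g(u \mid S)$ is non-increasing in $S$.

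Next I would note a structural fact about the modified algorithm: for $i > pk$, $|S_{i-1}| = pk$, because the first $pk$ elements are accepted unconditionally and every later step is a swap, preserving cardinality. This justifies averaging over $S_{i-1}$ and dividing by $pk$.

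Then I would carry out the averaging step: since $u'_i \in S_{i-1}$ is chosen to maximize $g(S_{i-1}+u_i-u')$,
\[
g(S_{i-1}+u_i-u'_i) - g(S_{i-1}) \;\geq\; \frac{1}{pk}\sum_{u' \in S_{i-1}}\bigl[g(S_{i-1}+u_i-u') - g(S_{i-1})\bigr].
\]
Applying submodularity of $g$ (marginal of $u_i$ on $S_{i-1} - u'$ is at least its marginal on $S_{i-1}$) gives, termwise,
\[
g(S_{i-1}+u_i-u') - g(S_{i-1}) \;\geq\; \bigl[g(S_{i-1}+u_i) - g(S_{i-1})\bigr] + \bigl[g(S_{i-1}-u') - g(S_{i-1})\bigr].
\]
The first summed piece contributes $g(u_i \mid S_{i-1}) \geq g(u_i \mid A_{i-1})$ (submodularity again, since $S_{i-1} \subseteq A_{i-1}$). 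The second piece is handled by a telescoping argument: submodularity implies $\sum_{u' \in S_{i-1}}[g(S_{i-1}) - g(S_{i-1}-u')] \leq g(S_{i-1}) - g(\varnothing) \leq g(S_{i-1})$ using non-negativity, so this term contributes at least $-g(S_{i-1})/(pk)$. Adding the two bounds yields exactly the inequality in the statement.

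There is no real obstacle here; the only spot that requires minor care is the verification that $g$ inherits submodularity from $f$ via the multilinear extension, which one might otherwise take for granted. Once that is in place, the remainder is a mechanical transcription of the original proof with the two substitutions $f \to g$ and $k \to pk$.
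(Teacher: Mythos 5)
Your proposal is correct and matches the paper's own argument, which simply observes that the proof of Lemma~\ref{le:maximum_better_than_average} never uses monotonicity and therefore carries over verbatim with $g$ and $pk$ in place of $f$ and $k$ (the paper asserts, as you verify, that $g$ is non-negative and submodular). Your added checks---submodularity of $g$ via monotonicity of $\partial_u F$ and the fact that $|S_{i-1}|=pk$ for $i>pk$---are exactly the details the paper leaves implicit.
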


\begin{lemma} \label{le:S_fraction_of_A_random}
For every $0 \leq i \leq n$, $g(S_i) \geq \frac{c}{c + 1} \cdot g(A_i)$.
\end{lemma}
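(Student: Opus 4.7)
The plan is to mirror the proof of Lemma~\ref{le:S_fraction_of_A}, which is possible because that argument never invoked the monotonicity of $f$, and because the auxiliary function $g$ is non-negative and submodular (the latter follows from the general inequality $F(x) + F(y) \geq F(x \vee y) + F(x \wedge y)$ for the multilinear extension, applied to the $\{0,1/p\}$-valued vectors $x = p^{-1}\cdot S$ and $y = p^{-1}\cdot T$). Accordingly, I would prove the stronger per-step statement
\begin{equation*}
g(S_i) - g(S_{i-1}) \;\geq\; \tfrac{c}{c+1}\bigl[g(A_i) - g(A_{i-1})\bigr] \qquad \text{for every } 1 \leq i \leq n,
\end{equation*}
and then telescope, using the base case $g(S_0) = g(\varnothing) = g(A_0)$.

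The per-step inequality splits into three cases. If the modified algorithm rejects $u_i$, then $S_i = S_{i-1}$ and hence $A_i = A_{i-1}$, so both sides vanish. If $u_i$ is accepted and $i \leq pk$, then $S_i = S_{i-1}+u_i$, and submodularity of $g$ together with $S_{i-1} \subseteq A_{i-1}$ yields $g(S_i) - g(S_{i-1}) = g(u_i \mid S_{i-1}) \geq g(u_i \mid A_{i-1}) = g(A_i) - g(A_{i-1})$. The interesting case is $i > pk$ with $u_i$ accepted, where I combine two lower bounds on $g(S_i) - g(S_{i-1}) = g(S_{i-1}+u_i-u'_i) - g(S_{i-1})$: the bound $g(u_i \mid A_{i-1}) - g(S_{i-1})/(pk)$ from Lemma~\ref{le:maximum_better_than_average_random}, and the threshold bound $c \cdot g(S_{i-1})/(pk)$ implied by the acceptance rule of the modified algorithm. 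Taking the convex combination with weights $c/(c+1)$ and $1/(c+1)$ cancels the $g(S_{i-1})/(pk)$ contributions and leaves exactly $\frac{c}{c+1}\,g(u_i \mid A_{i-1}) = \frac{c}{c+1}[g(A_i) - g(A_{i-1})]$, as required.

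I do not expect a significant obstacle here: the derivation is a verbatim translation of the proof of Lemma~\ref{le:S_fraction_of_A} with $k$ replaced by $pk$ and $f$ replaced by $g$. The one point that deserves explicit attention is the submodularity of $g$, which is not stated as a lemma but is a standard property of the multilinear extension as noted above; once that is in hand, the algebraic cancellation in the convex combination works identically because the factor $pk$ appears symmetrically in both lower bounds, so the weights $c/(c+1)$ and $1/(c+1)$ are unaffected by the rescaling of $k$.
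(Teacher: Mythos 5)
Your proposal is correct and matches the paper's approach: the paper itself proves this lemma by observing that the proof of Lemma~\ref{le:S_fraction_of_A} never uses monotonicity of $f$, so it carries over verbatim with $k$ replaced by $pk$ and $f$ replaced by $g$ (noting $g$ is non-negative and submodular), which is exactly the case analysis and convex-combination cancellation you spell out. Your explicit justification of the submodularity of $g$ via the lattice submodularity of the multilinear extension is a valid filling-in of a detail the paper only asserts parenthetically.
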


We also need the following lemma of~\cite{FMV11}.
\begin{lemma}[Lemma~2.3 of~\cite{FMV11}] \label{le:sampling_lowerbound}
Let $f : 2^\NN \to \mathbb{R}$ be submodular, $A,B \subseteq \NN$ two (not necessarily disjoint) sets and $A(p),B(q)$ their independently sampled subsets, where each element of $A$ appears in $A(p)$ with probability $p$ and each element of $B$ appears in $B(q)$ with probability $q$. Then
\begin{align*}
	&
	\mathbb{E}[f(A(p) \cup B(q))]
	\geq 
	(1 - p)(1 - q) \cdot f(\varnothing) \inConference{\\ &} + p(1 - q) \cdot f(A) + (1 - p)q \cdot f(B) + pq \cdot f(A \cup B)
	\enspace.
\end{align*}
\end{lemma}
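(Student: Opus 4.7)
My plan is to express both sides of the inequality in terms of the multilinear extension $F$ of $f$, and then exploit the key property that for a submodular function $F$ is concave along every non-negative direction. The latter follows from the computation $\partial^2 F / (\partial x_u \partial x_v) = \mathbb{E}[f(R+u+v) - f(R+u) - f(R+v) + f(R)] \leq 0$ for $u \neq v$ by submodularity, while the diagonal entries of the Hessian vanish since $F$ is multilinear; hence $d^\top \nabla^2 F(x)\, d \leq 0$ whenever $d \geq 0$ componentwise, so $t \mapsto F(x + t d)$ is concave on its domain.

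First, because the coordinates of $A(p) \cup B(q)$ are independent Bernoulli variables, the left-hand side equals $F(x)$, where the vector $x \in [0,1]^{\NN}$ satisfies $x_u = p$ for $u \in A \setminus B$, $x_u = q$ for $u \in B \setminus A$, $x_u = p + q - pq$ for $u \in A \cap B$, and $x_u = 0$ otherwise.

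Next, I rewrite the right-hand side by grouping its four terms according to the ``all-or-nothing'' event for $A$ as
\[
	(1-p) \bigl[ (1-q)\, f(\varnothing) + q\, f(B) \bigr] + p \bigl[ (1-q)\, f(A) + q\, f(A \cup B) \bigr].
\]
Inside each bracket I apply concavity of $F$ along the non-negative direction $B$ (respectively $B \setminus A$), obtaining $(1-q)\, f(\varnothing) + q\, f(B) \leq F(q \cdot B)$ and $(1-q)\, f(A) + q\, f(A \cup B) \leq F(A + q \cdot (B \setminus A))$; here I identify sets with their characteristic vectors, following the paper's convention.

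Finally, I combine these two bounds via a third application of concavity. A direct coordinate-by-coordinate check shows that the convex combination $(1-p) \cdot (q \cdot B) + p \cdot (A + q \cdot (B \setminus A))$ equals exactly $x$, and that the direction from $q \cdot B$ to $A + q \cdot (B \setminus A)$ equals $A - q \cdot (A \cap B)$, which is componentwise non-negative because $q \leq 1$. Concavity along this direction therefore yields $(1-p)\, F(q \cdot B) + p\, F(A + q \cdot (B \setminus A)) \leq F(x)$, which, chained with the previous step, produces the desired inequality. The only step requiring verification is that each of the three directions used (namely $B$, $B \setminus A$, and $A - q \cdot (A \cap B)$) is indeed componentwise non-negative; granted that, the rest is routine, the main conceptual content being that ``correlated'' sampling of $A$ and $B$ as monolithic blocks is dominated by independent per-element sampling for submodular objectives.
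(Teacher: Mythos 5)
Your proof is correct. Note that the paper itself does not prove this lemma at all --- it imports it verbatim as Lemma~2.3 of~\cite{FMV11} --- so there is no in-paper argument to match; what you have produced is a valid self-contained derivation. Your route goes through the multilinear extension: the left-hand side is exactly $F(x)$ for the independent-product vector $x$ (here the independence hypothesis is essential), and the right-hand side is dominated by $F(x)$ via three applications of concavity of $F$ along non-negative directions, all of which check out (the directions $\characteristic_B$, $\characteristic_{B \setminus A}$ and $\characteristic_A - q \cdot \characteristic_{A \cap B}$ are indeed componentwise non-negative, the bracketed convex combinations are the correct endpoint values, and the final convex combination reproduces $x$ coordinate-by-coordinate, including the $p + q - pq$ entry on $A \cap B$). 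For comparison, the original argument in~\cite{FMV11} is more elementary and probabilistic and establishes a slightly stronger statement: it does not need the per-element samples to be independent, only that each element of $A$ (resp.\ $B$) is retained with the given marginal probability. Your extension-based proof trades that extra generality for a shorter and more conceptual argument, and the independence it uses is available in the setting where the paper invokes the lemma (Corollary~\ref{co:S_n_guarantee}), so it would serve as a legitimate substitute there.
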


Using the above lemmata, it is possible to get a guarantee on $g(S_n)$.

\begin{corollary} \label{co:S_n_guarantee}
Let $OPT$ be the optimal solution. The modified Algorithm~\ref{alg:GreedyWithThreshold} guarantees:
\[
	g(S_n) \geq \frac{cp^{-1}(1 - p^{-1})}{(c + 1)(1 + p^{-1}c)} \cdot f(OPT)
	\enspace.
\]
Hence, for $c = 7/4$ and $p = 3$, $g(S_n) \geq (56/627) \cdot f(OPT)$.
\end{corollary}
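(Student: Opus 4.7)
The plan is to combine three ingredients: (i) the ``average'' guarantee from Lemma~\ref{le:S_fraction_of_A_random}, which bounds $g(A_n)$ in terms of $g(S_n)$; (ii) a rejection argument bounding the marginal contribution under $g$ of each optimal element missing from $A_n$; and (iii) Lemma~\ref{le:sampling_lowerbound} to translate the resulting bound on $g(A_n \cup OPT)$ into a bound in terms of $f(OPT)$.

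First I would observe two structural facts. Since every one of the first $pk$ revealed elements is unconditionally added to $S_i$ (and hence to $A_n$), every $u_i \in OPT \setminus A_n$ must satisfy $i > pk$. Moreover, for $j \geq pk$ the sequence $g(S_j)$ is non-decreasing: on a rejection step $g(S_j) = g(S_{j-1})$, and on an acceptance step the threshold condition gives $g(S_j) - g(S_{j-1}) \geq c \cdot g(S_{j-1})/(pk) \geq 0$. Consequently $g(S_{i-1}) \leq g(S_n)$ for every $i > pk$.

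Next, for each $u_i \in OPT \setminus A_n$ (so $i > pk$) the rejection condition gives $g(S_{i-1} + u_i - u'_i) - g(S_{i-1}) < c \cdot g(S_{i-1})/(pk)$. Combining with Lemma~\ref{le:maximum_better_than_average_random} yields
\[
g(u_i \mid A_{i-1}) < \frac{c+1}{pk} \cdot g(S_{i-1}) \leq \frac{c+1}{pk} \cdot g(S_n).
\]
Since $A_{i-1} \subseteq A_n$, submodularity of $g$ gives $g(u_i \mid A_n) \leq g(u_i \mid A_{i-1})$. Using submodularity again together with $|OPT \setminus A_n| \leq k$ and Lemma~\ref{le:S_fraction_of_A_random}:
\[
g(A_n \cup OPT) \leq g(A_n) + \sum_{u \in OPT \setminus A_n} g(u \mid A_n) \leq \frac{c+1}{c} \cdot g(S_n) + \frac{c+1}{p} \cdot g(S_n) = \frac{(c+1)(p+c)}{cp} \cdot g(S_n).
\]

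The main (and conceptually trickiest) step is to lower bound $g(A_n \cup OPT) = F(p^{-1} \cdot (A_n \cup OPT))$ by a constant multiple of $f(OPT)$. The key observation is that, because $OPT$ and $A_n \setminus OPT$ are \emph{disjoint}, the random set $\RSet(p^{-1} \cdot (A_n \cup OPT))$ has the same distribution as $\RSet(p^{-1} \cdot OPT) \cup \RSet(p^{-1} \cdot (A_n \setminus OPT))$ with the two samples independent. Applying Lemma~\ref{le:sampling_lowerbound} with $A = OPT$, $B = A_n \setminus OPT$, and $p = q = p^{-1}$, and discarding the nonnegative terms corresponding to $f(\varnothing)$, $f(A_n \setminus OPT)$ and $f(A_n \cup OPT)$, yields
\[
g(A_n \cup OPT) \geq p^{-1}(1 - p^{-1}) \cdot f(OPT).
\]
Substituting this into the previous display gives $g(S_n) \geq \frac{cp^{-1}(1-p^{-1})}{(c+1)(1+p^{-1}c)} \cdot f(OPT)$, and plugging in $c = 7/4$ and $p = 3$ yields the numerical bound $56/627$.
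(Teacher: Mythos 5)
Your proposal is correct and follows essentially the same route as the paper's proof: the rejection condition combined with Lemma~\ref{le:maximum_better_than_average_random} to bound $g(u_i \mid A_{i-1})$, submodularity plus Lemma~\ref{le:S_fraction_of_A_random} to get $g(A_n \cup OPT) \leq \frac{(c+1)(1+p^{-1}c)}{c} \cdot g(S_n)$, and Lemma~\ref{le:sampling_lowerbound} for the lower bound $p^{-1}(1-p^{-1}) \cdot f(OPT)$. Your explicit justifications of the monotonicity of $g(S_i)$ for $i \geq pk$ and of why rejected optimal elements must have index $i > pk$ are details the paper leaves implicit, but they do not change the argument.
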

\begin{proof}
Consider an element $u_i \in OPT \setminus A_i$. Since $u_i$ was rejected by Algorithm~\ref{alg:GreedyWithThreshold}, the following inequality must hold:
\begin{align*}
	\frac{c \cdot g(S_{i - 1})}{pk}
	>{} &
	g(S_{i - 1} + u_i - u'_i) - g(S_{i - 1})\inConference{\\}
	\geq\inConference{{} &}
	g(u_i \mid A_{i - 1}) - \frac{g(S_{i - 1})}{pk}
	\enspace,
\end{align*}
where the second inequality follows from Lemma~\ref{le:maximum_better_than_average_random}. Rearranging yields:
\[
	g(u_i \mid A_{i - 1})
	<
	\frac{c+1}{pk} \cdot g(S_{i - 1})
	\leq
	\frac{c+1}{pk} \cdot g(S_n)
	\enspace,
\]
where the last inequality uses the monotonicity of $g(S_i)$ (as a function of $i$) for $i \geq pk$. By the submodularity of $g$ and Lemma~\ref{le:S_fraction_of_A_random}, we now get:
\begin{align*}
	g(A_n \cup OPT)
	\leq{} &
	g(A_n) + \sum_{u \in OPT \setminus A_n} g(u_i \mid A_n)\inConference{\\}
	<\inConference{{} &}
	g(A_n) + \sum_{u \in OPT \setminus A_n} \left(\frac{c + 1}{pk} \cdot g(S_n)\right)\\
	\leq{} &
	g(A_n) + p^{-1}(c + 1) \cdot g(S_n)\inConference{\\}
	\leq\inConference{{} &}
	\frac{(c + 1)(1 + p^{-1}c)}{c} \cdot g(S_n)
	\enspace.
\end{align*}

On the other hand, by Lemma~\ref{le:sampling_lowerbound},
\begin{align*}
	g(A_n \cup OPT)
	={} &
	F(p^{-1} \cdot OPT + p^{-1} \cdot (A_n \setminus OPT))\inConference{\\}
	\geq\inConference{{} &}
	p^{-1}(1 - p^{-1}) \cdot f(OPT)
	\enspace.
	\qedhere
\end{align*}
\end{proof}

A na\"{\i}ve implementation of Algorithm~\ref{alg:GreedyWithThreshold} requires an exponential time to evaluate $F$. However, this can be fixed, at a loss of an arbitrary small constant $\ee > 0$ in the guarantee of Corollary~\ref{co:S_n_guarantee}. See the above discussion about a polynomial time implementation of Algorithm~\ref{alg:IncreaseFromZero} for the main ideas necessary for such an implementation.\inArxiv{ For more details, see Appendix~\ref{app:uniform_polynomial}.}

Consider a collection of vectors $\{y_i\}_{i = 0}^n$ obtained from the sets $\{S_i\}_{i = 0}^n$ via the following rule: $y_i = p^{-1} \cdot S_i$. The vectors $\{y_i\}_{i = 0}^n$ form a \emph{fractional} solution for {\CM} in the following sense. First, for every $0 \leq i \leq n$, the sum of the coordinates of $y_i$ is at most $k$. Second, for every $1 \leq i \leq n$, $y_i \leq y_{i - 1} \vee u$, where the inequality is element-wise. To convert the vectors $\{y_i\}_{i = 0}^n$ into an integral solution $\{\bar{S}_i\}_{i = 0}^n$ for {\CM}, we need an online rounding method.


The rounding we suggest works as follows. For every $1 \leq \ell \leq k$, select a uniformly random value $r_\ell$ from the range $\{p(\ell - 1) + 1, p(\ell - 1) + 2, \dotsc, p\ell\}$. Intuitively, the values $r_\ell$ specify the indexes of the elements from the set $\{u_j\}_{j = 1}^{pk}$ that get into the rounded solution. Later elements can get into the rounded solution only if they ``take the place'' of one of these elements. More formally, we say that an element $u_j \in S_i$ uses place $a$ if there exists a series of indexes: $j_1, j_2, \dotsc, j_m$ such that:
\[
	j = j_1
	\enspace,
	\quad
	u'_{j_i} = u_{j_{i - 1}} \enspace \text{(\ie, $u_{j_i}$ replaced $u_{j_{i-1}}$ in $S_{j_i}$)}
	\quad
	\inConference{\]\[}
	\text{and}
	\quad
	j_m = a
	\enspace.
\]
The element $u_j \in S_i$ gets into $\bar{S}_i$ if and only if it uses a place $a$ equal to $r_\ell$ for some $\ell \in \{1, 2, \dotsc, k\}$. 

\begin{observation} \label{ob:single_use_place}
For every $0 \leq i \leq n$, at most one element of $S_i$ uses every place $a \in \{1, 2, \dotsc, pk\}$.
\end{observation}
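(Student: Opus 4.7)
The plan is to prove the observation by induction on $i$, exploiting the fact that the ``uses place'' relation is in essence a deterministic chain of predecessors: every element $u_j \in S_i$ has a uniquely determined sequence of ancestors, because whenever an element entered the solution as a replacement (i.e., at a step $j > pk$), the element it replaced (namely $u'_j$) was uniquely determined by the algorithm. Thus, once we fix $u_j \in S_i$, the chain $j_1 = j, j_2, j_3, \dotsc$ obtained by repeatedly setting $u_{j_{t+1}} = u'_{j_t}$ is unique, and it terminates at a unique $j_m = a \in \{1, 2, \dotsc, pk\}$ (since the recursion ends exactly when $j_t \leq pk$). So every element of $S_i$ uses exactly one place, and it suffices to prove that the map sending each element of $S_i$ to the place it uses is injective.

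The base of the induction is $i = 0$, which is trivial because $S_0 = \varnothing$. For the inductive step, I would split into three cases. (i) If $i \le pk$, then $S_i = S_{i-1} + u_i$, the new element $u_i$ uses place $i$, and no element of $S_{i-1}$ can use place $i$ (their chains terminate at indices in $\{1, \dotsc, i-1\}$); injectivity on $S_{i-1}$ follows from the inductive hypothesis, so injectivity on $S_i$ follows. (ii) If $i > pk$ and $u_i$ is rejected, then $S_i = S_{i-1}$ and there is nothing to prove. (iii) If $i > pk$ and $u_i$ is accepted, then $S_i = S_{i-1} + u_i - u'_i$. By the definition of the chain, the place used by $u_i$ in $S_i$ is exactly the place used by $u'_i$ in $S_{i-1}$ (the chain for $u_i$ is obtained by prepending $i$ to the chain for $u'_i$); meanwhile every other element of $S_i \cap S_{i-1}$ uses in $S_i$ the same place it used in $S_{i-1}$. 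Therefore the multiset of places used by $S_i$ is obtained from that of $S_{i-1}$ by removing the (unique) place previously used by $u'_i$ and re-inserting the same place as the place of $u_i$; injectivity is preserved.

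The main subtle point, and what I expect to be the only real obstacle, is the bookkeeping around the chain definition: one must verify that the chain really is deterministic (so each element uses a single, well-defined place) and that the chain of $u_i$ in $S_i$ is obtained precisely by prepending $i$ to the chain of $u'_i$ in $S_{i-1}$. Once those two facts are in hand, the inductive preservation of injectivity is immediate, and the observation follows.
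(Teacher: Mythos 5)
Your proof is correct and follows essentially the same route as the paper's: handle $i \le pk$ directly (each $u_j$ uses place $j$) and, for $i > pk$, induct using the fact that $u_i$ inherits the place of $u'_i$ while $S_i = S_{i-1} + u_i - u'_i$, so injectivity of the element-to-place map is preserved. Your additional remarks on the well-definedness and finiteness of the replacement chain are fine but are exactly the bookkeeping the paper's terser proof leaves implicit.
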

\begin{proof}
The observation follows immediately from the definition when $i \leq pk$ because for every $1 \leq j \leq i$, the element $u_j$ uses place $j$. For larger values of $i$ we prove the observation by induction. Assume the claim holds for $i - 1 \geq pk$, and let us prove it for $i$. Since $u_i$ got into $S_i$, the place used by $u_i$ is defined as the place used by $u'_i$. As $S_i = S_{i - 1} + u_i - u'_i$, every place is used at most once also in $S_i$.
\end{proof}

\begin{corollary}
The sets $\{\bar{S}_i\}_{i = 0}^n$ form a feasible online solution for {\CM}.
\end{corollary}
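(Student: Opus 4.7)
The plan is to verify the two conditions that constitute feasibility for an online solution of {\CM}: (i) $|\bar{S}_i| \le k$ for every $0 \le i \le n$, so that each $\bar{S}_i$ satisfies the cardinality constraint; and (ii) $\bar{S}_i \subseteq \bar{S}_{i-1} + u_i$ for every $1 \le i \le n$, so the rounded sets can be produced by an online algorithm with preemption.

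For (i), I would appeal directly to Observation~\ref{ob:single_use_place}. By construction, $\bar{S}_i$ consists of exactly those elements of $S_i$ that use a place equal to $r_\ell$ for some $\ell \in \{1,\ldots,k\}$. Since each place $a \in \{1,\ldots,pk\}$ is used by at most one element of $S_i$, the map sending an element of $\bar{S}_i$ to the place it uses is injective into the size-$k$ set $\{r_1,\ldots,r_k\}$. Hence $|\bar{S}_i| \le k$.

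For (ii), I would fix $u_j \in \bar{S}_i$ with $j \ne i$ and show $u_j \in \bar{S}_{i-1}$. Since $\bar{S}_i \subseteq S_i \subseteq S_{i-1} + u_i$ and $u_j \ne u_i$, we have $u_j \in S_{i-1}$. The key point is that the place used by $u_j$ in $S_i$ equals the place used by $u_j$ in $S_{i-1}$: the defining sequence $j = j_1, j_2, \ldots, j_m$ consists of past indices at which $u_j$'s ``ancestor'' got replaced, and because $u_j$ itself was not replaced at step $i$, no new index is appended to this sequence when passing from $S_{i-1}$ to $S_i$. Consequently, if $u_j$ uses place $r_\ell$ in $S_i$, it already used place $r_\ell$ in $S_{i-1}$, so $u_j \in \bar{S}_{i-1}$.

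The main obstacle, as far as one exists, is just to pin down the invariance of the ``place'' of a surviving element across iterations; but this follows immediately from the recursive definition, since the chain of replacements attached to an element is frozen as soon as that element stops being replaced. Once this is noted, both parts of feasibility are immediate, and combined with the fact that the random choices $r_1,\ldots,r_k$ are made once at the start (and hence are consistent across all iterations), the sets $\{\bar{S}_i\}_{i=0}^n$ constitute a legitimate online solution for {\CM}.
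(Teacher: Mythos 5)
Your proposal is correct and takes essentially the same route as the paper: the cardinality bound follows from Observation~\ref{ob:single_use_place} together with the $k$ selected places, and the online condition $\bar{S}_i \subseteq \bar{S}_{i-1} + u_i$ follows from $S_i \subseteq S_{i-1} + u_i$ combined with the fact that the place used by a surviving element does not depend on $i$ (which the paper states as the membership rule being ``independent of $i$''). Your write-up merely makes this place-invariance explicit, which is a fine elaboration rather than a different argument.
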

\begin{proof}
Observation~\ref{ob:single_use_place} implies immediately that $|\bar{S}_i| \leq k$ since only $k$ places make elements of $S_i$ using them appear in $\bar{S}_i$. Observe also that the rule whether an element of $S_i$ gets into $\{\bar{S}_i\}$ is independent of $i$. Hence, for every $1 \leq i \leq k$, $\bar{S}_i \subseteq \bar{S}_{i - 1} + u_i$ simply because $S_i \subseteq S_{i - 1} + u_i$.
\end{proof}

Next, we want to show that $f(\bar{S}_n)$ is as good approximation for $f(OPT)$ as $g(S_n)$. This is done by the following lemma.
\begin{lemma}
$\mathbb{E}[f(\bar{S}_n)] \geq g(S_n)$.
\end{lemma}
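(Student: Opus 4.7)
The plan is to set up a hybrid argument that interpolates between the independent sampling hidden inside $g(S_n)$ and the actual correlated rounding that produces $\bar{S}_n$, and to verify that each step of the interpolation can only increase the expected value of $f$.

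First I would recall that $g(S_n) = F(p^{-1} \cdot S_n) = \mathbb{E}[f(R)]$, where $R$ is the random subset of $S_n$ containing each element independently with probability $1/p$. Then I would partition $S_n$ into blocks $T_1, \dotsc, T_k$, where $T_\ell$ collects the elements of $S_n$ whose used places lie in the $\ell$-th block $\{p(\ell-1)+1, \dotsc, p\ell\}$. Observation~\ref{ob:single_use_place} ensures $|T_\ell| \leq p$, and by construction $\bar{S}_n \cap T_\ell$ equals the (at most one) element of $T_\ell$ whose place equals $r_\ell$; since $r_1, \dotsc, r_k$ are chosen independently, the sets $\bar{S}_n \cap T_1, \dotsc, \bar{S}_n \cap T_k$ are mutually independent, and each element of $T_\ell$ lies in $\bar{S}_n$ with marginal probability $1/p$.

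Next, for $0 \leq \ell \leq k$ I would define a hybrid random set $X_\ell$ that coincides with $\bar{S}_n$ on blocks $1, \dotsc, \ell$ and with $R$ on blocks $\ell+1, \dotsc, k$, where the randomness driving the different blocks is independent. Then $X_0 = R$ and $X_k = \bar{S}_n$. The heart of the argument is a per-block comparison: conditioning on the realization $T$ of the samples in all blocks other than $\ell$, I would prove the pointwise inequality $\mathbb{E}[f(T \cup (\bar{S}_n \cap T_\ell))] \geq \mathbb{E}[f(T \cup (R \cap T_\ell))]$. A direct computation, using that $\bar{S}_n \cap T_\ell$ is empty with probability $1 - |T_\ell|/p$ and equals $\{u\}$ with probability $1/p$ for each $u \in T_\ell$, yields
\[
\mathbb{E}[f(T \cup (\bar{S}_n \cap T_\ell))] - f(T) = \frac{1}{p}\sum_{u \in T_\ell} f(u \mid T) \enspace.
\]
For the right-hand expression, the standard submodularity estimate $f(T \cup S) - f(T) \leq \sum_{u \in S} f(u \mid T)$, applied to the random set $S = R \cap T_\ell$ and followed by taking expectations, gives the same quantity as an upper bound, establishing the per-block inequality.

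Finally, chaining these inequalities along the hybrid sequence gives $\mathbb{E}[f(\bar{S}_n)] = \mathbb{E}[f(X_k)] \geq \mathbb{E}[f(X_{k-1})] \geq \dotsb \geq \mathbb{E}[f(X_0)] = \mathbb{E}[f(R)] = g(S_n)$. The main obstacle I anticipate is the bookkeeping required to argue that the randomness sources for different blocks are genuinely independent in both $R$ and the $\bar{S}_n$-rule, so that the hybrid swap is well-defined and that conditioning on the other blocks legitimately isolates a single block. Once this is set up, the proof collapses to a one-line invocation of submodularity.
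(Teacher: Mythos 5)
Your proposal is correct and follows essentially the same route as the paper: decompose $S_n$ into the $k$ place-blocks, note the rounded solution picks each block's element with marginal probability $1/p$ independently across blocks, and swap one block at a time from independent $p^{-1}$-sampling to the rounded choice, with the per-block comparison being a one-line submodularity estimate (the paper phrases it via telescoping marginals with an arbitrary outside set $D_\ell$, which is the same inequality you derive by conditioning on the other blocks). No gaps.
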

\begin{proof}
For every $1 \leq \ell \leq k$, let $R_\ell$ be the set of elements of $S_n$ using places from $\{p(\ell - 1) + 1, p(\ell - 1) + 2, p\ell\}$. By Observation~\ref{ob:single_use_place}, the size of $R_\ell$ is at most $p$. Let $\bar{R}_\ell$ be a set containing the single element of $R_\ell$ using the place $r_\ell$, or the empty set if no element uses this place. By definition, $S_n = \bigcup_{\ell = 1}^k \bar{R}_\ell$.

Recall also that given a vector $x \in [0, 1]^\NN$, $\RSet(x)$ is a random set containing every element $u \in R_i$ with probability $x_u$. Using this notation,
\[
	g(S_n)
	=
	\mathbb{E}[f(\RSet(p^{-1} \cdot S_n))]
	=
	\mathbb{E}\left[f\left(\bigcup_{\ell = 1}^k \RSet(p^{-1} \cdot R_\ell)\right)\right]
	\enspace.
\]
Hence, the lemma is equivalent to the inequality:
\begin{equation} \label{eq:equivalent_inequality}
	\mathbb{E}\left[f\left(\bigcup_{\ell = 1}^k \bar{R}_\ell\right)\right]
	\geq
	\mathbb{E}\left[f\left(\bigcup_{\ell = 1}^k \RSet(p^{-1} \cdot R_\ell)\right)\right]
	\enspace.
\end{equation}

As a step toward proving Inequality~\eqref{eq:equivalent_inequality}, fix $1 \leq \ell \leq k$, and let $D_\ell$ be a random subset of $\NN \setminus R_\ell$ with an arbitrary distribution. We use the notation $v_1, v_2, \dotsc, v_{p'}$ to denote the elements of $R_\ell$ ($p' = |R_\ell| \leq p$), and $R_{\ell, i}$ to denote the set $\{v_1, v_2, \dotsc, v_i\} \subseteq R_\ell$. By submodularity,
\begin{align*}
	\inConference{&}
	\mathbb{E}\left[f\left(D_\ell \cup \bar{R}_\ell\right)\right]
	=\inArxiv{{} &}
	\mathbb{E}[f(D_\ell)] + \frac{\sum_{i = 1}^{p'} \mathbb{E}[f(v_i \mid D_\ell)]}{p} \\
	\geq{} &
	\mathbb{E}[f(D_\ell)] + \frac{\sum_{i = 1}^{p'} \mathbb{E}[f(v_i \mid D_\ell \cup \RSet(p^{-1} \cdot R_{\ell, i - 1})]}{p}\inConference{\\}
	=\inConference{{} &}
	\mathbb{E}\left[f\left(D_\ell \cup \RSet(p^{-1} \cdot R_\ell)\right)\right]
	\enspace.
\end{align*}
Inequality~\eqref{eq:equivalent_inequality} follows by repeated applications of the above inequality, once for every $1 \leq \ell \leq k$.
\end{proof}

The first part of Theorem~\ref{th:cardinality_positive} follows by combining the modified Algorithm~\ref{alg:GreedyWithThreshold} with the rounding method described above.


\subsection{Hardness results for {\CM}\inConference{.}} \label{ssc:cardinality_negative}

In this section we prove Theorem~\ref{th:cardinality_negative}. The proof of the theorem is split between two lemmata. Lemma~\ref{le:cardinality_negative_monotone} proves the part of Theorem~\ref{th:cardinality_negative} referring to monotone objectives, and Lemma~\ref{le:cardinality_negative_general} proves the part referring to general objective functions.

\begin{lemma} \label{le:cardinality_negative_monotone}
No deterministic (randomized) online algorithm has a competitive ratio better than $1/2 + \ee$ ($3/4 + \ee$) for {\CM}, even if the objective function is restricted to be monotone.
\end{lemma}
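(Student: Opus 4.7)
The lemma combines two independent hardness statements (deterministic $1/2+\ee$ and randomized $3/4+\ee$, both for monotone objectives). I would prove each by exhibiting an appropriate adversarial instance, using coverage functions so that monotonicity and submodularity are automatic, and then bounding the algorithm's attainable value in terms of the true optimum over all revealed elements.

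For the deterministic bound I would build an adaptive adversary whose next reveal depends on the algorithm's current solution. The skeleton is a two-phase reveal: phase one consists of $k$ elements with disjoint singleton supports, forcing any $(1/2+\ee)$-competitive algorithm to retain essentially all of them; phase two then reveals a sequence of elements that each cover a single phase-one color plus one fresh color, so that accepting any one of them necessarily preempts one of the first-phase elements. Because the algorithm is deterministic, after each reveal the adversary knows exactly which element was dropped, and uses this to craft the next reveal so that the just-preempted element would have belonged to the new optimum. Iterating $\Theta(k)$ such steps amplifies the per-step loss until the algorithm's final solution has value at most $(1/2+\ee) \cdot f(OPT)$.

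For the randomized bound I would invoke Yao's principle, giving a distribution over two instances that share a common, indistinguishable prefix (i.e., value queries on any subset of the prefix return the same answer in both). After the prefix, the \emph{short} instance stops (so its optimum is a set built from the prefix), while the \emph{long} instance reveals additional elements whose joint optimum is essentially disjoint from the prefix. A deterministic algorithm, at the moment the two instances diverge, is already committed to a solution; it either retains approximately the prefix set (good for the short instance, bad for the long) or has preempted much of it (good for the long, bad for the short). Optimizing the algorithm's mixed strategy against the uniform prior on the two instances, the best achievable expected ratio is $3/4$.

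The hardest part of both constructions is that submodularity tightly couples the values of all subsets: the marginal of any element over a larger set is no greater than its marginal over a smaller one, so a preempted element cannot be revived into an essential contributor by some future arrival. Consequently the adversary cannot "punish" the algorithm through a single killer element but must engineer a cumulative loss via carefully chosen overlap patterns in a coverage function. Verifying that this cumulative loss exactly matches the tight thresholds $1/2$ and $3/4$ (rather than a weaker bound such as $1/e$), while the intended $f$ remains a legitimate coverage function on all revealed elements, is the technical crux.
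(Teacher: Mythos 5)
Both halves of your plan miss the mechanism that actually drives these bounds, and as described neither would reach the stated constants. For the deterministic part: with only $k$ mutually symmetric singletons in phase one, the algorithm keeps all of them and makes no choice the adversary can exploit, so all the pressure must come from phase two; but there, an algorithm that accepts a phase-two element covering $\{i,i'\}$ by preempting the now-redundant singleton $i$ loses nothing, the preempted element is never again useful to any optimum of a coverage function (its point stays covered), and your claimed per-step loss does not accumulate --- against the instances your description determines, a greedy swap rule stays well above half of the optimum. The paper's proof instead reveals $2k$ \emph{indistinguishable} unit-value elements of which the algorithm can hold only $k$, and then one final element that devalues exactly what was kept: $\NN=\{u_i\}_{i=1}^k\cup\{v_i\}_{i=1}^k\cup\{w\}$ with $f(S)=|S\cap\{u_i\}_{i=1}^k|+\min\{k,\,|S\cap\{v_i\}_{i=1}^k|+k\cdot|S\cap\{w\}|\}$ and $w$ revealed last; for a deterministic algorithm the adversary orders the first $2k$ elements so that $S_{2k}\subseteq\{v_i\}_{i=1}^k$, whence the final value is at most $k$ while $\{u_i\}_{i=1}^{k-1}\cup\{w\}$ has value $2k-1$. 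This ``whatever you kept becomes worthless'' step, forced by the cardinality bound together with indistinguishability, is the missing idea.

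For the randomized part, your two-instance Yao argument breaks precisely because preemption is free in this model: an algorithm that retains the prefix is not committed against the long instance, since when the later, prefix-disjoint elements arrive it can simply swap them in; thus it can be simultaneously near-optimal on both branches and no $3/4$ dichotomy arises. The paper obtains $3/4+\ee$ from the same single instance as above: the $2k$ elements $\{u_i\}\cup\{v_i\}$ are symmetric until $w$ arrives, so for a suitable revelation order any algorithm holds in expectation at most $k/2$ of the $u_i$'s in $S_{2k}$, its final value is at most $3k/2$, and the optimum is $2k-1$. (A side remark: $1/2$ and $3/4$ are not claimed to be tight --- the paper's algorithms achieve $1/4$ --- so there is no ``exact matching'' of tight thresholds to verify.)
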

\begin{proof}
Let $k = \lceil (2\ee)^{-1} \rceil$ and consider the ground set $\NN = \{u_i\}_{i = 1}^k \cup \{v_i\}_{i = 1}^k \cup \{w\}$ and the monotone submodular function $f : 2^\NN \rightarrow \mathbb{R}^+$ defined as follows:
\[
	f(S)
	=
	|S \cap \{u_i\}_{i = 1}^k| + \min\{k, |S \cap \{v_i\}_{i = 1}^k| + k \cdot |S \cap \{w\}|\}
	\enspace.
\]

Intuitively, $f$ gains a ``point'' for every element of the forms $u_i$ and $v_i$. The element $w$ gives $k$ ``points'', but these are the same points of the elements of the form $v_i$, \ie, $v_i$ gives no ``points'' once $w$ is in the solution.

Fix an algorithm $ALG$, and set the element $w$ to be revealed after all the other elements of $\NN$. Observe that as long as $w$ is not revealed, there is no way for $ALG$ to distinguish between the elements of $\{u_i\}_{i = 1}^k$ and $\{v_i\}_{i = 1}^k$. Thus, if $ALG$ is deterministic, the set $S_{2k}$ is determined solely based on the order in which the elements are revealed. By choosing the right order, one can guarantee that $S_{2k} \subseteq \{v_i\}_{i = 1}^k$, which implies that $S_{2k + 1}$ is a subset of $\{v_i\}_{i = 1}^k \cup \{w\}$, and thus, has a value of at most $k$. On the other hand, the optimal solution is $\{u_i\}_{i = 1}^{k - 1} \cup \{w\}$, and its value is $2k - 1$. Hence, the competitive ratio of $ALG$ is at most:
\[
	\frac{k}{2k - 1}
	\leq
	\frac{k + 1}{2k}
	\leq
	\frac{1}{2} + \ee
	\enspace.
\]

If $ALG$ is randomized, $S_{2k}$ depends also on the random choices of $ALG$. However, by symmetry, it is still possible to choose a revelation order for the elements of $\{u_i\}_{i = 1}^k$ and $\{v_i\}_{i = 1}^k$ guaranteeing that in expectation $S_{2k}$ contains no more than $k/2$ elements of $\{u_i\}_{i = 1}^k$. Thus, the expected value of $f(S_{2k + 1})$ can be upper bounded by $3k/2$, and the competitive ratio of $ALG$ is no more than:
\[
	\frac{3k/2}{2k - 1}
	\leq
	\frac{3k/2 + 1}{2k}
	\leq
	\frac{3}{4} + \ee
	\enspace.
	\qedhere
\]
\end{proof}

\begin{lemma} \label{le:cardinality_negative_general}
No online algorithm has a competitive ratio better than $1/2 + \ee$ for {\CM}.
\end{lemma}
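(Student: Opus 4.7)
The plan is to apply Yao's minimax principle: we exhibit a distribution over hard instances on which every deterministic online algorithm has expected competitive ratio at most $1/2 + \ee$, which then yields the same bound against randomized algorithms. Take $k = \lceil \ee^{-1} \rceil$ and let the ground set be $\NN = \{u_i\}_{i=1}^{2k} \cup \{w\}$. Choose a uniformly random partition $\NN \setminus \{w\} = U \cup V$ with $|U| = |V| = k$, reveal the $2k$ elements of $\NN \setminus \{w\}$ in a uniformly random order, and reveal $w$ last. The submodular function $f$ is designed to have two key properties: (i) for every $S \subseteq \NN \setminus \{w\}$ the value $f(S)$ depends only on $|S|$ (say $f(S) = |S|$), so that value queries issued before $w$ is revealed provide no information distinguishing $U$ from $V$; and (ii) once $w$ is added, the function penalizes $V$-elements sharply enough that every $S$ with $|S| \le k$ containing a constant fraction of $V$-elements satisfies $f(S) \le k + o(k)$, while $f((U \setminus \{u\}) \cup \{w\}) \ge 2k - o(k)$.

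By (i) and the symmetry of the random partition, for any deterministic algorithm we have $\mathbb{E}[|S_{2k} \cap U|] = |S_{2k}|/2 \le k/2$, and hence $\mathbb{E}[|S_{2k} \cap V|] \ge |S_{2k}|/2$. After $w$ is revealed, the algorithm outputs $S_{2k+1} \subseteq S_{2k} \cup \{w\}$ with $|S_{2k+1}| \le k$. If it excludes $w$, then by (i) $f(S_{2k+1}) = |S_{2k+1}| \le k$; if it includes $w$, then by (ii), because the algorithm inherits a constant fraction of $V$-elements in expectation, its value is at most $k + o(k)$. In either branch $\mathbb{E}[f(S_{2k+1})] \le k + o(k)$, whereas the optimum $(U \setminus \{u\}) \cup \{w\}$ gives $f(OPT) \ge 2k - o(k)$, and the ratio is at most $1/2 + \ee$ for $k$ sufficiently large.

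The main obstacle is constructing such an $f$ satisfying (ii) while remaining non-negative and submodular. A linear-penalty construction of the form $f(S \cup \{w\}) = |S \cap U| + k - c\,|S \cap V|$ only reproduces the monotone $3/4$ bound, because the $+k$ bonus from $w$ can be essentially recovered by the algorithm after preempting $V$-elements; this is what happened in Lemma~\ref{le:cardinality_negative_monotone}. Pushing the bound down to $1/2$ therefore requires a nonlinear, saturation-style interaction of $w$ with $|S \cap V|$: the bonus must essentially vanish once $|S \cap V|$ exceeds a small constant, so that an algorithm which cannot distinguish $U$ from $V$ before $w$ is revealed cannot recover the bonus merely by dropping a few elements. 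Verifying submodularity and non-negativity for such a construction, and then making (ii) quantitative enough to obtain the sharper $k + o(k)$ bound on the algorithm's value, is the technical heart of the proof.
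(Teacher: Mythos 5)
There is a genuine gap, and it is not merely that you defer the construction of $f$: with your parameters no such construction exists. You take $|U|=|V|=k$, so by your own symmetry argument an algorithm that simply keeps $k$ of the $2k$ indistinguishable elements holds, in expectation, $k/2$ elements of $U$. Once $w$ is revealed the algorithm may query $f$ on sets of revealed elements containing $w$, so it can identify which of its held elements lie in $V$ and \emph{preempt exactly those}, ending with a set $T\cup\{w\}$ where $T\subseteq U$. Your property (ii) only penalizes solutions that still contain $V$-elements, so it never applies to this algorithm. Moreover, submodularity together with property (i) kills any hope of a ``saturation'' penalty acting through the $U$-elements: writing $U'=U\setminus\{u\}$ and adding the elements of $U'\setminus T$ one by one, each marginal with respect to a set containing $w$ is at most the marginal with respect to the corresponding $w$-free set, which equals $1$ by (i); hence $f(U'\cup\{w\})\le f(T\cup\{w\})+|U'\setminus T|$, i.e.\ $f(T\cup\{w\})\ge f(U'\cup\{w\})-(k-1-|T|)\ge \tfrac{3}{2}k-o(k)$ when $|T|\approx k/2$ and $f(U'\cup\{w\})\ge 2k-o(k)$. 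So on your distribution every reasonable algorithm gets expected value about $\tfrac{3}{2}k$ against an optimum of about $2k$, and the best hardness you can extract is $3/4$ --- exactly the monotone randomized bound of Lemma~\ref{le:cardinality_negative_monotone}, not $1/2$.

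The paper's proof avoids this not by a cleverer function but by changing the cardinalities: it takes $h=2k\cdot\lceil \ee^{-1}\rceil\gg k$ decoy elements $v_i$, so that an algorithm with capacity $k$ retains in expectation only $k^2/(k+h)=o(k)$ of the $u_i$'s before $w$ arrives; the $u_i$'s it discarded cannot be recovered, preemption or not. With that imbalance the completely elementary function $f(S)=|S|$ if $w\notin S$ and $f(S)=k+|S\cap\{u_i\}_{i=1}^k|$ otherwise (non-negative and submodular, with no saturation term needed) already gives algorithm value at most $k+k^2/(k+h)$ versus optimum $2k-1$, yielding $1/2+\ee$. Your Yao-style random partition and random order is essentially the same symmetry argument the paper makes by choosing the revelation order adversarially, so the fix is simply to enlarge $V$ rather than to search for a more exotic objective.
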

\begin{proof}
Intuitively, the proof of Lemma~\ref{le:cardinality_negative_general} provides a weaker bound for randomized algorithms because the number of elements of the form $v_i$ in the ground set is quite small. In this proof we fix this problem. Let $k = \lceil \ee^{-1} \rceil$ and $h = 2k \cdot \lceil \ee^{-1} \rceil$. Consider the ground set $\NN = \{u_i\}_{i = 1}^k \cup \{v_i\}_{i = 1}^h \cup \{w\}$ and the non-negative submodular function $f : 2^\NN \rightarrow \mathbb{R}^+$ defined as follows:
\[
	f(S)
	=
	\begin{cases}
		|S| & \text{if $w \not \in S$} \enspace,\\
		k + |S \cap \{u_i\}_{i = 1}^k|  & \text{otherwise} \enspace.
	\end{cases}
\]

Fix an algorithm $ALG$, and set the element $w$ to be revealed after all the other elements of $\NN$. Observe that as long as $w$ is not revealed, there is no way for $ALG$ to distinguish between the elements of $\{u_i\}_{i = 1}^k$ and $\{v_i\}_{i = 1}^h$. Hence, by symmetry, it is possible to choose a revelation order for these elements guaranteeing that in expectation $S_{k + h}$ contains no more than $k^2/(k + h)$ elements of $\{u_i\}_{i = 1}^k$. The final solution of $ALG$ is a subset of $S_{k + h} + w$. Since the size of $S_{k + h}$ is at most $k$, no element of $S_{k + h} + w$ has a negative marginal contribution, and thus, the value of $ALG$'s solution can be upper bounded by $f(S_{k + h} + w)$. Hence, the total expected value $ALG$ achieves is at most:
\[
	\mathbb{E}[f(S_{k + h} + w)]
	=
	k + \mathbb{E}[|S_{k + h} \cap \{u_i\}_{i = 1}^k|]
	\leq
	k + \frac{k^2}{k + h}
	\enspace.
\]

On the other hand, the optimal solution is $\{u_i\}_{i = 1}^{k - 1} \cup \{w\}$, and its value is $2k - 1$. Hence, the competitive ratio of $ALG$ is at most:
\[
	\frac{k + \frac{k^2}{k + h}}{2k - 1}
	\leq
	\frac{k + 1}{2k} + \frac{k}{k + h}
	\leq
	\frac{1}{2} + \ee
	\enspace.
	\qedhere
\]
\end{proof}

\apptocmd{\sloppy}{\hbadness 10000\relax}{}{}
\bibliographystyle{plain}
\bibliography{submodular}

\begin{thebibliography}{10}

\bibitem{AA99}
Ran Adler and Yossi Azar.
\newblock Beating the logarithmic lower bound: Randomized preemptive disjoint
  paths and call control algorithms.
\newblock In {\em SODA}, pages 1--10, 1999.

\bibitem{AS99}
A.~A. Ageev and M.~I. Sviridenko.
\newblock An 0.828 approximation algorithm for the uncapacitated facility
  location problem.
\newblock {\em Discrete Appl. Math.}, 93:149--156, July 1999.

\bibitem{AGR11}
Yossi Azar, Iftah Gamzu, and Ran Roth.
\newblock Submodular max-sat.
\newblock In {\em ESA}, pages 323--334, 2011.

\bibitem{BHK08}
Moshe Babaioff, Jason~D. Hartline, and Robert~D. Kleinberg.
\newblock Online algorithms with buyback.
\newblock In {\em Workshop on Ad Auctions}, 2008.

\bibitem{BHK09}
Moshe Babaioff, Jason~D. Hartline, and Robert~D. Kleinberg.
\newblock Selling ad campaigns: online algorithms with cancellations.
\newblock In {\em EC}, pages 61--70, 2009.

\bibitem{BL12}
Amotz Bar-Noy and Michael Lampis.
\newblock Online maximum directed cut.
\newblock {\em Journal of Combinatorial Optimization}, 24(1):52--64, 2012.

\bibitem{BHZ10}
Mohammad~Hossein Bateni, Mohammad~Taghi Hajiaghayi, and Morteza Zadimoghaddam.
\newblock Submodular secretary problem and extensions.
\newblock In Maria Serna, Ronen Shaltiel, Klaus Jansen, and Jos\'{e} Rolim,
  editors, {\em Approximation, Randomization, and Combinatorial Optimization.
  Algorithms and Techniques}, Lecture Notes in Computer Science, pages 39--52.
  Springer Berlin / Heidelberg, 2010.

\bibitem{BFNS12}
Niv Buchbinder, Moran Feldman, Joseph~(Seffi) Naor, and Roy Schwartz.
\newblock A tight linear time (1/2)-approximation for unconstrained submodular
  maximization, 2012.
\newblock FOCS 2012.

\bibitem{CCPV11}
Gruia Calinescu, Chandra Chekuri, Martin Pal, and Jan Vondr\'{a}k.
\newblock Maximizing a monotone submodular function subject to a matroid
  constraint.
\newblock {\em SIAM Journal on Computing}, 40(6):1740--1766, 2011.

\bibitem{CK13}
Amit Chakrabarti and Sagar Kale.
\newblock Submodular maximization meets streaming: Matchings, matroids, and
  more.
\newblock {\em CoRR}, abs/1309.2038, 2013.

\bibitem{CK05}
Chandra Chekuri and Sanjeev Khanna.
\newblock A polynomial time approximation scheme for the multiple knapsack
  problem.
\newblock {\em SIAM J. Comput.}, 35(3):713--728, September 2005.

\bibitem{CKR06}
Reuven Cohen, Liran Katzir, and Danny Raz.
\newblock An efficient approximation for the generalized assignment problem.
\newblock {\em Information Processing Letters}, 100(4):162--166, 2006.

\bibitem{CFMP09}
Florin Constantin, Jon Feldman, S.~Muthukrishnan, and Martin P{\'a}l.
\newblock An online mechanism for ad slot reservations with cancellations.
\newblock In {\em SODA}, pages 1265--1274, 2009.

\bibitem{CFN77a}
G.~Cornuejols, M.~L. Fisher, and G.~L. Nemhauser.
\newblock Location of bank accounts to optimize float: an analytic study of
  exact and approximate algorithms.
\newblock {\em Management Sciences}, 23:789--810, 1977.

\bibitem{CFN77b}
G.~Cornuejols, M.~L. Fisher, and G.~L. Nemhauser.
\newblock On the uncapacitated location problem.
\newblock {\em Annals of Discrete Mathematics}, 1:163--177, 1977.

\bibitem{DRS09}
Shaddin Dughmi, Tim Roughgarden, and Mukund Sundararajan.
\newblock Revenue submodularity.
\newblock In {\em EC}, pages 243--252, 2009.

\bibitem{EW06}
Matthias Englert and Matthias Westermann.
\newblock Lower and upper bounds on fifo buffer management in {Q}o{S} switches.
\newblock {\em Algorithmica}, 53(4):523--548, 2009.

\bibitem{ELSW13}
Leah Epstein, Asaf Levin, Danny Segev, and Oren Weimann.
\newblock Improved bounds for online preemptive matching.
\newblock In {\em STACS}, pages 389--399, 2013.

\bibitem{F98}
Uriel Feige.
\newblock A threshold of $\ln n$ for approximating set cover.
\newblock {\em J. ACM}, 45(4):634–--652, 1998.

\bibitem{FG95}
Uriel Feige and Michel~X. Goemans.
\newblock Aproximating the value of two prover proof systems, with applications
  to max 2sat and max dicut.
\newblock In {\em ISTCS}, pages 182--189, 1995.

\bibitem{FJ10}
Uriel Feige and Shlomo Jozeph.
\newblock Oblivious algorithms for the maximum directed cut problem.
\newblock {\em CoRR}, abs/1010.0406, 2010.

\bibitem{FJ13}
Uriel Feige and Shlomo Jozeph.
\newblock Oblivious algorithms for the maximum directed cut problem.
\newblock {\em Algorithmica}, pages 1--20, 2013.

\bibitem{FMV11}
Uriel Feige, Vahab~S. Mirrokni, and Jan Vondr\'{a}k.
\newblock Maximizing non-monotone submodular functions.
\newblock {\em SIAM Journal on Computing}, 40(4):1133--1153, 2011.

\bibitem{FV06}
Uriel Feige and Jan Vondr\'{a}k.
\newblock Approximation algorithms for allocation problems: Improving the
  factor of $1 - 1/e$.
\newblock In {\em FOCS}, pages 667--–676, 2006.

\bibitem{BFNS14}
Moran Feldman, Joseph~(Seffi) Naor, and Roy Schwartz.
\newblock Submodular maximization with cardinality constraints.

\bibitem{FNS11c}
Moran Feldman, Joseph~(Seffi) Naor, and Roy Schwartz.
\newblock Improved competitive ratios for submodular secretary problems.
\newblock In {\em APPROX}, pages 218--229, 2011.

\bibitem{FNS11b}
Moran Feldman, Joseph~(Seffi) Naor, and Roy Schwartz.
\newblock Nonmonotone submodular maximization via a structural continuous
  greedy algorithm.
\newblock In {\em ICALP}, pages 342--353, 2011.

\bibitem{FNS11}
Moran Feldman, Joseph~(Seffi) Naor, and Roy Schwartz.
\newblock A unified continuous greedy algorithm for submodular maximization.
\newblock In {\em FOCS}, 2011.

\bibitem{FGMS06}
Lisa Fleischer, Michel~X. Goemans, Vahab~S. Mirrokni, and Maxim Sviridenko.
\newblock Tight approximation algorithms for maximum general assignment
  problems.
\newblock In {\em SODA}, pages 611--–620, 2006.

\bibitem{F14}
Stanley~P.Y. Fung.
\newblock Online scheduling with preemption or non-completion penalties.
\newblock {\em Journal of Scheduling}, 17(2):173--183, 2014.

\bibitem{GV11}
Shayan~Oveis Gharan and Jan Vondr\'{a}k.
\newblock Submodular maximization by simulated annealing.
\newblock In {\em SODA}, pages 1098--1117, 2011.

\bibitem{GW95}
Michel~X. Goemans and David~P. Williamson.
\newblock Improved approximation algorithms for maximum cut and satisfiability
  problems using semidefinite programming.
\newblock {\em Journal of the ACM}, 42(6):1115--1145, 1995.

\bibitem{GRST10}
Anupam Gupta, Aaron Roth, Grant Schoenebeck, and Kunal Talwar.
\newblock Constrained non-monotone submodular maximization: offline and
  secretary algorithms.
\newblock In {\em WINE}, pages 246--257. Springer-Verlag, 2010.

\bibitem{HZ01}
Eran Halperin and Uri Zwick.
\newblock Combinatorial approximation algorithms for the maximum directed cut
  problem.
\newblock In {\em SODA}, pages 1--7, 2001.

\bibitem{HKM13}
Xin Han, Yasushi Kawase, and Kazuhisa Makino.
\newblock Online unweighted knapsack problem with removal cost.
\newblock {\em Algorithmica}, pages 1--16, 2013.

\bibitem{H01}
Johan H{\r{a}}stad.
\newblock Some optimal inapproximability results.
\newblock {\em J. ACM}, 48:798--859, July 2001.

\bibitem{HB13}
Norman Huang and Allan Borodin.
\newblock Bounds on double-sided myopic algorithms for unconstrained
  non-monotone submodular maximization.
\newblock {\em CoRR}, abs/1312.2173, 2013.

\bibitem{IT02}
Kazuo Iwama and Shiro Taketomi.
\newblock Removable online knapsack problems.
\newblock In {\em ICALP}, pages 293--305, 2002.

\bibitem{K72}
Richard~M. Karp.
\newblock Reducibility among combinatorial problems.
\newblock In R.~E. Miller and J.~W. Thatcher, editors, {\em Complexity of
  Computer Computations}, pages 85--103. Plenum Press, 1972.

\bibitem{KKMO07}
Subhash Khot, Guy Kindler, Elchanan Mossel, and Ryan O'Donnell.
\newblock Optimal inapproximability results for max-cut and other 2-variable
  csps?
\newblock {\em SIAM J. Comput.}, 37:319--357, April 2007.

\bibitem{KSG08}
Andreas Krause, AjitSingh, and Carlos Guestrin.
\newblock Near-optimal sensor placements in gaussian processes: Theory,
  efficient algorithms and empirical studies.
\newblock {\em J. Mach. Learn. Res.}, 9:235--284, January 2008.

\bibitem{KG05}
Andreas Krause and Carlos Guestrin.
\newblock Near-optimal nonmyopic value of information in graphical models.
\newblock In {\em UAI}, page~5, 2005.

\bibitem{KG07}
Andreas Krause and Carlos Guestrin.
\newblock Near-optimal observation selection using submodular functions.
\newblock In {\em AAAI}, pages 1650--1654. AAAI Press, 2007.

\bibitem{KLGVF08}
Andreas Krause, Jure Leskovec, Carlos Guestrin, Jeanne VanBriesen, and Christos
  Faloutsos.
\newblock Efficient sensor placement optimization for securing large water
  distribution networks.
\newblock {\em Journal of Water Resources Planning and Management},
  134(6):516--526, November 2008.

\bibitem{LLZ02}
Michael Lewin, Dror Livnat, and Uri Zwick.
\newblock Improved rounding techniques for the max 2-sat and max di-cut
  problems.
\newblock In {\em IPCO}, pages 67--82, 2002.

\bibitem{LB10}
Hui Lin and Jeff Bilmes.
\newblock Multi-document summarization via budgeted maximization of submodular
  functions.
\newblock In {\em North American chapter of the Association for Computational
  Linguistics/Human Language Technology Conference (NAACL/HLT-2010)}, Los
  Angeles, CA, June 2010.

\bibitem{LB11}
Hui Lin and Jeff Bilmes.
\newblock A class of submodular functions for document summarization.
\newblock In {\em HLT}, pages 510--520, 2011.

\bibitem{MM01}
Shiro Matuura and Tomomi Matsui.
\newblock 0.863-approximation algorithm for max dicut.
\newblock In {\em APPROX}, pages 138--146, 2001.

\bibitem{MOO05}
Elchanan Mossel, Ryan O'Donnell, and Krzysztof Oleszkiewicz.
\newblock Noise stability of functions with low influences: invariance and
  optimality.
\newblock In {\em FOCS}, pages 21--30, 2005.

\bibitem{NW78}
G.~L. Nemhauser and L.~A. Wolsey.
\newblock Best algorithms for approximating the maximum of a submodular set
  function.
\newblock {\em Mathematics of Operations Research}, 3(3):177--188, 1978.

\bibitem{NWF78}
G.~L. Nemhauser, L.~A. Wolsey, and M.~L. Fisher.
\newblock An analysis of approximations for maximizing submodular set functions
  -- {I}.
\newblock {\em Mathematical Programming}, 14:265--294, 1978.

\bibitem{SU13}
Andreas~S. Schulz and Nelson~A. Uhan.
\newblock Approximating the least core value and least core of cooperative
  games with supermodular costs.
\newblock {\em Discrete Optimization}, 10(2):163--180, 2013.

\bibitem{TSSW00}
Luca Trevisan, Gregory~B. Sorkin, Madhu Sudan, and David~P. Williamson.
\newblock Gadgets, approximation, and linear programming.
\newblock {\em SIAM J. Comput.}, 29:2074--2097, April 2000.

\bibitem{A11}
Ashwinkumar~Badanidiyuru Varadaraja.
\newblock Buyback problem - approximate matroid intersection with cancellation
  costs.
\newblock In Luca Aceto, Monika Henzinger, and Ji\v{r}\'{\i} Sgall, editors,
  {\em Automata, Languages and Programming}, volume 6755 of {\em Lecture Notes
  in Computer Science}, pages 379--390. Springer Berlin Heidelberg, 2011.

\bibitem{AK09}
Ashwinkumar~Badanidiyuru Varadaraja and Robert Kleinberg.
\newblock Randomized online algorithms for the buyback problem.
\newblock In {\em WINE}, pages 529--536, 2009.

\bibitem{V13}
Jan Vondr{\'a}k.
\newblock Symmetry and approximability of submodular maximization problems.
\newblock {\em SIAM J. Comput.}, 42(1):265--304, 2013.

\end{thebibliography}

\appendix
\section{Proof of Theorem~\ref{th:unconstrained_no_preemption}} \label{sec:unconstrained_no_preemption}

In this section we prove Theorem~\ref{th:unconstrained_no_preemption}. Namely, we need to show that no algorithm using no preemption is $(1/4 + \ee)$-competitive for any constant $\ee > 0$ even when $f$ is guaranteed to be a cut function of a weighted directed graph and there is no constraint on the set of elements that can be picked. Assume for the sake of contradiction that there exists such an algorithm $ALG$ for some, fixed, $\ee > 0$. We design an adversary that constructs an instance for $ALG$ that leads to a contradiction. The adversary uses a ground set $\NN$ with $n = n(\ee)$ elements (where $n(\ee)$ is function that depends only on $\ee$ and is described later).

Each time an element $u_i$ is revealed the adversary uses the properties of $ALG$ to choose the weighted set $E_i$ of arcs that leaves $u_i$. The objective function $f$ is defined as the weighted cut function of the graph $(\NN, \bigcup_{i = 1}^n E_i)$. Notice that any query of $f$ made by $ALG$ before $u_i$ is revealed does not depend on $E_i$, and thus, the behavior of $ALG$ up to the point when $u_i$ is revealed can be described before $E_i, E_{i + 1}, \dotsc, E_n$ are determined. An exact description of the adversary is given as Algorithm~\ref{alg:Adversary}. For ease of notation, we denote by $\NN_i$ the set $\{u_j \mid 1 \leq j \leq i\}$ for every $0 \leq i \leq n$.

\SetKwData{Mode}{Mode}
\begin{algorithm*}[h!t]
\caption{\textsf{Adversary for $ALG$}} \label{alg:Adversary}
Let $m \gets 1 + 4 / \ee$, $n_0 \gets \lceil 8/\ee \rceil$ and $n = \lceil e^{8/\ee} \cdot n_0\rceil$. \\
Let $\Mode \gets A$.

\BlankLine

\For{$i$ = $1$ to $n$}
{
	\If{$\Mode = A$}
	{
		Reveal element $u_i$ with arcs of weight $m^i$ going to all nodes revealed so far.\\
	}
	\Else(\tcp*[h]{If $\Mode = B$})
	{
		Reveal element $u_i$ with no exiting arcs.
	}
	
	\BlankLine
	
	\For{every set $S \subseteq V_{i - 1}$}
	{
		Let $A_{i, S}$ be the event that the algorithm accepted exactly the elements of $S$ before $u_i$ is revealed.\\
		Let $p_{i, S}$ be the probability of $A_{i, S}$.\\
		Let $q_{i, S}$ be the probability $u_i$ is accepted given $A_{i, S}$.
	}
	\If{$i \geq n_0$ and $\sum_{S \subseteq N_{i - 1}} \left[p_{i, S} \cdot q_{i, S} \cdot \left(1 - \frac{|S|}{i - 1}\right)\right] < 1/4 + \ee/4$}
	{
		$\Mode \gets B$.\\
	}
}
\end{algorithm*}

Observe that Algorithm~\ref{alg:Adversary} operates in two modes. In mode $A$ every revealed element has an arc going to every previously revealed element. In mode $B$ the revealed elements have no exiting arcs (\ie, their output degrees are $0$). The adversary starts in mode $A$ and then switches permanently to mode $B$ when some condition holds for the first time.

\begin{lemma}
The adversary given by Algorithm~\ref{alg:Adversary} never switches to mode $B$.
\end{lemma}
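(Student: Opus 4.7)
The plan is to derive a contradiction by assuming the adversary switches to mode $B$ at some step $i^*$, and then showing that on the resulting instance the algorithm fails to achieve competitive ratio $1/4 + \ee$.

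First I would unpack what the instance looks like. If the switch is triggered at step $i^*$, then steps $1,\dots,i^*$ are performed in mode $A$, so each $u_i$ with $i \leq i^*$ has outgoing arcs of weight $m^i$ to every $u_k$ with $k<i$; the elements $u_{i^*+1},\dots,u_n$ are revealed in mode $B$ with no arcs at all. Since $ALG$ uses no preemption, membership of $u_i$ in the final set $S_{\mathrm{final}}$ is decided once and for all at step $i$. The contribution of $u_i$ (for $i\leq i^*$) to $f(S_{\mathrm{final}})$, conditioned on the event $A_{i,S}$ that the accepted set before step $i$ equals $S$, is $m^i(i-1-|S|)$ if $u_i$ is accepted and $0$ otherwise. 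Summing over $S$ and $i$ gives
\[
	\mathbb{E}[f(S_{\mathrm{final}})] \;=\; \sum_{i=1}^{i^*} m^i(i-1)\,\alpha_i, \qquad \alpha_i:=\sum_{S\subseteq \NN_{i-1}} p_{i,S}\,q_{i,S}\!\left(1-\tfrac{|S|}{i-1}\right).
\]

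Next I would apply the only two bounds available on the $\alpha_i$: the trivial bound $\alpha_i\leq 1$ for $i<i^*$, together with the switching criterion $\alpha_{i^*}<1/4+\ee/4$ at the triggering step. Plugging these in,
\[
	\mathbb{E}[f(S_{\mathrm{final}})]\;\leq\; \sum_{i=1}^{i^*-1} m^i(i-1) \;+\; (i^*-1)m^{i^*}\bigl(\tfrac{1}{4}+\tfrac{\ee}{4}\bigr).
\]
The tail sum is a geometric series, bounded by $(i^*-1)m^{i^*}/(m-1)$. Since $m-1=4/\ee$, this term is at most $(i^*-1)m^{i^*}\cdot \ee/4$, so overall
\[
	\mathbb{E}[f(S_{\mathrm{final}})] \;\leq\; (i^*-1)m^{i^*}\bigl(\tfrac{1}{4}+\tfrac{\ee}{2}\bigr).
\]

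Finally, I would lower bound $OPT$ by exhibiting the simple feasible solution $\{u_{i^*}\}$, whose cut value equals the total weight of arcs leaving $u_{i^*}$, namely $(i^*-1)m^{i^*}$. Dividing, $\mathbb{E}[ALG]/OPT\leq 1/4+\ee/2<1/4+\ee$, contradicting the assumed competitive ratio of $ALG$. Hence no such $i^*$ exists and the adversary stays in mode $A$ forever. The main obstacle is bookkeeping rather than a conceptual difficulty: one has to set up $\alpha_i$ and the per-step contribution carefully, and choose the parameter $m=1+4/\ee$ so that the geometric tail is dominated by the bound coming from the switching criterion.
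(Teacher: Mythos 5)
Your proposal is correct and follows essentially the same route as the paper's proof: assuming a switch occurs at step $i^*$, you bound the algorithm's expected cut value by the switching criterion for the arcs leaving $u_{i^*}$ plus the geometric sum $\sum_{i<i^*}(i-1)m^i \leq (i^*-1)m^{i^*}\ee/4$ for all earlier arcs, and compare against the solution $\{u_{i^*}\}$ of value $(i^*-1)m^{i^*}$ to contradict $(1/4+\ee)$-competitiveness. The only difference is cosmetic: you write the expectation exactly via the $\alpha_i$ and bound $\alpha_i \leq 1$, whereas the paper simply assumes all earlier arcs cross the cut.
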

\begin{proof}
Assume for the sake of contradiction that the adversary switches to mode $B$ immediately after element $u_i$ is revealed. Let us upper bound the weight of the cut $C$ produced by $ALG$ in this case. If $ALG$ accepted a set $S \subseteq S_{i - 1}$ of elements before $u_i$ is revealed, then the expected number of arcs leaving $u_i$ crossing $C$ is $(i - 1) \cdot q_{i, S} \cdot \left(1 - \frac{|S|}{i - 1}\right)$. By the linearity of the expectation, the expected number of arcs leaving $u_i$ crossing $C$ is:
\begin{align*}
	(i - 1) \cdot \sum_{S \subseteq \NN_{i - 1}} \inConference{\hspace{1in}&\hspace{-1in}} \left[p_{i, S} \cdot q_{i, S} \cdot \left(1 - \frac{|S|}{i - 1}\right)\right] \inConference{\\}
	<{} &
	(i - 1) \cdot (1/4 + \ee/4)
	\enspace.
\end{align*}

The total weight of all arcs leaving elements other than $u_i$ is:
\begin{align*}
	\sum_{j = 1}^{i - 1} [(j - 1) \cdot m^j]
	\leq{} &
	(i - 1) \inArxiv{\cdot} \sum_{j = 1}^{i - 1} m^j
	\leq
	(i - 1) \inArxiv{\cdot} \sum_{j = 0}^\infty m^{i - 1 - j}\inConference{\\}
	=\inConference{{} &}
	(i - 1) \cdot \frac{m^i}{m - 1}
	=
	(i - 1)m^i \cdot \frac{\ee}{4}
	\enspace.
\end{align*}
Hence, even if all the above arcs cross $C$, the total weight of $C$ is still less than $(i - 1)m^i \cdot (1/4 + \ee/2)$. On the other hand, the optimal solution may accept $u_i$ and no other elements, which results in a cut of weight $(i - 1)m^i$. Hence, the competitive ratio of $ALG$ cannot be $1/4 + \ee$.
\end{proof}

Let us denote $t_{i, S} = 1 - \frac{|S|}{i - 1}$. Using this notation the previous lemma implies that for every $n_0 \leq i \leq n$:
\[
	\sum_{S \subseteq \NN_{i - 1}} \left(p_{i, S} \cdot q_{i, S} \cdot t_{i, S}\right) \geq 1/4 + \ee/2
	\enspace.
\]
Consider the potential function\inArxiv{ $\Phi(i) = \sum_{S \subseteq \NN_{i - 1}} \left(p_{i, S} \cdot t_{i, S}^2 \right)$.}\inConference{:
\[
	\Phi(i)
	=
	\sum_{S \subseteq \NN_{i - 1}} \left(p_{i, S} \cdot t_{i, S}^2 \right)
	\enspace.
\]
}

\begin{observation} \label{ob:diff}
For every $1 \leq i \leq n - 1$,
\begin{align*}
	\inConference{&}
	\Phi(i + 1) - \Phi(i)\inConference{\\}
	={} &
	\sum_{S \subseteq \NN_{i - 1}} \left(p_{i, S} \cdot \left[\frac{[(i - 1)t_{i, S} + (1 - q_{i, S})]^2}{i^2} - t_{i, S}^2\right] \right)
	\inArxiv{\\
	={} &
	\sum_{S \subseteq \NN_{i - 1}} \left(p_{i, S} \cdot \frac{(1 - 2i)t_{i, S}^2 + 2(i - 1)t_{i, S}(1 - q_{i, S}) + (1 - q_{i, S})^2}{i^2} \right)
	}
	\enspace.
\end{align*}
\end{observation}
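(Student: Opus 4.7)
The plan is to compute $\Phi(i+1)$ by conditioning on the set accepted before $u_i$ is revealed, and then subtract $\Phi(i)$. Since the algorithm uses no preemption, if the event $A_{i,S}$ occurs then after the decision on $u_i$ the accepted set becomes either $S$ (with conditional probability $1 - q_{i,S}$) or $S + u_i$ (with conditional probability $q_{i,S}$). Consequently, for every $S' \subseteq \NN_i$ one has
\[
	p_{i+1, S'} = \begin{cases} p_{i, S'}(1 - q_{i, S'}) & \text{if } u_i \notin S', \\ p_{i, S}\, q_{i, S} & \text{if } S' = S + u_i \text{ with } S \subseteq \NN_{i-1}. \end{cases}
\]

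I would then translate the relevant values of $t$ back into expressions involving $t_{i,S}$. Starting from the identity $|S| = (i-1)(1 - t_{i,S})$ and unpacking $t_{i+1, S'} = 1 - |S'|/i$, a short calculation gives
\[
	t_{i+1, S} = \frac{1 + (i-1) t_{i,S}}{i}, \qquad t_{i+1, S + u_i} = \frac{(i-1) t_{i,S}}{i}.
\]
Substituting these into the definition of $\Phi(i+1)$ and grouping contributions by $S \subseteq \NN_{i-1}$ yields
\[
	\Phi(i+1) = \sum_{S \subseteq \NN_{i-1}} \frac{p_{i,S}}{i^2} \left[ (1-q_{i,S})(1 + (i-1)t_{i,S})^2 + q_{i,S} ((i-1)t_{i,S})^2 \right].
\]

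The final step is a purely algebraic simplification of this bracket so that the expression collapses into the claimed form. Abbreviating $a = (i-1) t_{i,S}$ and $q = q_{i,S}$ for clarity, I would expand both $(1-q)(1+a)^2 + q a^2$ and the candidate closed form $(a + (1-q))^2$ and compare them term by term; then subtract $\Phi(i) = \sum_{S} p_{i,S} t_{i,S}^2$ and pull the $1/i^2$ through the difference to reach the identity stated in the observation. The main obstacle is exactly this polynomial matching: every other step is a mechanical unpacking of the definitions of $p_{i+1, S'}$ and $t_{i+1, S'}$, whereas the final identification requires careful bookkeeping of the constant, linear, and quadratic terms in $a$ to ensure they line up with $((i-1)t_{i,S} + (1-q_{i,S}))^2/i^2 - t_{i,S}^2$.
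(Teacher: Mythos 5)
Your plan is the exact computation, and it is carried out correctly right up to the step you defer: the transition probabilities $p_{i+1,S'}$, the identities $t_{i+1,S}=\bigl(1+(i-1)t_{i,S}\bigr)/i$ and $t_{i+1,S+u_i}=(i-1)t_{i,S}/i$, and hence
\[
\Phi(i+1)=\sum_{S\subseteq \NN_{i-1}}\frac{p_{i,S}}{i^2}\Bigl[(1-q_{i,S})\bigl(1+(i-1)t_{i,S}\bigr)^2+q_{i,S}\bigl((i-1)t_{i,S}\bigr)^2\Bigr]
\]
are all right. But the final ``polynomial matching'' is exactly where the argument breaks: with $a=(i-1)t_{i,S}$ and $q=q_{i,S}$ one has $(1-q)(1+a)^2+qa^2=a^2+2a(1-q)+(1-q)$, whereas $(a+1-q)^2=a^2+2a(1-q)+(1-q)^2$. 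The two differ by $q(1-q)$, which is strictly positive whenever $0<q_{i,S}<1$, i.e., for any genuinely randomized acceptance decision. So your (correct) expression for $\Phi(i+1)$ exceeds the right-hand side of Observation~\ref{ob:diff} by $\sum_{S}p_{i,S}\,q_{i,S}(1-q_{i,S})/i^2$, and no bookkeeping will make the terms line up; the displayed equality holds only when every $q_{i,S}\in\{0,1\}$.

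The discrepancy is a mean-of-square versus square-of-mean (Jensen) gap. The paper's own proof conditions on $A_{i,S}$ and replaces $|B_i|$ by its conditional expectation $|S|+q_{i,S}$ \emph{inside} the square, citing linearity of expectation---precisely the move your exact calculation refuses to make, and the reason your route ends at a different (in fact, the correct) expression. The extra term is the conditional variance of the acceptance indicator, at most $1/(4i^2)$, and it is harmless downstream: in the corrected expanded form the numerator gains $q_{i,S}(1-q_{i,S})$, and since $\bigl(t_{i,S}-(1-q_{i,S})\bigr)^2+q_{i,S}(1-q_{i,S})\le 1$ for $t_{i,S},q_{i,S}\in[0,1]$, the chain of inequalities in Lemma~\ref{le:diff_bound} goes through unchanged. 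So the constructive way to complete your write-up is to state and prove the identity your display already gives (equivalently, add $q_{i,S}(1-q_{i,S})$ to the numerator in the observation) and then verify the lemma as above; as a proof of the equality literally stated in Observation~\ref{ob:diff}, your plan cannot be completed because that equality fails for randomized algorithms.
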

\begin{proof}
Let $B_i$ be the expected set of elements accepted by $ALG$ immediately after $u_i$ is revealed. Notice that:
\begin{align*}
	\Phi(i + 1)
	={} &
	\mathbb{E}\left[\left(1 - \frac{|B_i|}{i}\right)^2\right] \inConference{\\}
	=\inConference{{} &}
	\sum_{S \subseteq \NN_{i - 1}} \left(p_{i, S} \cdot \mathbb{E}\left[\left(1 - \frac{|B_i|}{i}\right)^2 ~\middle|~ A_{i, S}\right]\right)
	\enspace.
\end{align*}
Observe also that $\mathbb{E}[|B_i| \mid A_{i, S}] = \mathbb{E}[|S| + q_{i, S} \mid A_{i, S}]$. Hence, by the linearity of the expectation.
\begin{align*}
	\Phi(i +\inConference{&} 1)\inConference{\\}
	={} &
	\sum_{S \subseteq \NN_{i - 1}} \left(p_{i, S} \cdot \mathbb{E}\left[\left(1 - \frac{|S| + q_{i, S}}{i}\right)^2 ~\middle|~ A_{i, S}\right]\right)\\
	={} &
	\sum_{S \subseteq \NN_{i - 1}} \left(p_{i, S} \cdot \left(1 - \frac{(1 - t_{i, S})(i - 1) + q_{i, S}}{i}\right)^2 \right)\\
	={} &
	\sum_{S \subseteq \NN_{i - 1}} \left(p_{i, S} \cdot \left(\frac{t_{i, S}(i - 1) + (1 - q_{i, S})}{i}\right)^2 \right)
	\enspace.
\end{align*}
The observation follows by combining the above equality with the definition of $\Phi(i)$.
\end{proof}

Using the above observation, we can now prove the following lemma.
\begin{lemma} \label{le:diff_bound}
For every $n_0 \leq i \leq n - 1$, $\Phi(i + 1) - \Phi(i) \leq -\ee/(4i)$.
\end{lemma}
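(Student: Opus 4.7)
The plan is to apply Observation~\ref{ob:diff}, use Jensen's inequality to reduce the resulting quadratic form to a univariate expression in $a := \sum_S p_{i,S}\, t_{i,S}$, and exploit the bound $\sum_S p_{i,S}\, q_{i,S}\, t_{i,S} \geq 1/4 + \ee/2$ established just before the statement (which in turn comes from the preceding lemma asserting that the adversary never switches to mode $B$).

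First I would bound $(1-q_{i,S})^2 \leq 1$, using $q_{i,S} \in [0,1]$, and split $t_{i,S}(1-q_{i,S}) = t_{i,S} - t_{i,S}q_{i,S}$. Substituting these into the expression for $\Phi(i+1) - \Phi(i)$ given by Observation~\ref{ob:diff}, and then using $\sum_S p_{i,S}\, q_{i,S}\, t_{i,S} \geq 1/4 + \ee/2$, produces
\[
\Phi(i+1) - \Phi(i) \leq \frac{1}{i^2}\left[(1-2i)\sum_S p_{i,S}\, t_{i,S}^2 + 2(i-1)a - (i-1)/2 - (i-1)\ee + 1\right].
\]
Since $1-2i < 0$, Jensen's inequality $\sum_S p_{i,S}\, t_{i,S}^2 \geq a^2$ upper bounds the first term, leaving a downward-opening quadratic in $a$. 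Maximizing $(1-2i)a^2 + 2(i-1)a$ over the reals, the maximum value $(i-1)^2/(2i-1)$ is attained at $a^\star = (i-1)/(2i-1) \in [0,1]$. A short computation yields $(i-1)^2/(2i-1) - (i-1)/2 = -(i-1)/(2(2i-1)) \leq 0$, so
\[
\Phi(i+1) - \Phi(i) \leq \frac{1 - (i-1)\ee}{i^2}.
\]

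Finally, rearranging the desired inequality $\frac{1-(i-1)\ee}{i^2} \leq -\frac{\ee}{4i}$ gives $i \geq 4(1+\ee)/(3\ee)$, which holds comfortably for $i \geq n_0 = \lceil 8/\ee \rceil$ (treating $\ee$ as bounded above by a fixed small constant, as we may assume).

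The main obstacle is recognizing that Jensen must be applied in the ``counterintuitive'' direction: since $t_{i,S} \in [0,1]$, the natural inequality $t_{i,S}^2 \leq t_{i,S}$ goes the wrong way once it is multiplied by the negative coefficient $(1-2i)$. Lower-bounding $\sum_S p_{i,S}\, t_{i,S}^2$ by $a^2$ is what collapses the joint dependence on $(t_{i,S}, q_{i,S})$ into a one-variable quadratic one can maximize explicitly.
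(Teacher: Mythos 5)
Your proof is correct: starting from Observation~\ref{ob:diff}, bounding $(1-q_{i,S})^2\le 1$, applying Jensen's inequality to get $\sum_S p_{i,S}t_{i,S}^2\ge a^2$ with $a=\sum_S p_{i,S}t_{i,S}$, and maximizing the concave quadratic $(1-2i)a^2+2(i-1)a$ at $a^\star=(i-1)/(2i-1)$ indeed yields $\Phi(i+1)-\Phi(i)\le (1-(i-1)\ee)/i^2$, which is at most $-\ee/(4i)$ once $i\ge 4(1+\ee)/(3\ee)$, comfortably implied by $i\ge n_0=\lceil 8/\ee\rceil$. The paper takes a slightly different middle step: it first coarsens the numerator of Observation~\ref{ob:diff} to $2\bigl[-i\,t_{i,S}^2+i\,t_{i,S}(1-q_{i,S})+1\bigr]$ and then applies the pointwise inequality $t_{i,S}(1-q_{i,S}-t_{i,S})\le 1/4-t_{i,S}q_{i,S}$ separately for each $S$ before summing, which avoids Jensen entirely and gives $\Phi(i+1)-\Phi(i)\le -\ee/(2i)+2/i^2$; your aggregate maximization gains a marginally sharper slack term $-(i-1)/(2(2i-1))$, while the paper's pointwise bound is a bit shorter, and both ultimately need only $i\gtrsim 1/\ee$. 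One caveat: the displayed bound $\sum_S p_{i,S}q_{i,S}t_{i,S}\ge 1/4+\ee/2$ you quote should really be $1/4+\ee/4$ (that is what the adversary's switching threshold actually yields, and what the paper's own proof of this lemma uses); your argument is robust to this, since with $\ee/4$ the final step requires only $i\ge 4/\ee+2$, which again holds for every $i\ge n_0$.
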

\begin{proof}
By Observation~\ref{ob:diff},
\begin{align*}
	\inConference{&}
	\Phi(i + 1) - \Phi(i) \inConference{\\}
	={} &
	\inConference{
		\sum_{S \subseteq \NN_{i - 1}} \left(p_{i, S} \cdot \left[\frac{[(i - 1)t_{i, S} + (1 - q_{i, S})]^2}{i^2} - t_{i, S}^2\right] \right)\\
		={} &
		\sum_{S \subseteq \NN_{i - 1}} \left(p_{i, S} \cdot \frac{(1 - 2i)t_{i, S}^2}{i^2} \right. \\ &+ \left. \frac{2(i - 1)t_{i, S}(1 - q_{i, S}) + (1 - q_{i, S})^2}{i^2} \right)\\
	}
	\inArxiv{
		\sum_{S \subseteq \NN_{i - 1}} \left(p_{i, S} \cdot \frac{(1 - 2i)t_{i, S}^2 + 2(i - 1)t_{i, S}(1 - q_{i, S}) + (1 - q_{i, S})^2}{i^2} \right)\\
	}
	\leq{} &
	2 \cdot \sum_{S \subseteq \NN_{i - 1}} \left(p_{i, S} \cdot \frac{- i \cdot t^2_{i, S} + i \cdot t_{i, S} \cdot (1 - q_{i, S}) + 1}{i^2} \right)\\
	={} &
	\frac{2}{i} \cdot \sum_{S \subseteq \NN_{i - 1}} \left(p_{i, S} \cdot t_{i, S} \cdot (1 - q_{i, S} - t_{i, S}) \right) + \frac{2}{i^2}
	\enspace.
\end{align*}
Notice that $t_{i, S} \cdot (1 - q_{i, S} - t_{i, S}) = t_{i, S} \cdot (1 - t_{i, S}) - t_{i, S} \cdot q_{i, S} \leq 1/4 - t_{i, S} \cdot q_{i, S}$. Plugging this inequality into the previous one yields:
\begin{align*}
	\Phi(i + 1) - \inConference{&}\Phi(i)\inConference{\\}
	\leq{} &
	\frac{2}{i} \cdot \sum_{S \subseteq \NN_{i - 1}} \left(p_{i, S} \cdot (1/4 - t_{i, S} \cdot q_{i, S} ) \right) + \frac{2}{i^2}\inConference{\\}
	=\inConference{{} &}
	\frac{2}{i} \cdot \left[1/4 - \sum_{S \subseteq \NN_{i - 1}} p_{i, S} \cdot t_{i, S} \cdot q_{i, S} \right] + \frac{2}{i^2}\\
	\leq{} &
	\frac{2}{i} \cdot \left[1/4 - (1/4 + \ee/4) \right] + \frac{2}{i^2}
	=
	- \frac{\ee}{2i} + \frac{2}{i^2}\inConference{\\}
	\leq\inConference{{} &}
	- \frac{\ee}{2i} + \frac{\ee}{4i}
	=
	- \frac{\ee}{4i}
	\enspace,
\end{align*}
where the third inequality holds since $i \geq n_0 \geq 8/\ee$.
\end{proof}

\begin{corollary} \label{co:total_loss}
$\Phi(n) \leq \Phi(n_0) - 2$.
\end{corollary}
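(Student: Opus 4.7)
The plan is to telescope the per-step bound from Lemma~\ref{le:diff_bound} and then convert the resulting harmonic tail into a logarithm, exploiting the fact that $n$ was chosen to be exponentially larger than $n_0$ in $1/\ee$.

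First I would write
\[
  \Phi(n) - \Phi(n_0) = \sum_{i = n_0}^{n-1} [\Phi(i+1) - \Phi(i)] \leq -\frac{\ee}{4} \sum_{i=n_0}^{n-1} \frac{1}{i},
\]
using Lemma~\ref{le:diff_bound} termwise. So it suffices to show $\sum_{i=n_0}^{n-1} 1/i \geq 8/\ee$.

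Next I would lower bound the harmonic tail by an integral: since $1/x$ is decreasing,
\[
  \sum_{i=n_0}^{n-1} \frac{1}{i} \geq \int_{n_0}^{n} \frac{dx}{x} = \ln(n/n_0).
\]
By the choice $n = \lceil e^{8/\ee} \cdot n_0 \rceil$ in Algorithm~\ref{alg:Adversary}, we have $n/n_0 \geq e^{8/\ee}$, hence $\ln(n/n_0) \geq 8/\ee$. Substituting back gives $\Phi(n) - \Phi(n_0) \leq -(\ee/4) \cdot (8/\ee) = -2$, which is exactly the claim.

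There is essentially no obstacle here; the corollary is a direct telescoping of Lemma~\ref{le:diff_bound} combined with the integral estimate for the harmonic numbers, and the constants $n_0 = \lceil 8/\ee \rceil$ and $n = \lceil e^{8/\ee} \cdot n_0 \rceil$ were set up in Algorithm~\ref{alg:Adversary} precisely so that the resulting drop in potential is at least $2$. The only thing to double-check is that the step bound indeed applies for every index $i \in \{n_0, n_0+1, \dots, n-1\}$, which is exactly the range of validity stated in Lemma~\ref{le:diff_bound}.
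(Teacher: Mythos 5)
Your proof is correct and follows exactly the same route as the paper: telescope the bound of Lemma~\ref{le:diff_bound} over $i = n_0, \dotsc, n-1$, lower bound the harmonic tail by $\int_{n_0}^{n} dx/x = \ln(n/n_0) \geq 8/\ee$, and multiply by $\ee/4$ to get the drop of $2$. No gaps.
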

\begin{proof}
By Lemma~\ref{le:diff_bound},
\begin{align*}
	\Phi(n) - \Phi(n_0)
	={} &
	\sum_{i = n_0}^{n - 1} [\Phi(i + 1) - \Phi(i)]
	\leq
	- \sum_{i = n_0}^{n - 1} \frac{\ee}{4i}\\
	\leq{} &
	-\frac{\ee}{4} \cdot \int_{n_0}^n \frac{dx}{x}
	=
	-\frac{\ee}{4} \cdot \ln(n / n_0)\inConference{\\}
	\leq\inConference{{} &}
	-\frac{\ee}{4} \cdot \ln(n_0 \cdot e^{8/\ee} / n_0)
	=
	-2
	\enspace.
	\qedhere
\end{align*}
\end{proof}

Corollary~\ref{co:total_loss} leads to an immediate contradiction since $\Phi$ can only take values from the range $[0, 1]$.
\inArxiv{\section{A Polynomial Time Implementation of Algorithm~\ref{alg:IncreaseFromZero}} \label{app:unconstrained_polynomial}

In this section we prove the second part of Theorem~\ref{th:unconsrained_positive} by giving a polynomial time implementation of Algorithm~\ref{alg:IncreaseFromZero} having the same competitive ratio up to an arbitrary constant $\ee > 0$. Our implementation is given as Algorithm~\ref{alg:IncreaseFromZero_Polynomial}. This algorithms uses the notation $T_\theta(x)$, where $x$ is a vector and $\theta \in [0, 1]$. This notation stands for $\{u \mid x_u \geq \theta\}$, \ie, the set of all elements whose coordinate in $x$ is at least $\theta$.

\begin{algorithm}[h!t]
\caption{\textsf{Marginal Choice - Polynomial Time Implementation}} \label{alg:IncreaseFromZero_Polynomial}
\ForEach{element $u_i$ revealed}
{
	Choose a uniformly random threshold $\theta_i \in [0, 1]$.\\
	Let $\NN_i \gets \{u_1, u_2, \dotsc, u_i\}$.\\

	\BlankLine

	Let $\VV_i$ be a collections of $\lceil 3072 \cdot \ee^{-2}i^4 \ln[2\ee^{-1}(4i)^2] \rceil$ random vectors from $[0, 1]^{\NN_i}$.\\
	\For{$j$ = $1$ \KwTo $i - 1$}
	{
		\For{every vector $v \in \VV_j$}
		{
			Extend $v$ to be a vector of $[0, 1]^{\NN_i}$ by setting $v_{u_i}$ to be an independent uniformly random value from $[0, 1]$.\\
		}
	}
	
	\BlankLine
	
	\For{$j$ = $1$ \KwTo $i$}
	{
		$m_j \gets |\VV_j|^{-1} \cdot \sum_{v \in \VV_j} [f(T_{\theta_j}(v) + u_j) - f(T_{\theta_j}(v) - u_j)]$. \tcc{$m_j \approx \partial_{u_j} F(\theta_j \cdot \NN_i)$.}
	}
	Let $S_i \gets \{u_j \in \NN_i \mid m_j \geq 0\}$.
}
\end{algorithm}

Our first objective is to show that Algorithm~\ref{alg:IncreaseFromZero_Polynomial} is an online algorithm according to {\UM}.

\begin{lemma}
For every $1 \leq i \leq n$, $S_i \subseteq S_{i - 1} + u_i$.
\end{lemma}
\begin{proof}
For $i = 1$ the claim is trivial since $S_1$ can contain only $u_1$. Thus, we assume from now on $i > 1$. For every $1 \leq j \leq i - 1$, let $m_j$ and $\VV_j$ denote the value $m_j$ and the set $\VV_j$ after iteration $i - 1$, and let $m'_j$ and $\VV'_j$ denote these entities after iteration $i$.

By definition, $S_i$ contains only elements of $\NN_i$. Fix an element $u_j \in \NN_i - u_i$. Then:
\begin{align*}
	u_j \in S_i
	\Rightarrow{} &
	\frac{\sum_{v \in \VV'_j} [f(T_{\theta_j}(v) + u_j) - f(T_{\theta_j}(v) - u_j)]}{|\VV_j|} = m'_j \geq 0\\
	\Rightarrow{} &
	\frac{\sum_{v \in \VV_j} [f(T_{\theta_j}(v) + u_j) - f(T_{\theta_j}(v) - u_j)]}{|\VV_j|} = m_j \geq 0
	\Rightarrow
	u_j \in S_{i - 1}
	\enspace,
\end{align*}
where the second derivation follows from submodularity since $\VV'_j$ is produced from $\VV_j$ by extending vectors.
\end{proof}

To bound the loss in the competitive ratio of Algorithm~\ref{alg:IncreaseFromZero_Polynomial}, we need the following claims.

\begin{observation}
For every $1 \leq i \leq n$, each set in $\VV_i$ is an independent sample of $[0, 1]^\NN$.
\end{observation}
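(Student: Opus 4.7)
The plan is to prove the claim by a straightforward induction on $j - i$, where $j$ is the current iteration index. The base case $j = i$ is immediate from the algorithm's description: in iteration $i$, Algorithm~\ref{alg:IncreaseFromZero_Polynomial} explicitly creates $\VV_i$ as a collection of freshly drawn random vectors from $[0,1]^{\NN_i}$, and these draws are (implicitly) independent across vectors in $\VV_i$ and independent of all prior randomness. So at the moment of creation each vector in $\VV_i$ is an independent uniform sample from $[0,1]^{\NN_i}$.

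For the inductive step, assume that after iteration $j - 1$ every vector $v \in \VV_i$ is an independent uniform sample from $[0,1]^{\NN_{j-1}}$. During iteration $j > i$, the algorithm extends each such $v$ by setting $v_{u_j}$ to a freshly drawn uniformly random value in $[0,1]$, independent of everything chosen previously. I would then invoke the elementary fact that appending a $U[0,1]$ coordinate (independent of the original vector) to a uniform random vector on $[0,1]^{\NN_{j-1}}$ yields a uniform random vector on $[0,1]^{\NN_j}$. Because the appended coordinates are drawn independently across the vectors of $\VV_i$, joint independence is also preserved. Thus after iteration $j$ every $v \in \VV_i$ is still an independent uniform sample, now from $[0,1]^{\NN_j}$. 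Specializing to $j = n$ gives the observation as stated.

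The only real work is bookkeeping: one must verify that the uniform coordinates appended in any iteration $j$ are independent of all randomness used in prior iterations (the original creation of $\VV_i$ and the extensions carried out in iterations $i+1, \ldots, j-1$). This follows directly from the algorithm, since each coordinate $v_{u_j}$ is specified to be ``an independent uniformly random value from $[0,1]$'', so no substantive obstacle arises.
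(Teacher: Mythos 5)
Your proof is correct and matches the paper's argument: the paper simply states that the observation follows immediately from how $\VV_i$ is constructed and updated, and your induction (fresh independent uniform vectors over $[0,1]^{\NN_i}$ at creation, then extension by independent uniform coordinates in each later iteration, preserving uniformity and joint independence) is just a careful spelling-out of that same reasoning.
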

\begin{proof}
Follows immediately by the way Algorithm~\ref{alg:IncreaseFromZero_Polynomial} constructs $\VV_i$ and the way this collection is updated when new elements arrive.
\end{proof}

\begin{lemma} \label{lem:concentration_bound}
Let $X_1, X_2, \dotsc, X_\ell$ be independent random variables such that for each $i$, $X_i \in [-1, 1]$. Let $X = \frac{1}{\ell} \sum_{i = 1}^\ell X_i$ and $\mu = \mathbb{E}[X]$. 
Then
\[
	\Pr[X > \mu + \alpha] \leq e^{-\frac{\alpha^2\ell}{12}}
	\qquad \mbox{and} \qquad
	\Pr[X < \mu - \alpha] \leq e^{-\frac{\alpha^2\ell}{8}}
	\enspace,
\]
for every $\alpha > 0$.
\end{lemma}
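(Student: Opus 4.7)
The plan is to reduce the statement to the standard multiplicative Chernoff bound by the linear rescaling $Y_i = (X_i + 1)/2 \in [0,1]$. The asymmetry between the exponents $\ell\alpha^2/12$ and $\ell\alpha^2/8$ is the clean signature of this reduction: the factor $1/4$ introduced by the rescaling, multiplied by the familiar constants $1/3$ and $1/2$ appearing in the multiplicative Chernoff upper and lower tails for $[0,1]$-valued summands, yields exactly $1/12$ and $1/8$.

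Concretely, I would set $Y = \sum_{i=1}^{\ell} Y_i$ and $M = \mathbb{E}[Y] = \ell(\mu+1)/2$, noting that $M \leq \ell$ because $\mu \leq 1$. The two events translate as $\{X > \mu + \alpha\} = \{Y > M + \ell\alpha/2\}$ and $\{X < \mu - \alpha\} = \{Y < M - \ell\alpha/2\}$. For the lower tail, I would invoke the Chernoff-Hoeffding bound $\Pr[Y \leq M - a] \leq \exp(-a^2/(2M))$ (valid for $0 \leq a \leq M$) with $a = \ell\alpha/2$; substituting and using $M \leq \ell$ directly yields $\exp(-\ell^2\alpha^2/(8M)) \leq \exp(-\ell\alpha^2/8)$. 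For the upper tail I would use the Bernstein-type form $\Pr[Y \geq M + a] \leq \exp(-a^2/(2M + 2a/3))$ (which is valid for all $a \geq 0$), again with $a = \ell\alpha/2$, and reduce to $\exp(-\ell\alpha^2/12)$ by bounding the denominator.

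The only real obstacle is matching the ranges of $\alpha$ against the validity regions of the two Chernoff forms. For the lower tail the requirement $a \leq M$ is equivalent to $\alpha \leq \mu + 1$, which is exactly the range in which $\{X < \mu - \alpha\}$ is non-vacuous (because $X \geq -1$ always), so nothing needs to be checked outside it. For the upper tail the Bernstein form covers every $a \geq 0$, but to replace the denominator $2M + 2a/3$ by $\ell\alpha^2/12$ I need the numerical inequality $12\ell \geq 8M + (4/3)\ell\alpha$, which follows from $M \leq \ell$ together with $\alpha \leq 2$; the latter is itself automatic, since $\mu + \alpha \leq 1$ must hold for $\{X > \mu + \alpha\}$ to be non-empty. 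This case-check is the only spot where a small calculation is required, and it completes the proof.
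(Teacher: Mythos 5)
Your proposal is correct and follows essentially the same route as the paper: rescale via $Y_i = (1+X_i)/2$, apply the multiplicative Chernoff bounds (with the familiar constants $1/3$ and $1/2$) to $Y = \sum_{i=1}^{\ell} Y_i$, and finish with $\mathbb{E}[Y] \leq \ell$. If anything, your use of the Bernstein-type form $\exp\bigl(-a^2/(2M + 2a/3)\bigr)$ together with the explicit range checks on $\alpha$ is slightly more careful than the paper's argument, which invokes the $e^{-\mathbb{E}[Y]\delta^2/3}$ upper-tail bound without restricting the relative deviation to $\delta \leq 1$.
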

\begin{proof}
For every $1 \leq i \leq \ell$, let $Y_i = (1 + X_i) / 2$. Additionally, let $Y = \sum_{i = 1}^\ell Y_i$.  Clearly, $Y_i \in [0, 1]$. Hence, by the Chernoff bound:
\begin{align*}
	\Pr[X > \mu + \alpha]
	={} &
	\Pr[2Y/\ell - 1 > 2\mathbb{E}[Y]/\ell - 1 + \alpha]
	=
	\Pr\left[Y > \mathbb{E}[Y] + \frac{\alpha\ell}{2}\right]\\
	\leq{} &
	e^{-\frac{\mathbb{E}[Y](\alpha\ell / (2 \cdot \mathbb{E}[Y]))^2}{3}}
	=
	e^{-\frac{\alpha^2\ell^2}{12 \cdot \mathbb{E}[Y]}}
	\leq
	e^{-\frac{\alpha^2\ell}{12}}
	\enspace,
\end{align*}
and
\begin{align*}
	\Pr[X < \mu - \alpha]
	={} &
	\Pr[2Y/\ell - 1 < 2\mathbb{E}[Y]/\ell - 1 - \alpha]
	=
	\Pr\left[Y < \mathbb{E}[Y] - \frac{\alpha\ell}{2}\right]\\
	\leq{} &
	e^{-\frac{\mathbb{E}[Y](\alpha\ell / (2 \cdot \mathbb{E}[Y]))^2}{2}}
	=
	e^{-\frac{\alpha^2\ell^2}{8 \cdot \mathbb{E}[Y]}}
	\leq
	e^{-\frac{\alpha^2\ell}{8}}
	\enspace.
	\qedhere
\end{align*}
\end{proof}

Let $m_i(z)$ be the value that $m_i$ gets after the last iteration of Algorithm~\ref{alg:IncreaseFromZero_Polynomial} if the random variable $\theta_i$ is $z$.

\begin{lemma} \label{lem:wrong_estimate_probability}
For every $1 \leq i \leq n$ and $z \in [0, 1]$, $\Pr[|m_i(z) - \partial_{u_i} F(z \cdot \NN)| > \ee (4i)^{-2} \cdot f(OPT)] \leq \ee (4i)^{-2}$.
\end{lemma}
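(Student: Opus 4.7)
The plan is to express $m_i(z)$ as an empirical mean of IID bounded random variables whose expectation is exactly $\partial_{u_i} F(z \cdot \NN)$, and then invoke the Hoeffding-type bound of Lemma~\ref{lem:concentration_bound}. By the preceding observation, at the end of the algorithm's run the vectors in $\VV_i$ are independent uniform samples from $[0,1]^\NN$; hence, for fixed $\theta_i = z$, the summands $X_v = f(T_z(v) + u_i) - f(T_z(v) - u_i)$ for $v \in \VV_i$ are IID. Their common expectation is $\partial_{u_i} F(z \cdot \NN)$ by the definition of the multilinear extension together with the derivative identity $\partial_u F(x) = F(x \vee u) - F(x \wedge (\NN - u))$ recalled in Section~\ref{ssc:unconstrained_positive}. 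Each $X_v$ lies in $[-f(OPT), f(OPT)]$, since $f$ is non-negative and bounded above everywhere by $f(OPT)$.

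Assuming $f(OPT) > 0$ (otherwise $f \equiv 0$ and the lemma is vacuous), I would rescale by $f(OPT)$ so that $X_v / f(OPT) \in [-1,1]$, and then apply Lemma~\ref{lem:concentration_bound} with $\ell = |\VV_i|$ and $\alpha = \ee(4i)^{-2}$. A union bound over the two one-sided tails yields $\Pr[|m_i(z) - \partial_{u_i} F(z \cdot \NN)| > \ee(4i)^{-2} f(OPT)] \leq 2 e^{-\alpha^2 \ell / 12}$.

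The constants in the choice $|\VV_i| = \lceil 3072 \ee^{-2} i^4 \ln[2\ee^{-1}(4i)^2] \rceil$ were picked precisely so that $\alpha^2 \ell / 12 \geq \ln[2\ee^{-1}(4i)^2]$; a one-line arithmetic check (the factor $3072$ is $256 \cdot 12$, which kills the $(4i)^4$ from $\alpha^{-2}$ against the $i^4$ from $\ell$, and the $\ee^{-2}$ from $\ell$ cancels the $\ee^2$ from $\alpha^2$) shows this exactly. Substituting back gives the tail bound $2 \cdot \frac{1}{2\ee^{-1}(4i)^2} = \ee(4i)^{-2}$, as required. There is no real obstacle: this is a routine Hoeffding-type calculation, and the only step deserving explicit mention is the bound $|X_v| \leq f(OPT)$, which uses only non-negativity of $f$ together with the definition of $OPT$.
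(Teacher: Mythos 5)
Your proposal is correct and follows essentially the same route as the paper: view $m_i(z)$ as an average of $|\VV_i|$ independent bounded samples with expectation $\partial_{u_i} F(z \cdot \NN)$, note each sample lies in $[-f(OPT), f(OPT)]$, and apply Lemma~\ref{lem:concentration_bound} (after rescaling) so that the choice of $|\VV_i|$ makes the exponent at least $\ln[2\ee^{-1}(4i)^2]$, giving the tail bound $\ee(4i)^{-2}$. The arithmetic check of the constant $3072 = 256 \cdot 12$ matches the paper's computation, so there is nothing to add.
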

\begin{proof}
Observe that $m_i$ is obtained by averaging $|\VV_i|$ independent samples of $f(R(z \cdot \NN) + u_i) - f(R(z \cdot \NN) - u_i)$, and the expectation of the last expression is $\partial_{u_i} F(z \cdot \NN)$. Additionally, observe that every sample belongs to the range $[-f(OPT), f(OPT)]$ since $f(OPT) = \max_{S \subseteq \NN} f(S)$. Hence, by Lemma~\ref{lem:concentration_bound},
\[
	\Pr[|m_i(z) - \partial_{u_i} F(z \cdot \NN)| > \ee (4i)^{-2} \cdot f(OPT)]
	\leq
	2e^{\frac{-|\VV_i| \cdot [\ee (4i)^{-2}]^2}{12}}
	\leq
	2e^{-\ln[2\ee^{-1}(4i)^2]}
	=
	\ee (4i)^{-2}
	\enspace.
	\qedhere
\]
\end{proof}

Fix the vectors $\VV = (\VV_1, \VV_2, \dotsc, \VV_n)$. For every element $u_i \in \NN$, $m_i(z)$ is continuous non-increasing function of $z$ by submodularity. Hence, by the intermediate value theorem, one of the following must hold: $m_i(z)$ is always positive in the range $[0, 1]$, $m_i(z)$ is always negative in the range $[0, 1]$ or $m_i(z)$ has at least one root $z_0 \in [0, 1]$. In the last case, the set $I_0 \subseteq [0, 1]$ of the roots of $m_i(z)$ is non-empty. Moreover, by the monotonicity and continuity of $m_i(z)$, $I_0$ is a closed range. Using these observations, we define a vector $y(\VV) \in [0, 1]^\NN$ as follows:
\[
	y(\VV)_{u_i}
	=
	\begin{cases}
		0 & \text{if $m_i(0) < 0$} \enspace, \\
		1 & \text{if $m_i(1) > 0$} \enspace, \\
		\max I_0 & \text{otherwise} \enspace.
	\end{cases}
\]

Let $\EE_i$ be the event that $\partial_{u_i}F(z \cdot \NN) \geq -\ee (4i)^{-2} \cdot f(OPT)$ for every $z < y(\VV)_{u_i}$ and $\partial_{u_i}F(z \cdot \NN) \leq \ee (4i)^{-2} \cdot f(OPT)$ for every $z > y(\VV)_{u_i}$.

\begin{lemma} \label{lem:coordinate_y_good}
For every $1 \leq i \leq n$, the event $\EE_i$ holds with probability at least $1 - 2\ee (4i)^{-2}$.
\end{lemma}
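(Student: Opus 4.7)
Let me denote $\alpha = \ee (4i)^{-2} \cdot f(OPT)$ throughout. My plan is to show that $\EE_i$ can be pinned down by controlling the estimate $m_i$ at exactly two carefully chosen deterministic points, so that Lemma~\ref{lem:wrong_estimate_probability} together with a union bound yields the claimed probability.

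First, I would observe that $\partial_{u_i} F(z \cdot \NN)$ is a polynomial in $z$ (because $F$ is multilinear), so it is continuous; by submodularity it is also non-increasing in $z$. Combined with the fact that the two conditions in $\EE_i$ are one-sided (involving $z < y(\VV)_{u_i}$ and $z > y(\VV)_{u_i}$), taking limits as $z \to y(\VV)_{u_i}$ shows that $\EE_i$ is equivalent to
\[
	|\partial_{u_i} F(y(\VV)_{u_i} \cdot \NN)| \leq \alpha.
\]

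Next, I would define the deterministic thresholds
\[
	z_L = \sup\{z \in [0,1] : \partial_{u_i} F(z \cdot \NN) > \alpha\}, \qquad z_R = \inf\{z \in [0,1] : \partial_{u_i} F(z \cdot \NN) < -\alpha\},
\]
using the conventions $\sup \emptyset = 0$ and $\inf \emptyset = 1$. By monotonicity $z_L \leq z_R$, and by continuity, $|\partial_{u_i} F(z \cdot \NN)| \leq \alpha$ holds exactly when $z \in [z_L, z_R]$. Thus $\EE_i$ is equivalent to $y(\VV)_{u_i} \in [z_L, z_R]$. Because $z_L$ and $z_R$ depend only on $f$ and $\NN$ (not on $\VV$), Lemma~\ref{lem:wrong_estimate_probability} can be applied at each; defining $\mathcal{A}_L$ and $\mathcal{A}_R$ to be the events $|m_i(z_L) - \partial_{u_i} F(z_L \cdot \NN)| \leq \alpha$ and $|m_i(z_R) - \partial_{u_i} F(z_R \cdot \NN)| \leq \alpha$ respectively, a union bound gives $\Pr[\mathcal{A}_L \cap \mathcal{A}_R] \geq 1 - 2\ee(4i)^{-2}$.

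The heart of the argument is then showing that $\mathcal{A}_L \cap \mathcal{A}_R$ implies $y(\VV)_{u_i} \in [z_L, z_R]$. I would argue the contrapositive. Suppose $y(\VV)_{u_i} < z_L$. Then $z_L > 0$ and $\partial_{u_i} F(z_L \cdot \NN) = \alpha$ by continuity. Analyzing the three possible cases in the definition of $y(\VV)_{u_i}$ (namely $y(\VV)_{u_i} = 0$ with $m_i(0) < 0$, or $y(\VV)_{u_i} = \max I_0$ with $z_L$ strictly to the right of the largest root of $m_i$), the monotonicity of $m_i$ forces $m_i(z_L) < 0$; together with $\partial_{u_i} F(z_L \cdot \NN) = \alpha > 0$ this yields $|m_i(z_L) - \partial_{u_i} F(z_L \cdot \NN)| > \alpha$, contradicting $\mathcal{A}_L$. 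The case $y(\VV)_{u_i} > z_R$ is symmetric.

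The main obstacle I anticipate is the boundary subcase of the symmetric argument: when $y(\VV)_{u_i} = \max I_0 > z_R$ and $m_i$ happens to take the value $0$ exactly at $z_R$. In that configuration the naive bound only yields $|m_i(z_R) - \partial_{u_i} F(z_R \cdot \NN)| = \alpha$ rather than a strict excess, so the contradiction with $\mathcal{A}_R$ is not immediate. To handle this I would exploit the fact that $\partial_{u_i} F(z \cdot \NN)$ is strictly less than $-\alpha$ for every $z > z_R$ (by the very definition of $z_R$ and the continuity/monotonicity of $\partial_{u_i} F$), so that any genuine violation $y(\VV)_{u_i} > z_R$ forces $m_i$ to be nonnegative on a nondegenerate interval strictly past $z_R$, where the estimate then deviates from the true value by strictly more than $\alpha$; a slight reformulation of the definition of $z_R$ (for example, replacing the strict inequality in the defining set by a closed one, or equivalently arguing via a limiting argument on slightly perturbed points) will ensure the contradiction is strict.
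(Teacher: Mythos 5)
Your argument is essentially the paper's own proof: apply Lemma~\ref{lem:wrong_estimate_probability} at the two deterministic crossing points where $\partial_{u_i}F(z\cdot\NN)$ equals $+\ee(4i)^{-2}f(OPT)$ and $-\ee(4i)^{-2}f(OPT)$ (your $z_L,z_R$ play the role of the paper's $y'_i$ and its symmetric counterpart), use the monotonicity of $m_i$ and of $\partial_{u_i}F$ to trap $y(\VV)_{u_i}$ between them, handle the ``always above/always below'' degenerate cases separately, and union bound to get $1-2\ee(4i)^{-2}$. The boundary subcase you flag ($m_i$ equal to zero exactly at the crossing point) is glossed over in the paper's proof as well; note only that your first suggested repair (a deviation of more than $\ee(4i)^{-2}f(OPT)$ somewhere on a random interval past $z_R$) cannot be fed into Lemma~\ref{lem:wrong_estimate_probability}, which applies at fixed $z$, so the clean fix is your second suggestion --- take the extreme crossing point and use the non-strict form of the tail bound (or a slightly perturbed fixed threshold).
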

\begin{proof}
Let us first bound the probability that $\partial_{u_i}F(z \cdot \NN) \geq -\ee (4i)^{-2} \cdot f(OPT)$ for every $z < y(\VV)_{u_i}$. If $\partial_{u_i}F(z \cdot \NN)$ is always at least $-\ee (4i)^{-2} \cdot f(OPT)$, then there is nothing to prove. Otherwise, if $\partial_{u_i}F(z \cdot \NN)$ is always at most $-\ee (4i)^{-2} \cdot f(OPT)$, then $\partial_{u_i} F(\varnothing) \leq 0$, which guarantees that $y(\VV)_{u_i} = 0$, and thus, makes the claim that we want to prove void. Hence, it is enough to consider the case in which $\partial_{u_i}F(z \cdot \NN)$ is at least $-\ee (4i)^{-2} \cdot f(OPT)$ for some value of $z$ and at most $-\ee (4i)^{-2} \cdot f(OPT)$ for another value of $z$. In this case the intermediate value theorem and the continuity of $\partial_{u_i}F(z \cdot \NN)$ imply the existence of a value $y'_i$ such that $\partial_{u_i}F(y'_i \cdot \NN) = -\ee (4i)^{-2} \cdot f(OPT)$. By Lemma~\ref{lem:wrong_estimate_probability}, $m_i(y'_i) \leq 0$ with probability at least $1 - \ee (4i)^{-2}$, which implies $y(\VV)_{u_i} \leq y'_i$, and thus:
\[
	\partial_{u_i}F(y(\VV)_{u_i} \cdot \NN)
	\geq
	\partial_{u_i}F(y'_i \cdot \NN)
	=
	-\ee (4i)^{-2} \cdot f(OPT)
	\enspace.
\]
Observe that the last inequality implies that $\partial_{u_i}F(z \cdot \NN) \geq -\ee (4i)^{-2} \cdot f(OPT)$ for every $z < y(\VV)_{u_i}$.

A similar argument shows that $\partial_{u_i}F(z \cdot \NN) \leq \ee (4i)^{-2} \cdot f(OPT)$ for every $z > y(\VV)_{u_i}$ with probability at least $1 - \ee (4i)^{-2}$. The lemma now follow by the union bound.
\end{proof}

Let $\EE$ be the event that $\EE_i$ holds for every $1 \leq i \leq n$ at the same time.
\begin{corollary} \label{cor:EE_probability}
$\EE$ happens with probability at least $1 - \ee/2$.
\end{corollary}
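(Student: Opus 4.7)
The plan is to derive the corollary by a straightforward union bound over the $n$ events $\EE_1, \EE_2, \dotsc, \EE_n$, combined with Lemma~\ref{lem:coordinate_y_good}, which bounds the failure probability of each individual $\EE_i$ by $2\ee(4i)^{-2}$.

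More concretely, I would write $\Pr[\bar{\EE}] \leq \sum_{i=1}^n \Pr[\bar{\EE}_i]$ by the union bound, and then plug in Lemma~\ref{lem:coordinate_y_good} to get $\Pr[\bar{\EE}] \leq \sum_{i=1}^n 2\ee(4i)^{-2} = \tfrac{\ee}{8}\sum_{i=1}^n i^{-2}$. The only remaining step is a routine tail bound on the series $\sum_{i\geq 1} i^{-2}$; using the standard fact that this sum is at most $\pi^2/6 < 2$ (or even the more elementary bound $1 + \sum_{i\geq 2} \frac{1}{i(i-1)} = 2$) immediately yields $\Pr[\bar{\EE}] < \ee/4 < \ee/2$, so $\Pr[\EE] \geq 1 - \ee/2$, as required.

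There is no real obstacle here: the work has already been done by Lemma~\ref{lem:coordinate_y_good}, and the fact that the failure probabilities decay like $i^{-2}$ was engineered precisely so that their sum is a bounded quantity independent of $n$. In fact, the exponent $i^{-2}$ in Lemma~\ref{lem:coordinate_y_good} (and correspondingly the sample size $|\VV_i|$ of order $i^4\log i$ in Algorithm~\ref{alg:IncreaseFromZero_Polynomial}) was chosen for exactly this reason: the absence of an a priori bound on $n$ forces the per-index failure probability to be summable in $i$, which a $1/i^2$ decay achieves. Thus the proof is essentially a one-line union bound calculation.
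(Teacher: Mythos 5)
Your proof is correct and follows essentially the same route as the paper: a union bound over the events $\EE_i$ using Lemma~\ref{lem:coordinate_y_good}, followed by bounding $\sum_{i=1}^n 2\ee(4i)^{-2} = \tfrac{\ee}{8}\sum_{i=1}^n i^{-2}$ by a constant. The only cosmetic difference is that you bound the series by $\pi^2/6 < 2$ (giving $\ee/4$) while the paper uses an integral comparison to reach the same conclusion.
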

\begin{proof}
Combining Lemma~\ref{lem:coordinate_y_good} and the union bound gives:
\[
	\Pr[\EE]
	\geq
	1 - \sum_{i = 1}^n 2\ee (4i)^{-2}
	\geq
	1 - \frac{\ee}{8} - \int_1^n 2\ee (4x)^{-2}dx
	=
	1 - \frac{\ee}{8} + \left.\frac{\ee}{8x}\right|^n_1
	\geq
	1 - \frac{\ee}{2}
	\enspace.
	\qedhere
\]
\end{proof}

It is important to notice that the event $\EE$ depends only on the vectors $\VV$. Thus, fixing the vectors $\VV$ does not affect the distribution of $\theta_1, \theta_2, \dotsc, \theta_n$. Next, we analyze the competitive ratio of Algorithm~\ref{alg:IncreaseFromZero_Polynomial} assuming $\VV$ is fixed in a way that makes the event $\EE$ hold. The next observation corresponds to Observation~\ref{obs:y_alg_equivalent}.

\begin{observation} \label{obs:y_alg_equivalent_Polynomial}
For every given vectors $\VV$, every element $u_i \in \NN$ belongs to $S_n$ with probability $y(\VV)_{u_i}$, independently. Hence, $\mathbb{E}[f(S_n)] = F(y(\VV))$.
\end{observation}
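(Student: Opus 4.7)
The plan is to mirror the proof of Observation~\ref{obs:y_alg_equivalent}, with the sampled estimate $m_i(\theta_i)$ taking over the role of the exact marginal $\partial_{u_i}F(\theta_i\cdot\NN)$. Fix the vectors $\VV$; the task is to show that, conditional on $\VV$, each $u_i$ belongs to $S_n$ with probability $y(\VV)_{u_i}$ and that these events are independent across $i$.

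The first step is a bookkeeping observation about Algorithm~\ref{alg:IncreaseFromZero_Polynomial}: the sample collection $\VV_i$ is drawn once at iteration $i$ and is only modified in later iterations by appending independent uniform coordinates for the newly revealed elements. Hence the samples used in the final (iteration-$n$) computation of $m_i$ are precisely the extended versions of the vectors originally drawn at iteration $i$, and the final value of $m_i$ coincides with the deterministic quantity $m_i(\theta_i)$ defined just before the observation. The algorithm's last line then places $u_i$ in $S_n$ iff $m_i(\theta_i)\geq 0$. The second step invokes the monotonicity and continuity of $m_i(\cdot)$ already noted in the preceding discussion, together with the three-way case split used to define $y(\VV)_{u_i}$: in each of the three cases the inequality $m_i(\theta_i)\geq 0$ is equivalent to $\theta_i\leq y(\VV)_{u_i}$. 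Since $\theta_i$ is uniform on $[0,1]$, this event has conditional probability exactly $y(\VV)_{u_i}$, and independence across $i$ follows because, given $\VV$, each event $\{u_i\in S_n\}$ depends only on $\theta_i$ and the thresholds $\{\theta_i\}$ are drawn independently.

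Combining the two steps, conditional on $\VV$ the set $S_n$ has the same distribution as $\RSet(y(\VV))$, so $\mathbb{E}[f(S_n)\mid\VV]=F(y(\VV))$ follows directly from the definition of the multilinear extension. The only subtle point I expect is the first step—verifying that the notation $m_i(\theta_i)$ refers to the value assigned to $m_i$ at the \emph{end} of the run rather than the one computed at iteration $i$ itself—which reduces to tracking how the sample collections are extended but never re-sampled or resized across iterations, so that $\theta_i$ is the only source of randomness that remains once $\VV$ is fixed.
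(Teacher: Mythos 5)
Your proposal is correct and follows essentially the same route as the paper's proof: identify membership of $u_i$ in $S_n$ with the event $m_i(\theta_i)\geq 0$, use the monotonicity of $m_i(\cdot)$ and the definition of $y(\VV)_{u_i}$ to rewrite this as $\theta_i\leq y(\VV)_{u_i}$, and conclude via the uniformity and independence of the thresholds. The extra bookkeeping step you flag (that the final value of $m_i$ is $m_i(\theta_i)$ because the sample collections are only extended, never re-drawn) is left implicit in the paper but is exactly the right thing to check.
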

\begin{proof}
An element $u_i \in \NN$ belongs to $S_n$ if and only if $m_i(\theta_i) \geq 0$, which is equivalent to the condition $\theta_i \leq y(\VV)_{u_i}$ since $m_i$ is a non-increasing function. Clearly, the last condition happens with probability $y(\VV)_{u_i}$, and is independent for different elements.
\end{proof}

The last observation implies that analyzing Algorithm~\ref{alg:IncreaseFromZero_Polynomial} is equivalent to lower bounding $F(y(\VV))$. The next lemma and corollary correspond to Lemma~\ref{le:all_worse_than_algorithm} and Corollary~\ref{cor:unconstrained_ratio}, respectively.

\begin{lemma} \label{le:all_worse_than_algorithm_Polynomial}
For every given vectors $\VV$ for which the event $\EE$ holds and for every $\lambda \in [0, 1]$, $F(y(\VV) \wedge (\lambda \cdot \NN)) \geq F(\lambda \cdot \NN) - \ee \cdot \sum_{i = 1}^n (4i)^{-2} \cdot f(OPT)$.
\end{lemma}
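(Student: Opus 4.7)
The plan is to mirror the proof of Lemma~\ref{le:all_worse_than_algorithm}, carefully tracking the extra additive error arising from the fact that $y(\VV)$ is defined via the approximate estimators $m_i$ rather than the true partial derivatives. By the chain rule,
\[
F(y(\VV) \wedge (\lambda \cdot \NN))
= f(\varnothing) + \int_0^\lambda \sum_{\substack{u_i \in \NN \\ z \leq y(\VV)_{u_i}}} \partial_{u_i} F(y(\VV) \wedge (z \cdot \NN)) \, dz,
\]
since only coordinates $u_i$ with $y(\VV)_{u_i} \geq z$ actively change as $z$ increases. The first step in the original proof survives unchanged: because $y(\VV) \wedge (z \cdot \NN) \leq z \cdot \NN$ coordinate-wise, submodularity gives $\partial_{u_i} F(y(\VV) \wedge (z \cdot \NN)) \geq \partial_{u_i} F(z \cdot \NN)$ for every $u_i$ in the sum.

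The second step is where the approximate nature of $y(\VV)$ costs something. In the original proof one passed from the restricted sum to the full sum over $\NN$ using the fact that $\partial_{u_i} F(z \cdot \NN) \leq 0$ whenever $z > y_{u_i}$. Here we instead invoke the hypothesis that event $\EE$ holds, which by definition guarantees $\partial_{u_i} F(z \cdot \NN) \leq \ee (4i)^{-2} \cdot f(OPT)$ for every $z > y(\VV)_{u_i}$. Hence for every $z \in [0,1]$,
\[
\sum_{\substack{u_i \in \NN \\ z \leq y(\VV)_{u_i}}} \partial_{u_i} F(z \cdot \NN)
\;\geq\;
\sum_{u_i \in \NN} \partial_{u_i} F(z \cdot \NN) \;-\; \ee \cdot f(OPT) \cdot \sum_{i=1}^n (4i)^{-2}.
\]

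Combining the two displays and integrating over $[0, \lambda]$, the chain rule applied in the other direction gives $\int_0^\lambda \sum_{u \in \NN} \partial_u F(z \cdot \NN) dz = F(\lambda \cdot \NN) - f(\varnothing)$, and the accumulated error is at most $\lambda \cdot \ee \cdot f(OPT) \cdot \sum_i (4i)^{-2} \leq \ee \cdot f(OPT) \cdot \sum_i (4i)^{-2}$ since $\lambda \leq 1$. Putting these together yields $F(y(\VV) \wedge (\lambda \cdot \NN)) \geq F(\lambda \cdot \NN) - \ee \cdot f(OPT) \cdot \sum_{i=1}^n (4i)^{-2}$, as claimed. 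There is no real obstacle here beyond careful bookkeeping; the argument is a routine approximate version of the earlier lemma, made possible by the uniform $O(\ee\, i^{-2})$ bound on the estimation error guaranteed by $\EE$.
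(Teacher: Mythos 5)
Your proposal is correct and follows essentially the same route as the paper's proof: the same chain-rule decomposition, the same submodularity step $\partial_{u_i} F(y(\VV) \wedge (z \cdot \NN)) \geq \partial_{u_i} F(z \cdot \NN)$, and the same use of the second half of event $\EE$ to bound the terms dropped when extending the sum to all of $\NN$, with the accumulated error bounded by $\ee \cdot \sum_{i=1}^n (4i)^{-2} \cdot f(OPT)$ since $\lambda \leq 1$. No gaps.
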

\begin{proof}
For ease of notation, we use $y$ as a shorthand for $y(\VV)$ in this proof. Observe that:
\begin{align*}
	F(y \wedge (\lambda \cdot \NN))
	={} &
	f(\varnothing) + \int_0^\lambda \frac{d F(y \wedge (z \cdot \NN))}{d z} dz \inConference{\\}
	=\inConference{{} &}
	f(\varnothing) + \int_0^\lambda \sum_{\substack{u \in \NN \\ z \leq y_u}} \partial_u F(y \wedge (z \cdot \NN)) dz
	\enspace,
\end{align*}
where the second equality is due to the chain rule. By submodularity and the observation that $\EE$ implies $\partial_{u_i} F(z \cdot \NN) \leq \ee (4i)^{-2} \cdot f(OPT)$ for every $z > y_{u_i}$, we get:
\begin{align*}
	\sum_{\substack{u \in \NN \\ z \leq y_u}} \partial_u F(y \wedge (z \cdot \NN))
	\geq{} &
	\sum_{\substack{u \in \NN \\ z \leq y_u}} \partial_u F(z \cdot \NN)\inConference{\\}
	\geq\inConference{{} &}
	\sum_{u \in \NN} \partial_u F(z \cdot \NN) - \ee \cdot \sum_{i = 1}^n (4i)^{-2} \cdot f(OPT)
	\enspace.
\end{align*}
Combining the above equality and inequality, and using the chain rule again, gives:
\begin{align*}
	F(y \wedge (\lambda \cdot \NN))
	\geq{} &
	f(\varnothing) + \int_0^\lambda \left[\sum_{u \in \NN} \partial_u F(z \cdot \NN)\right] dz - \ee \cdot \sum_{i = 1}^n (4i)^{-2} \cdot f(OPT)\\
	={} &
	f(\varnothing) + \int_0^\lambda \frac{d F(z \cdot \NN)}{d z} dz - \ee \cdot \sum_{i = 1}^n (4i)^{-2} \cdot f(OPT)\\
	={} &
	F(\lambda \cdot \NN) - \ee \cdot \sum_{i = 1}^n (4i)^{-2} \cdot f(OPT)
	\enspace.
	\qedhere
\end{align*}
\end{proof}

\begin{corollary} \label{cor:unconstrained_ratio_Polynomial}
For every given vectors $\VV$ for which the event $\EE$ holds, $F(y(\VV)) \geq (e^{-1} - \ee/2) \cdot F(OPT)$.
\end{corollary}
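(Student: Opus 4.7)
The plan is to mimic the proof of Corollary~\ref{cor:unconstrained_ratio} while injecting, at each step, the approximation slack permitted by the event $\EE$, and then to solve the resulting perturbed differential inequality. Throughout I write $y = y(\VV)$ and $G(b) = F(y \wedge (b \cdot \NN))$, and denote the cumulative error budget by $\delta = \ee \sum_{i = 1}^n (4i)^{-2}$; the bound $\delta \leq \ee/4$ is immediate from the computation inside the proof of Corollary~\ref{cor:EE_probability}.

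Starting from the chain-rule identity
\[
	G(b) - G(a) = \int_a^b \sum_{\substack{u \in \NN \\ z \leq y_u}} \partial_u F(y \wedge (z \cdot \NN)) \, dz
\]
and applying submodularity to pass from $\partial_u F(y \wedge (z \cdot \NN))$ to $\partial_u F(z \cdot \NN)$, the first task is to replace the restricted sum over $\{u : z \leq y_u\}$ by a sum over $OPT$. Under $\EE$ this switch is lossy: a non-$OPT$ element $u_i$ with $z \leq y_{u_i}$ contributes at least $-\ee(4i)^{-2} f(OPT)$ to the former sum, while an $OPT$ element $u_i$ with $z > y_{u_i}$ contributes at most $\ee(4i)^{-2} f(OPT)$ to the latter; since each index $i$ falls in at most one of these two categories, the total switching loss is bounded by $\delta f(OPT)$. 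The remaining manipulations of Corollary~\ref{cor:unconstrained_ratio}---the multilinear identity $\partial_u F(z \cdot \NN) = [F(u \vee (z \cdot \NN)) - F(z \cdot \NN)]/(1-z)$, submodularity on the $OPT$ sum, Lemma~\ref{le:y_max_bound}, and Lemma~\ref{le:all_worse_than_algorithm_Polynomial} (which introduces an additional $\delta f(OPT)/(1-z)$ term)---then combine to yield, with $a = 0$,
\[
	G(b) \geq f(\varnothing) + \int_0^b \left[(1 - \delta) f(OPT) - \frac{G(z)}{1-z} - \frac{\delta \, f(OPT)}{1-z}\right] dz.
\]

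The minimal $G$ satisfying the equality version of this integral inequality solves the linear ODE $G'(b) + G(b)/(1-b) = (1-\delta) f(OPT) - \delta f(OPT)/(1-b)$ with $G(0) = f(\varnothing)$. Using the integrating factor $1/(1-b)$ together with $\int_0^b dz/(1-z) = -\ln(1-b)$ and $\int_0^b dz/(1-z)^2 = 1/(1-b) - 1$ produces
\[
	G(b) \geq (1-b) f(\varnothing) - (1-\delta)(1-b) \ln(1-b) \cdot f(OPT) - \delta b \cdot f(OPT).
\]
At $b^* = 1 - e^{-1}$ the $\delta$-cross-terms cancel neatly, giving $G(b^*) \geq (e^{-1} - \delta) f(OPT)$ after dropping the non-negative $f(\varnothing)$ contribution.

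Passing from $G(b^*)$ to $F(y) = G(1)$ cannot be done by evaluating the ODE at $b = 1$ (the integral $\int_0^1 dz/(1-z)$ diverges), so I will use an approximate monotonicity argument: under $\EE$, for every $u_i$ with $b \leq y_{u_i}$ submodularity gives $\partial_{u_i} F(y \wedge (b \cdot \NN)) \geq \partial_{u_i} F(b \cdot \NN) \geq -\ee(4i)^{-2} f(OPT)$, so $G'(b) \geq -\delta f(OPT)$ on $[b^*, 1]$, and hence $G(1) \geq G(b^*) - \delta e^{-1} f(OPT) \geq (e^{-1} - \delta(1 + e^{-1})) f(OPT) \geq (e^{-1} - \ee/2) f(OPT)$ since $\delta \leq \ee/4$ and $1 + e^{-1} < 2$. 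The main obstacle is pure bookkeeping: making sure that the inequalities in the exact proof which were sign-tight still combine, once each is loosened by $\ee(4i)^{-2} f(OPT)$, into a perturbed ODE whose closed-form solution loses only linearly in $\delta$.
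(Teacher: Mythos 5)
Your proposal is correct and follows essentially the same route as the paper's proof: mirror the exact-case corollary, charge the event-$\EE$ slack at the sum-switching step, at the perturbed Lemma~\ref{le:all_worse_than_algorithm_Polynomial}, and at the approximate-monotonicity passage from $b^{*}=1-e^{-1}$ to $1$, then solve the perturbed differential inequality. The only difference is cosmetic bookkeeping (you keep the $\delta/(1-z)$ term exactly in the ODE and exploit the cancellation at $b^{*}$, whereas the paper bounds it by $\delta/(1-b)$ and absorbs everything into a multiplicative factor), and both yield the same $e^{-1}-\ee/2$ guarantee.
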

\begin{proof}
Again, we use $y$ as a shorthand for $y(\VV)$ in this proof. Fix two values $0 \leq a \leq b \leq 1$. By the chain rule:
\begin{align} \label{eq:integral_form_Polynomial}
	\inConference{&}F(y \wedge (b \cdot \NN)) -  F(y \wedge (a \cdot \NN))\inConference{\\}
	={} &
	\int_a^b \frac{d F(y \wedge (z \cdot \NN))}{dz} dz \inConference{\nonumber} \\
	={} &
	\int_a^b \sum_{\substack{u \in \NN \\ z \leq y_u}} \partial_u F(y \wedge (z \cdot \NN)) dz\inConference{\nonumber \\}
	\geq\inConference{{} &}
	\int_a^b \sum_{\substack{u \in \NN \\ z \leq y_u}} \partial_u F(z \cdot \NN) dz \nonumber
	\enspace,
\end{align}
where the inequality follows from submodularity. We use two ways to lower bound the rightmost hand side of the above inequality. First, observe that $\partial_{u_i} F(z \cdot \NN) \geq -\ee (4i)^{-2} \cdot f(OPT)$ whenever $z < y_{u_i}$. Thus, $F(y \wedge (b \cdot \NN)) - F(y \wedge (a \cdot \NN)) \geq -\ee(b -a) \cdot \sum_{i = 1}^n (4i)^{-2} \cdot f(OPT)$. Additionally, $\partial_u F(z \cdot \NN) \leq \ee (4i)^{-2} \cdot f(OPT)$ whenever $z > y_u$ and $\partial_u F(z \cdot \NN) \geq -\ee (4i)^{-2} \cdot f(OPT)$ whenever $z < y_u$. This allows us to derive the second lower bound, which holds whenever $z$ is not equal to any coordinate of $y$ (\ie, for every value of $z$ except for a finite set of values):
\begin{align*}
	\sum_{\substack{u \in \NN \\ z \leq y_u}} \partial_u F(z \cdot \NN)
	\geq{} &
	\sum_{u \in OPT} \partial_u F(z \cdot \NN) - \ee \cdot \sum_{i = 1}^n (4i)^{-2} \cdot f(OPT)\\
	={} &
	\frac{\sum_{u \in OPT} [F(u \vee (z \cdot \NN)) - F(z \cdot \NN)]}{1 - z} - \ee \cdot \sum_{i = 1}^n (4i)^{-2} \cdot f(OPT)\\
	\geq{} &
	\frac{F(OPT \vee (z \cdot \NN)) - F(z \cdot \NN)}{1 - z} - \ee \cdot \sum_{i = 1}^n (4i)^{-2} \cdot f(OPT)\\
	\geq{} &
	f(OPT) - \frac{F(y \wedge (z \cdot \NN))}{1 - z} - \ee \cdot \left(1 + \frac{1}{1 - b}\right) \cdot \sum_{i = 1}^n (4i)^{-2} \cdot f(OPT)
	\enspace,
\end{align*}
where the last inequality follows from Lemmata~\ref{le:all_worse_than_algorithm_Polynomial} and~\ref{le:y_max_bound}. Plugging this lower bound and $a = 0$ into~\eqref{eq:integral_form_Polynomial} gives:
\begin{align*}
	F(y \wedge (b \cdot& \NN)) \\
	\geq{} &
	f(\varnothing) + \int_0^b \left[f(OPT) - \frac{F(y \wedge (z \cdot \NN))}{1 - z} - \ee \cdot \left(1 + \frac{1}{1 - b}\right) \cdot \sum_{i = 1}^n (4i)^{-2} \cdot f(OPT)\right] dz
	\enspace.
\end{align*}
The solution of this differential equation is:
\[
	F(y \wedge (b \cdot \NN))
	\geq
	-(1 - b) \cdot \ln (1 - b) \cdot \left(1 - \ee \cdot \left(1 + \frac{1}{1 - b}\right) \cdot \sum_{i = 1}^n (4i)^{-2} \right) \cdot f(OPT)
	\enspace.
\]
Choosing $b = 1 - e^{-1}$, we get:
\begin{align*}
	F(y)
	={} &
	F(y \wedge (1 \cdot \NN))
	\geq
	F(y \wedge ((1 - e^{-1}) \cdot \NN)) - \ee e^{-1} \cdot \sum_{i = 1}^n (4i)^{-2} \cdot f(OPT)\\
	\geq{} &
	e^{-1} \left(1 - 4\ee \cdot \sum_{i = 1}^n (4i)^{-2} \right) \cdot f(OPT) - \ee e^{-1} \cdot \sum_{i = 1}^n (4i)^{-2} \cdot f(OPT)\\
	\geq{} &
	\left(e^{-1} - 4\ee \cdot \sum_{i = 1}^n (4i)^{-2} \right) \cdot f(OPT)
	\enspace.
\end{align*}

The corollary now follows by observing that:
\[
	4 \cdot \sum_{i = 1}^n (4i)^{-2}
	\geq
	\frac{1}{4} + 4 \cdot \int_1^n \frac{dx}{(4x)^2}
	=
	\frac{1}{4} - \left. \frac{1}{4x} \right|_1^n
	<
	\frac{1}{4} + \frac{1}{4}
	=
	\frac{1}{2}
	\enspace.
	\qedhere
\]
\end{proof}

We are now ready to prove the competitive ratio of Algorithm~\ref{alg:IncreaseFromZero_Polynomial}, and complete the proof of Theorem~\ref{th:unconsrained_positive}.

\begin{lemma}
Algorithm~\ref{alg:IncreaseFromZero_Polynomial} has a competitive ratio of at least $e^{-1} - \ee$.
\end{lemma}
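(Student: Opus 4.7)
The plan is to combine the three ingredients already assembled: Observation~\ref{obs:y_alg_equivalent_Polynomial}, which for every fixed choice of the sample vectors $\VV$ identifies $\mathbb{E}[f(S_n) \mid \VV]$ with $F(y(\VV))$; Corollary~\ref{cor:unconstrained_ratio_Polynomial}, which lower bounds $F(y(\VV))$ by $(e^{-1} - \ee/2) \cdot f(OPT)$ whenever the event $\EE$ holds for $\VV$; and Corollary~\ref{cor:EE_probability}, which guarantees $\Pr[\EE] \geq 1 - \ee/2$. Since $\EE$ is an event determined entirely by the random vectors $\VV$ and not by the thresholds $\theta_1,\dotsc,\theta_n$, conditioning on $\VV$ is legitimate and does not disturb the analysis of Observation~\ref{obs:y_alg_equivalent_Polynomial} (which itself treats $\VV$ as fixed).

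First I would write the law of total expectation with respect to the randomness in $\VV$:
\[
\mathbb{E}[f(S_n)] = \mathbb{E}\bigl[\,\mathbb{E}[f(S_n) \mid \VV]\,\bigr] = \mathbb{E}[F(y(\VV))].
\]
Next I would split this expectation according to whether $\EE$ occurs. On the event $\EE$, Corollary~\ref{cor:unconstrained_ratio_Polynomial} yields $F(y(\VV)) \geq (e^{-1} - \ee/2) \cdot f(OPT)$. On the complement $\bar{\EE}$, I use only non-negativity of $f$ (and hence of $F$ on $[0,1]^\NN$) to bound $F(y(\VV)) \geq 0$. Combining with $\Pr[\EE] \geq 1 - \ee/2$ from Corollary~\ref{cor:EE_probability}, I obtain
\[
\mathbb{E}[f(S_n)] \geq \Pr[\EE] \cdot (e^{-1} - \ee/2) \cdot f(OPT) \geq (1 - \ee/2)(e^{-1} - \ee/2) \cdot f(OPT).
\]

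The only remaining step is the elementary verification that $(1 - \ee/2)(e^{-1} - \ee/2) \geq e^{-1} - \ee$. Expanding gives $e^{-1} - \ee/2 - \ee/(2e) + \ee^2/4$, so the desired inequality reduces to $\ee/2 - \ee/(2e) + \ee^2/4 \geq 0$, which is immediate since $1/2 > 1/(2e)$ and $\ee > 0$. This yields $\mathbb{E}[f(S_n)] \geq (e^{-1} - \ee) \cdot f(OPT)$, matching the claim.

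There is no real obstacle here; the only point worth noting carefully is that the event $\EE$ is measurable with respect to $\VV$ alone, so the conditional application of Observation~\ref{obs:y_alg_equivalent_Polynomial} and Corollary~\ref{cor:unconstrained_ratio_Polynomial} is valid. All nontrivial work (concentration, monotonicity of $m_i(z)$, the differential inequality for $F(y\wedge (b\cdot\NN))$, and the cumulative $\sum(4i)^{-2}$ error bookkeeping) has already been discharged in the preceding lemmata and corollaries, so this final lemma is essentially a two-line combination.
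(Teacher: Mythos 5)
Your proof is correct and follows essentially the same route as the paper: both condition on the event $\EE$ (determined by $\VV$ alone), apply Observation~\ref{obs:y_alg_equivalent_Polynomial} together with Corollaries~\ref{cor:EE_probability} and~\ref{cor:unconstrained_ratio_Polynomial}, and drop the complement term using non-negativity of $f$. Your explicit verification that $(1-\ee/2)(e^{-1}-\ee/2) \geq e^{-1}-\ee$ is a detail the paper leaves implicit, but there is no substantive difference.
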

\begin{proof}
By Observation~\ref{obs:y_alg_equivalent_Polynomial}, the expected value of the set produced by Algorithm~\ref{alg:IncreaseFromZero_Polynomial} is $\mathbb{E}[F(y(\VV))]$. By the law of total expectation:
\[
	\mathbb{E}[F(y(\VV))]
	\geq
	\Pr[\EE] \cdot \mathbb{E}[F(y(\VV)) \mid \EE]
	\geq
	(1 - \ee/2) \cdot (e^{-1} - \ee/2) \cdot f(OPT)
	\geq
	(e^{-1} - \ee) \cdot f(OPT)
	\enspace,
\]
where the second inequality holds by Corollaries~\ref{cor:EE_probability} and~\ref{cor:unconstrained_ratio_Polynomial}.
\end{proof}}
\inArxiv{\section{A Polynomial Time Implementation of Algorithm~\ref{alg:GreedyWithThreshold} for the Objective \texorpdfstring{$g(S)$}{g(S)}} \label{app:uniform_polynomial}

In this section we present a polynomial time implementation of Algorithm~\ref{alg:GreedyWithThreshold} that can be used to get a set $S$ obeying $g(S) \geq (56/627 - \ee) \cdot f(OPT)$, for any constant $\ee > 0$, and thus, together with the rounding described in Section~\ref{sc:uniform}, proves the second part of Theorem~\ref{th:cardinality_positive}. The polynomial time implementation is given as Algorithm~\ref{alg:GreedyWithThreshold_Polynomial}.

\begin{algorithm*}[h!t]
\caption{\textsf{Greedy with Theshold for Non-monotone Functions - Polynomial Time Implementation}$(c, p)$} \label{alg:GreedyWithThreshold_Polynomial}
Let $S_0 \gets \varnothing$.\\
\ForEach{element $u_i$ revealed}
{
	\If{$i \leq pk$}
	{
		Let $S_i \gets S_{i - 1} + u_i$.\\
	}
	\Else
	{
		\For{each $u \in S_{i - 1}$}
		{
			Approximate
			\begin{align*}
				m_{u, i} ={} & g(S_{i - 1} + u_i - u'_i) - (1 + c/(pk)) \cdot g(S_{i - 1}) \\={} & \mathbb{E}[f(\RSet(p^{-1}(S_{i - 1} + u_i - u'_i))) - (1 + c/(pk)) \cdot f(\RSet(p^{-1} \cdot S_{i - 1}))]
			\end{align*}
			using $\ell = \lceil 50000 \ee^{-2}i^6(p + c)^2 \ln[2 \ee^{-1} pk(2i)^2] \rceil$ samples.\\
		}
	
		\BlankLine
	
		Let $u'_i$ be the element of $S_{i - 1}$ maximizing $m_{u'_i, i}$.\\
		\If{$m_{u'_i, i} \geq 0$}
		{
			Let $S_i \gets S_{i - 1} + u_i - u'_i$.
		}
		\Else
		{
			Let $S_i \gets S_{i - 1}$.
		}
	}
}
\end{algorithm*}

\begin{lemma}
Every value $m_{u, i}$ calculated by Algorithm~\ref{alg:GreedyWithThreshold_Polynomial} obeys:
\[
	\Pr[|m_{u, i} - g(S_{i - 1} + u_i - u'_i) + (1 + c/(pk)) \cdot g(S_{i - 1})| \leq \ee(4i)^{-3} \cdot f(OPT)] \leq \ee p^{-1}k^{-1}(2i)^{-2}
	\enspace.
\]
\end{lemma}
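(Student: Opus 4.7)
The plan is to apply the Chernoff-style concentration bound from Lemma~\ref{lem:concentration_bound} to the $\ell$ independent samples used to estimate $m_{u,i}$. (Note that the statement as displayed has the inequality directions swapped; the content being proved is that the \emph{error} exceeds $\ee(4i)^{-3} f(OPT)$ with probability at most $\ee p^{-1} k^{-1} (2i)^{-2}$, a standard tail bound.)

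First, I would establish a uniform bound on the magnitude of each sample. A single sample is of the form
\[
	X_j = f(\RSet(p^{-1}(S_{i-1} + u_i - u'_i))) - (1 + c/(pk)) \cdot f(\RSet(p^{-1} \cdot S_{i-1})).
\]
Since $f$ is non-negative and $f(OPT) = \max_{S \subseteq \NN} f(S)$, each of the two terms lies in $[0, f(OPT)]$, so $|X_j| \leq M$ where $M := (1 + c/(pk)) \cdot f(OPT)$. Since $p \geq 1$ and $k \geq 1$, we have $1 + c/(pk) \leq p + c$, hence $M \leq (p+c) \cdot f(OPT)$. By the definitions of $g$ and $\RSet$, $\mathbb{E}[X_j] = g(S_{i-1} + u_i - u'_i) - (1 + c/(pk)) \cdot g(S_{i-1})$, which is exactly the quantity $m_{u,i}$ is meant to estimate.

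Next, I would normalize by setting $Y_j = X_j / M$ so that $Y_j \in [-1, 1]$, then invoke Lemma~\ref{lem:concentration_bound} on the empirical mean of $\ell$ independent copies of $Y_j$ with deviation $\alpha := \ee (4i)^{-3} f(OPT) / M$. Combining the upper and lower tails via a union bound yields
\[
	\Pr[|m_{u,i} - \mathbb{E}[X_j]| > \ee (4i)^{-3} f(OPT)] \leq 2 e^{-\alpha^2 \ell / 12}.
\]

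Finally, I would verify that the chosen $\ell = \lceil 50000 \ee^{-2} i^6 (p+c)^2 \ln[2 \ee^{-1} pk (2i)^2] \rceil$ forces the right-hand side below $\ee p^{-1} k^{-1} (2i)^{-2}$. Since $M \leq (p+c) f(OPT)$, we have $\alpha^2 \geq \ee^2 (4i)^{-6} / (p+c)^2$. Using $12 \cdot 4^6 = 49152 < 50000$, this gives $\alpha^2 \ell / 12 \geq \ln[2 \ee^{-1} pk (2i)^2]$, so that $2 e^{-\alpha^2 \ell / 12} \leq \ee p^{-1} k^{-1} (2i)^{-2}$. There is no real obstacle here; the only care needed is in bounding $1 + c/(pk)$ by $p + c$ so that the factor of $(p+c)^2$ in the sample count correctly absorbs the magnitude bound on individual samples.
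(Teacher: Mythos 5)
Your overall route is the same as the paper's: bound the magnitude of each sample, normalize so the samples lie in $[-1,1]$, apply Lemma~\ref{lem:concentration_bound} to both tails, and verify that the choice of $\ell$ forces $2e^{-\alpha^2\ell/12} \leq \ee p^{-1}k^{-1}(2i)^{-2}$ (your arithmetic $12 \cdot 4^6 = 49152 < 50000$ is exactly the slack the paper exploits). You also correctly read the statement as a tail bound despite the flipped inequality sign in its display.

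The one genuine flaw is the justification of the per-sample magnitude bound. In this appendix $OPT$ is the optimum of the \emph{cardinality-constrained} problem ($|OPT| \leq k$); the identity $f(OPT) = \max_{S \subseteq \NN} f(S)$ is only available in the unconstrained model, not here. The modified algorithm holds up to $pk$ elements, so $f(\RSet(p^{-1} \cdot S_{i-1}))$ need not be at most $f(OPT)$; it is only bounded by $p \cdot f(OPT)$, which the paper obtains by subadditivity of the non-negative submodular $f$ (a set of size at most $pk$ splits into $p$ pieces of size at most $k$, each of value at most $f(OPT)$). Hence your claim that each of the two terms lies in $[0, f(OPT)]$, and the intermediate bound $|X_j| \leq (1 + c/(pk)) \cdot f(OPT)$, are false as stated; the correct per-sample range is $[-(p + c/k) \cdot f(OPT), \, p \cdot f(OPT)]$. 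Your final normalization by $(p+c) \cdot f(OPT)$ happens to dominate this correct bound, so once the false step is replaced by the subadditivity argument the rest of your computation goes through unchanged and yields exactly the paper's conclusion.
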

\begin{proof}
The value $m_{u, i}$ is calculated by averaging $\ell$ samples. For every set $S$ of size at most $pk$, submodularity implies $f(S) \leq p \cdot f(OPT)$. Hence, each sample must belong to the range $[-(p + c/k) \cdot f(OPT), p \cdot f(OPT)]$. Thus, by Lemma~\ref{lem:concentration_bound},
\begin{align*}
	\Pr[|m_{u, i} - g(S_{i - 1} + u_i - u'_i) + (1 + c/(pk)) \cdot g(S_{i - 1})| &\leq \ee(4i)^{-3} \cdot f(OPT)]
	\leq
	2e^{-\frac{\ell[\ee(4i)^{-3} / (p + c)]^2}{12}}\\
	\leq{} &
	2e^{-\ln[2 \ee^{-1} pk(2i)^2]}
	=
	\ee p^{-1}k^{-1}(2i)^{-2}
	\enspace.
	\qedhere
\end{align*}
\end{proof}

\begin{corollary} \label{cor:approximations_right}
With probability at least $1 - \ee/2$, all the values $m_{u, i}$ calculated by Algorithm~\ref{alg:GreedyWithThreshold_Polynomial} are accurate approximations up to an error of $\ee(4i)^{-3} \cdot f(OPT)$.
\end{corollary}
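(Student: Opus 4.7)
The plan is to apply a one-line union bound on top of the per-approximation bound of the preceding lemma. I first read that lemma with the direction of inequality that is meaningful in context (the stated ``$\leq$'' inside the probability is clearly a typo for ``$>$''): for any single calculation performed by the algorithm, the probability that $m_{u,i}$ deviates from $g(S_{i-1}+u_i-u'_i) - (1+c/(pk))\cdot g(S_{i-1})$ by more than $\ee(4i)^{-3}\cdot f(OPT)$ is at most $\ee\, p^{-1}k^{-1}(2i)^{-2}$.

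Next I would count the total number of approximations actually computed by Algorithm~\ref{alg:GreedyWithThreshold_Polynomial}. Because $m_{u,i}$ is evaluated only in rounds $i > pk$ and only for the (at most $k$) elements $u \in S_{i-1}$, there are at most $k$ approximations per round. A union bound over all such approximations gives a total failure probability of at most
\[
	\sum_{i > pk} k \cdot \ee\, p^{-1} k^{-1} (2i)^{-2}
	=
	\frac{\ee}{p}\sum_{i > pk}(2i)^{-2}
	\leq
	\frac{\ee}{p}\sum_{i=1}^{\infty}(2i)^{-2}.
\]

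To finish, I would bound $\sum_{i=1}^{\infty}(2i)^{-2} \leq 1/2$ by keeping the $i=1$ term (contributing $1/4$) and using $\int_1^{\infty}(2x)^{-2}\,dx = 1/4$ for the rest. Combined with $p \geq 1$, this yields the desired failure probability of $\ee/2$, so the complementary event — all $m_{u,i}$ are within the prescribed error — holds with probability at least $1-\ee/2$, exactly as claimed. There is no real obstacle here; the only point worth emphasising is that the sample size $\ell$ was engineered in the algorithm precisely so that the per-approximation failure probability carries the factor $(2i)^{-2}$, which is exactly what is needed to make the series summable after accounting for up to $k$ approximations per round.
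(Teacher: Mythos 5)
Your proposal is correct and takes essentially the same route as the paper: a union bound over all computed $m_{u,i}$ values using the per-approximation failure probability $\ee\, p^{-1}k^{-1}(2i)^{-2}$, then bounding $\sum_i (2i)^{-2} \leq 1/2$ (and you are right that the ``$\leq$'' inside the probability in the preceding lemma is a typo for ``$>$''). One small slip worth noting: the modified algorithm maintains solutions of size up to $pk$, not $k$, so there are at most $pk$ approximations per round; this is harmless, since the corrected count gives $\sum_i pk \cdot \ee\, p^{-1}k^{-1}(2i)^{-2} = \ee \sum_i (2i)^{-2} \leq \ee/2$, which is exactly the paper's calculation---your extra factor of $p^{-1}$ slack simply disappears.
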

\begin{proof}
For every $1 \leq i \leq n$, Algorithm~\ref{alg:GreedyWithThreshold_Polynomial} approximates at most $pk$ different $m_{u, i}$ values. Hence, by the union bound, the probability that all these approximations are correct up to an error of $\ee(4i)^{-3} \cdot f(OPT)$ is at least:
\[
	1 - pk \cdot \sum_{i = 1}^n(\ee p^{-1}k^{-1}(2i)^{-2})
	=
	1 - \ee \cdot \sum_{i = 1}^n (2i)^{-2}
	\geq
	1 - \frac{\ee}{4} - \ee \cdot \int_1^n \frac{dx}{4x}
	=
	1 - \frac{\ee}{4} + \left. \frac{\ee}{4x} \right|_1^n
	>
	1 - \frac{\ee}{2}
	\enspace.
	\qedhere
\]
\end{proof}

Let $\EE$ be the event that Corollary~\ref{cor:approximations_right} holds. As in Section~\ref{sc:uniform}, for every $0 \leq i \leq n$ we define $A_i = \cup_{j = 1}^i S_j$. The following two lemmata correspond to Lemmata~\ref{le:maximum_better_than_average_random} and~\ref{le:S_fraction_of_A_random}.

\begin{lemma} \label{le:maximum_better_than_average_random_Polynomial}
Assuming $\EE$ holds, then for every $pk < i \leq n$, $g(S_{i - 1} + u_i - u'_i) - g(S_{i - 1}) \geq g(u_i \mid A_{i - 1}) - g(S_{i - 1}) / (pk) - 2\ee(4i)^{-3} \cdot f(OPT)$.
\end{lemma}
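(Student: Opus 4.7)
The plan is to mimic the proof of Lemma~\ref{le:maximum_better_than_average_random} while absorbing the two sources of slack introduced by Algorithm~\ref{alg:GreedyWithThreshold_Polynomial}: namely, that $u'_i$ is chosen to be only an approximate maximizer, and that each quantity $m_{u,i}$ carries an additive sampling error.

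First, I would check that the original averaging-plus-submodularity argument used in the proof of Lemma~\ref{le:maximum_better_than_average_random} only uses the fact that $u'_i$ is the \emph{true} maximizer of $g(S_{i-1} + u_i - u)$ over $u \in S_{i-1}$. Letting $u^* \in S_{i-1}$ denote that true maximizer (which the polynomial-time algorithm does not identify exactly), the same derivation gives
\[
g(S_{i-1} + u_i - u^*) - g(S_{i-1}) \geq g(u_i \mid A_{i-1}) - g(S_{i-1})/(pk).
\]

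Second, I would transfer this bound from $u^*$ to the algorithm's actual choice $u'_i$. The key observation is that $M_{u,i} := g(S_{i-1} + u_i - u) - (1 + c/(pk)) \cdot g(S_{i-1})$ differs from $g(S_{i-1} + u_i - u)$ by a quantity independent of $u$, so $u^*$ is also the true maximizer of $M_{u,i}$ over $u \in S_{i-1}$. Under event $\EE$, $|m_{u,i} - M_{u,i}| \leq \ee(4i)^{-3} \cdot f(OPT)$ for every $u \in S_{i-1}$, and since $u'_i$ maximizes $m_{u,i}$ we get the chain
\[
M_{u'_i, i} \geq m_{u'_i, i} - \ee(4i)^{-3} f(OPT) \geq m_{u^*, i} - \ee(4i)^{-3} f(OPT) \geq M_{u^*, i} - 2\ee(4i)^{-3} f(OPT).
\]
After canceling the $u$-independent term $(1 + c/(pk)) \cdot g(S_{i-1})$, this becomes $g(S_{i-1} + u_i - u'_i) \geq g(S_{i-1} + u_i - u^*) - 2\ee(4i)^{-3} f(OPT)$.

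Chaining the two inequalities and subtracting $g(S_{i-1})$ from both sides yields the claimed bound. The main subtle point is the second step, specifically noticing that the surrogate objective $M_{u,i}$ used by the algorithm shares its $\arg\max$ over $u \in S_{i-1}$ with the quantity appearing in the original averaging bound; otherwise the ``approximate maximizer suffices'' argument would not directly apply, and one would need to invoke submodularity again to bridge the two objectives.
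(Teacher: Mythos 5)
Your proof is correct, and it reaches the paper's exact error term $2\ee(4i)^{-3}\cdot f(OPT)$, but it is organized differently from the paper's argument. The paper never introduces the true maximizer $u^*$: it runs the averaging argument directly on the noisy estimates, using that $m_{u'_i,i}\geq \frac{1}{pk}\sum_{u'\in S_{i-1}} m_{u',i}$ (since $u'_i$ maximizes $m_{\cdot,i}$), paying one $\ee(4i)^{-3}\cdot f(OPT)$ to relate $m_{u'_i,i}$ to $g(S_{i-1}+u_i-u'_i)-(1+c/(pk))\cdot g(S_{i-1})$ and a second such term to relate each averaged $m_{u',i}$ back to its true value, and then finishing with the same submodularity and non-negativity steps as in Lemma~\ref{le:maximum_better_than_average_random}. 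You instead black-box the exact lemma for the true argmax $u^*$ (which is legitimate: its proof only needs $|S_{i-1}|=pk$, $S_{i-1}\subseteq A_{i-1}$, submodularity, non-negativity, and that $u^*$ maximizes $g(S_{i-1}+u_i-u)$ over $u\in S_{i-1}$, all of which hold for the sets produced by Algorithm~\ref{alg:GreedyWithThreshold_Polynomial}) and then add an argmax-stability step: since $m_{u,i}$ and $g(S_{i-1}+u_i-u)$ differ by a $u$-independent constant, the event $\EE$ gives $g(S_{i-1}+u_i-u'_i)\geq g(S_{i-1}+u_i-u^*)-2\ee(4i)^{-3}\cdot f(OPT)$. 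Your modular version makes the ``approximate maximizer loses at most twice the estimation error'' principle explicit and reuses the exact lemma verbatim; the paper's inline version avoids introducing $u^*$ and keeps everything in a single chain of inequalities. Both are sound, and you also correctly read the intended definition of $m_{u,i}$ (with $u$ in place of the $u'_i$ that appears in the algorithm's display).
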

\begin{proof}
By the choice of $u'_i$ and an averaging argument,
\begin{align*}
	g(&S_{i - 1} + u_i - u'_i) - g(S_{i - 1}) 
	\geq
	m_{u'_i, i} + c/(pk) \cdot g(S_{i - 1}) - \ee(4i)^{-3} \cdot f(OPT)\\
	\geq{} &
	\frac{\sum_{u' \in S_{i - 1}} [m_{u', i} + c/(pk) \cdot g(S_{i - 1}) - \ee(4i)^{-3} \cdot f(OPT)]}{pk}\\
	\geq{} &
	\frac{\sum_{u' \in S_{i - 1}} [g(S_{i - 1} + u_i - u') - g(S_{i - 1}) - 2\ee(4i)^{-3} \cdot f(OPT)]}{pk}\\
	\geq{} &
	\frac{\sum_{u' \in S_{i - 1}} [g(S_{i - 1} + u_i) - g(S_{i - 1})]}{pk} + \frac{\sum_{u' \in S_{i - 1}} [g(S_{i - 1} - u') - g(S_{i - 1})]}{pk} - 2\ee(4i)^{-3} \cdot f(OPT)
	\enspace,
\end{align*}
where the last inequality follows from submodularity. The first term on the rightmost hand side is $g(u_i \mid S_{i - 1})$, which can be lower bounded by $g(u_i \mid A_{i - 1})$  because $S_{i - 1} \subseteq A_{i - 1}$. The second term on the rightmost hand side can be lower bounded using the submodularity and non-negativity of $g$ as follows:
\begin{align*}
	\frac{\sum_{u' \in S_{i - 1}} [g(S_{i - 1} - u') - g(S_{i - 1})]}{pk}
	\geq{} &
	\frac{g(\varnothing) - g(S_{i - 1})}{pk}\inConference{\\}
	\geq\inConference{{} &}
	- \frac{g(S_{i - 1})}{pk}
	\enspace.
	\qedhere
\end{align*}
\end{proof}

The following lemma relates $S_i$ and $A_i$.

\begin{lemma} \label{le:S_fraction_of_A_random_Polynomial}
Assuming $\EE$ holds, then for every $0 \leq i \leq n$, $g(S_i) \geq \frac{c}{c + 1} \cdot g(A_i) - 2\ee \cdot \sum_{j = 1}^i (4j)^{-3} \cdot f(OPT) \geq g(A_i) - \ee/4 \cdot f(OPT)$.
\end{lemma}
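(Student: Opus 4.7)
The plan is to prove this by induction on $i$, mirroring the proof of Lemma~\ref{le:S_fraction_of_A_random} but carefully tracking the additional approximation errors that arise because Algorithm~\ref{alg:GreedyWithThreshold_Polynomial} uses the estimates $m_{u,i}$ in place of the exact marginals. The base case $i=0$ is trivial since both $S_0$ and $A_0$ equal $\varnothing$ and the error term is zero. For the inductive step, it suffices to show
\[
	g(S_i) - g(S_{i-1}) \geq \tfrac{c}{c+1}\bigl[g(A_i) - g(A_{i-1})\bigr] - 2\ee(4i)^{-3} f(OPT)
	\enspace,
\]
since summing this telescoping inequality yields the first inequality in the lemma.

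For the inductive step I split into three cases. First, if the algorithm rejects $u_i$ then $S_i=S_{i-1}$ and $A_i=A_{i-1}$, so the inequality is trivial. Second, if $i \leq pk$ then the algorithm unconditionally accepts, giving $g(S_i)-g(S_{i-1}) = g(u_i \mid S_{i-1}) \geq g(u_i \mid A_{i-1}) = g(A_i)-g(A_{i-1})$ by submodularity and $S_{i-1}\subseteq A_{i-1}$; since $c/(c+1)<1$ we are done, with no error incurred. The main case is $i>pk$ with acceptance, where I combine two lower bounds on $g(S_i)-g(S_{i-1})$ exactly as in the proof of Lemma~\ref{le:S_fraction_of_A_random}: Lemma~\ref{le:maximum_better_than_average_random_Polynomial} supplies
\[
	g(S_i)-g(S_{i-1}) \geq g(u_i \mid A_{i-1}) - g(S_{i-1})/(pk) - 2\ee(4i)^{-3} f(OPT)
	\enspace,
\]
and the acceptance condition $m_{u'_i,i}\geq 0$ together with the approximation guarantee from event $\EE$ supplies
\[
	g(S_i)-g(S_{i-1}) \geq c\cdot g(S_{i-1})/(pk) - \ee(4i)^{-3} f(OPT)
	\enspace.
\]
Averaging these two bounds with weights $c$ and $1$ (and dividing by $c+1$) cancels the $g(S_{i-1})/(pk)$ terms, giving
\[
	g(S_i)-g(S_{i-1}) \geq \tfrac{c}{c+1} g(u_i \mid A_{i-1}) - \tfrac{2c+1}{c+1}\ee(4i)^{-3} f(OPT)
	\enspace,
\]
and the inductive claim follows since $(2c+1)/(c+1)<2$ and $g(u_i \mid A_{i-1}) \geq g(A_i)-g(A_{i-1})$ by submodularity.

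The second inequality in the lemma then follows from the crude bound $2\ee \sum_{j=1}^i (4j)^{-3} \leq 2\ee \cdot \frac{1}{64} \bigl(1 + \int_1^\infty x^{-3}\, dx\bigr) = \tfrac{3\ee}{64} \leq \ee/4$. I expect the main (mild) obstacle to be making sure the error accounting in the averaging step actually produces a coefficient at most $2$ on $\ee(4i)^{-3}f(OPT)$; this is essentially what forces the $2$ in the statement, and must be checked to hold for the range of $c$ we care about (in particular for $c=7/4$). Everything else follows the monotone-case pattern verbatim, and the non-negativity and submodularity of $g$ (inherited from $f$ through the multilinear extension) are used in exactly the same places as in Lemma~\ref{le:S_fraction_of_A_random}.
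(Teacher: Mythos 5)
Your proof is correct and follows essentially the same route as the paper's: telescoping the per-step inequality, splitting into the rejection, $i \le pk$, and $i > pk$ acceptance cases, and combining the bound of Lemma~\ref{le:maximum_better_than_average_random_Polynomial} with the acceptance condition under $\EE$ via the same weighted average. The only cosmetic difference is that you track the explicit error coefficient $(2c+1)/(c+1)$, which is below $2$ for every $c>0$, so the check you flag is immediate rather than a genuine obstacle.
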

\begin{proof}
The second inequality holds since $\sum_{j = 1}^i (4j)^{-3} = 0$ for $i = 0$ and for $i > 0$:
\[
	2 \cdot \sum_{j = 1}^i (4j)^{-3}
	\leq
	\frac{1}{32} + \int_1^i \frac{dx}{32x^3}
	=
	\frac{1}{32} - \left. \frac{1}{64x^2}\right|_1^i
	<
	\frac{1}{32} + \frac{1}{64}
	<
	\frac{1}{4}
	\enspace.
\]
Hence, in the rest of the proof we concentrate on proving the first inequality.

For $i = 0$, the inequality clearly holds since $g(A_0) = g(\varnothing) = g(S_0)$. Thus, it is enough to prove that for every $1 \leq i \leq n$, $g(S_i) - g(S_{i - 1}) \geq \frac{c}{c + 1} \cdot [g(A_i) - g(A_{i - 1})] - 2\ee(4i)^{-3} \cdot f(OPT)$. If the algorithm does not accept $u_i$ then $S_i = S_{i - 1}$ and  $A_i = A_{i - 1}$, and the claim is trivial. Hence, we only need to consider the case in which the algorithm accepts element $u_i$.

If $i \leq k$ then $g(S_i) - g(S_{i - 1}) = g(u_i \mid S_{i - 1}) \geq g(u_i \mid A_{i - 1}) = g(A_i) - g(A_{i - 1})$, where the inequality follows from submodularity since $S_{i - 1} \subseteq A_{i - 1}$. If $i > k$, then it is possible to lower bound $g(S_i) - g(S_{i - 1}) = g(S_{i - 1} + u_i - u'_i) - g(S_{i - 1})$ in two ways. First, a lower bound of $g(u_i \mid A_{i - 1}) - g(S_{i - 1}) / k - 2\ee(4i)^{-3} \cdot f(OPT)$ is given by Lemma~\ref{le:maximum_better_than_average}. The second lower bound states that, since the algorithm accepts $u_i$,
\begin{align*}
	g(S_{i - 1} + u_i - u'_i) - g(S_{i - 1})
	\geq{} &
	m_{u'_i, i} + c \cdot g(S_{i - 1}) / (pk) - \ee(4i)^{-3} \cdot f(OPT)\\
	\geq{} &
	c \cdot g(S_{i - 1}) / (pk) - \ee(4i)^{-3} \cdot f(OPT)
	\enspace.
\end{align*}
Using both lower bounds, we get:
\begin{align*}
	g(S_i) - g(S_{i - 1})
	\geq{} &
	\max\left\{g(u_i \mid A_{i - 1}) - \frac{g(S_{i - 1})}{pk}, \frac{c \cdot g(S_{i - 1})}{pk}\right\}- 2\ee(4i)^{-3} \cdot f(OPT)\\
	\geq{} &
	\frac{c \cdot [g(u_i \mid A_{i - 1}) - g(S_{i - 1})/(pk)] + 1 \cdot [c \cdot g(S_{i - 1})/(pk)]}{c + 1} - 2\ee(4i)^{-3} \cdot f(OPT)\\
	\geq{} &
	\frac{c}{c + 1} \cdot g(u_i \mid A_{i - 1}) - 2\ee(4i)^{-3} \cdot f(OPT)\\
	={} &
	\frac{c}{c + 1} \cdot [g(A_i) - g(A_{i - 1})] - 2\ee(4i)^{-3} \cdot f(OPT)
	\enspace.
	\qedhere
\end{align*}
\end{proof}

We also need to show that $g(S_i)$ is roughly monotone as a function of $i$.
\begin{lemma} \label{lem:monotonicity}
Assuming $\EE$ holds, then for $pk < i \leq n$, $g(S_i) - g(S_{i - 1}) \geq -\ee(4i)^{-3} \cdot f(OPT)$.
\end{lemma}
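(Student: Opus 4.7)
The plan is to split into two easy cases based on whether $u_i$ is accepted, with the whole argument driven by the definition of $m_{u'_i, i}$ and the uniform approximation guarantee provided by event $\EE$.

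First I would dispatch the trivial case: if Algorithm~\ref{alg:GreedyWithThreshold_Polynomial} rejects $u_i$, then $S_i = S_{i-1}$, so $g(S_i) - g(S_{i-1}) = 0$, which already satisfies the desired bound. So the interesting case is when $u_i$ is accepted, which by the algorithm's acceptance rule means $m_{u'_i, i} \geq 0$.

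Next I would invoke event $\EE$: by the definition of $\EE$ (Corollary~\ref{cor:approximations_right}) the approximation $m_{u'_i, i}$ is within $\ee(4i)^{-3} \cdot f(OPT)$ of its target quantity $g(S_{i-1} + u_i - u'_i) - (1 + c/(pk)) \cdot g(S_{i-1})$. Combining this with $m_{u'_i, i} \geq 0$ yields
\[
g(S_{i-1} + u_i - u'_i) - \bigl(1 + c/(pk)\bigr) \cdot g(S_{i-1}) \geq -\ee(4i)^{-3} \cdot f(OPT).
\]
Since $S_i = S_{i-1} + u_i - u'_i$ in the accepted case, rearranging gives
\[
g(S_i) - g(S_{i-1}) \geq \frac{c}{pk} \cdot g(S_{i-1}) - \ee(4i)^{-3} \cdot f(OPT) \geq -\ee(4i)^{-3} \cdot f(OPT),
\]
where the final inequality uses non-negativity of $g$ (which follows from non-negativity of $f$) together with $c, p, k > 0$.

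There is no real obstacle here; the lemma is essentially a bookkeeping consequence of how the acceptance threshold is built into the definition of $m_{u'_i, i}$ (namely, with a built-in margin of $(c/(pk)) \cdot g(S_{i-1})$), combined with the one-sided sampling error bound from $\EE$. The only minor subtlety is recalling that under $\EE$ the approximation error is two-sided, so we use only the easy direction here, and that $g(S_{i-1}) \geq 0$ lets us discard the positive margin term.
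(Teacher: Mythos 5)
Your proposal is correct and follows essentially the same route as the paper's proof: split on whether $u_i$ is accepted, and in the accepted case combine $m_{u'_i,i} \geq 0$ with the error bound from $\EE$ and the non-negativity of $g$ to absorb the $\frac{c}{pk} \cdot g(S_{i-1})$ margin. (If anything, your writing of the margin term with $g(S_{i-1})$ is the cleaner rendering of the inequality the paper states.)
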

\begin{proof}
If $S_i = S_{i - 1}$ then the claim is trivial. Otherwise, we must have $m_{u'_i, i} \geq 0$, and thus,
\[
	g(S_i) - g(S_{i - 1})
	\geq
	\frac{c \cdot g(S_i)}{pk} - \ee(4i)^{-3} \cdot f(OPT)
	\geq
	- \ee(4i)^{-3} \cdot f(OPT)
	\enspace.
	\qedhere
\]
\end{proof}
\begin{corollary} \label{cor:monotonicity}
Assuming $\EE$ holds, then for $pk \leq i \leq n$, $g(S_i) \geq g(S_n) - \ee(8i)^{2} \cdot f(OPT)$.
\end{corollary}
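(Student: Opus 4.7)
The plan is to prove Corollary~\ref{cor:monotonicity} by telescoping Lemma~\ref{lem:monotonicity} across the range $j = i+1, i+2, \dotsc, n$ and then controlling the resulting tail sum of reciprocal cubes by an integral estimate. This is exactly the pattern used in Corollary~\ref{cor:EE_probability} and in the tail-sum bounds that conclude the proof of Corollary~\ref{cor:unconstrained_ratio_Polynomial}, so everything I need is already in place.

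Concretely, under event $\EE$, Lemma~\ref{lem:monotonicity} gives $g(S_j) - g(S_{j-1}) \geq -\ee(4j)^{-3} \cdot f(OPT)$ for every $pk < j \leq n$; note that the assumption $i \geq pk$ in the corollary ensures that each index $j$ in the summation range satisfies $j \geq pk+1 > pk$, so Lemma~\ref{lem:monotonicity} is applicable. Summing these inequalities telescopes the left-hand side and yields
\[
	g(S_n) - g(S_i)
	\;\geq\;
	-\,\ee \cdot f(OPT) \cdot \sum_{j=i+1}^{n} (4j)^{-3}
	\enspace.
\]
Rearranging, this means that $g(S_i)$ exceeds $g(S_n)$ by no more than $\ee \cdot f(OPT) \cdot \sum_{j=i+1}^{n} (4j)^{-3}$, which is the core content of the corollary.

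It only remains to bound the tail sum. I would use the standard integral comparison $\sum_{j=i+1}^{n} (4j)^{-3} \leq \int_{i}^{\infty} (4x)^{-3}\, dx = \tfrac{1}{128\,i^{2}}$, which is strictly less than $(8i)^{-2}$. Plugging this back into the telescoped inequality produces an error term of the form $\ee (8i)^{-2} \cdot f(OPT)$, matching the stated corollary.

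I do not expect any substantive obstacle. The entire argument is a one-line telescoping combined with a routine integral estimate, both applied to an already-established lemma; no new structural ingredient is needed, and the only minor bookkeeping is checking that the hypothesis $pk < j$ of Lemma~\ref{lem:monotonicity} is satisfied throughout the summation range, which follows immediately from the corollary's assumption $i \geq pk$.
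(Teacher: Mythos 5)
Your proof is correct and takes essentially the same route as the paper's: telescope Lemma~\ref{lem:monotonicity} over $j = i+1, \dotsc, n$ (valid since $i \geq pk$ implies $j > pk$) and bound the tail sum $\sum_{j=i+1}^{n}(4j)^{-3}$ by the integral estimate $\int_i^{\infty}(4x)^{-3}\,dx = \tfrac{1}{128 i^2} < (8i)^{-2}$. Note only that the exponent in the stated corollary, $\ee(8i)^{2}$, is a typo for $\ee(8i)^{-2}$ (as the paper's own proof and the later use in Corollary~\ref{co:S_n_guarantee_Polynomial} confirm), and your argument establishes exactly this intended bound.
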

\begin{proof}
By Lemma~\ref{lem:monotonicity},
\begin{align*}
	g(S_n) - g(S_i)
	\geq{} &
	-\ee \cdot \sum_{j = i + 1}^n (4i)^{-3} \cdot f(OPT)
	\geq
	-\ee \cdot \int_i^n \frac{dx}{64x^3} \cdot f(OPT)\\
	={} &
	\ee \cdot \left. \frac{1}{128x^2} \right|_i^n \cdot f(OPT)
	\geq
	- \frac{\ee}{128i^2} \cdot f(OPT)
	\geq
	- \frac{\ee}{(8i)^2} \cdot f(OPT)
	\enspace.
	\qedhere
\end{align*}
\end{proof}

Using the above claims, it is possible to get a guarantee on $g(S_n)$ assuming $\EE$ holds. The following corollary corresponds to Corollary~\ref{co:S_n_guarantee}.

\begin{corollary} \label{co:S_n_guarantee_Polynomial}
Assuming $\EE$ holds, Algorithm~\ref{alg:GreedyWithThreshold_Polynomial} produces a set $S_n$ such that:
\[
	g(S_n) \geq \left(\frac{cp^{-1}(1 - p^{-1})}{(c + 1)(1 + p^{-1}c)} - \frac{\ee}{2}\right) \cdot f(OPT)
	\enspace.
\]
\end{corollary}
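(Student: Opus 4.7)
The strategy is to mirror the proof of Corollary~\ref{co:S_n_guarantee} while carefully tracking the additive errors introduced by sampling, conditioning throughout on the event $\EE$ from Corollary~\ref{cor:approximations_right}. Under $\EE$, every approximation $m_{u,i}$ is off from its true value by at most $\ee(4i)^{-3}\cdot f(OPT)$, and Lemmata~\ref{le:maximum_better_than_average_random_Polynomial}, \ref{le:S_fraction_of_A_random_Polynomial}, \ref{lem:monotonicity}, and Corollary~\ref{cor:monotonicity} all hold.

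First, I fix an element $u_i \in OPT \setminus A_n$. Since $u_i$ must have been revealed and not accepted, and since elements with $i \leq pk$ are accepted unconditionally, we have $i > pk$. Because $u_i$ was rejected, $m_{u'_i, i} < 0$; combined with the accuracy guarantee of $\EE$, this gives $g(S_{i-1} + u_i - u'_i) - g(S_{i-1}) < c\cdot g(S_{i-1})/(pk) + \ee(4i)^{-3}\cdot f(OPT)$. Substituting Lemma~\ref{le:maximum_better_than_average_random_Polynomial} into the left-hand side and rearranging yields
\[
    g(u_i \mid A_{i-1}) < \frac{c+1}{pk}\,g(S_{i-1}) + 3\ee(4i)^{-3}\cdot f(OPT).
\]
Next, I apply the approximate monotonicity of $g(S_i)$ (Corollary~\ref{cor:monotonicity}, used in the form $g(S_{i-1}) \leq g(S_n) + \ee(8(i-1))^{-2}\cdot f(OPT)$) to replace $g(S_{i-1})$ by $g(S_n)$ up to an additional $O(\ee\cdot i^{-2})\cdot f(OPT)$ error. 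Summing over all $u_i \in OPT \setminus A_n$ (at most $k$ such elements, all at indices $i > pk$) and using submodularity to pass from $g(u_i \mid A_{i-1})$ to $g(u_i \mid A_n)$, I obtain
\[
    g(A_n \cup OPT) \leq g(A_n) + p^{-1}(c+1)\,g(S_n) + \alpha_1\ee\cdot f(OPT)
\]
for an absolute constant $\alpha_1$, where I have used $\sum_{i \geq 1} i^{-2} = O(1)$. Combining with Lemma~\ref{le:S_fraction_of_A_random_Polynomial}, which yields $g(A_n) \leq \frac{c+1}{c}\cdot g(S_n) + \alpha_2\ee\cdot f(OPT)$, produces
\[
    g(A_n \cup OPT) \leq \frac{(c+1)(1 + p^{-1}c)}{c}\cdot g(S_n) + \alpha_3\ee\cdot f(OPT).
\]
Finally, Lemma~\ref{le:sampling_lowerbound} applied exactly as in the proof of Corollary~\ref{co:S_n_guarantee} (with $A = OPT$, $B = A_n \setminus OPT$, and sampling probability $p^{-1}$ on each) gives $g(A_n \cup OPT) \geq p^{-1}(1-p^{-1})\,f(OPT)$. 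Chaining this lower bound with the upper bound and rearranging yields the claimed inequality.

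The main obstacle is the bookkeeping required to show that the aggregate error constant $\alpha_3$ can be absorbed into a net $\ee/2$ loss. This hinges on (i) the series $\sum_i (4i)^{-3}$ and $\sum_i (8i)^{-2}$ summing to small constants, and (ii) the $O(1)$ multiplicative factors $(c+1)/c$, $(c+1)/(pk)$ arising from the chain of inequalities, both of which hold because $c = 7/4$ and $p = 3$ are fixed. Any residual gap is handled by the large constant $50000$ in the definition of $\ell$, which was chosen precisely to make Corollary~\ref{cor:approximations_right} hold at accuracy $\ee(4i)^{-3}\cdot f(OPT)$, ensuring the final error term does not exceed $\ee/2 \cdot f(OPT)$. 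After this verification, the desired bound $g(S_n) \geq \bigl(\frac{cp^{-1}(1-p^{-1})}{(c+1)(1+p^{-1}c)} - \ee/2\bigr)\cdot f(OPT)$ follows immediately.
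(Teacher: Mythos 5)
Your proposal is correct and follows essentially the same route as the paper's proof: the rejection inequality combined with Lemma~\ref{le:maximum_better_than_average_random_Polynomial}, the approximate monotonicity from Corollary~\ref{cor:monotonicity}, the relation $g(A_n)\le\frac{c+1}{c}g(S_n)+O(\ee)f(OPT)$ from Lemma~\ref{le:S_fraction_of_A_random_Polynomial}, and the lower bound $g(A_n\cup OPT)\ge p^{-1}(1-p^{-1})f(OPT)$ via Lemma~\ref{le:sampling_lowerbound}, with the same error bookkeeping. One small attribution quibble: the residual error is absorbed into $\ee/2$ not by the constant $50000$ (that only secures the per-estimate accuracy already assumed under $\EE$) but by the convergent sums $\sum_i(4i)^{-3}$, $\sum_i(8i)^{-2}$ and the final division by $\frac{(c+1)(1+p^{-1}c)}{c}\ge\frac{c+1}{c}$, exactly as in the paper.
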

\begin{proof}
Consider an element $u_i \in OPT \setminus A_i$. Since $u_i$ was rejected by Algorithm~\ref{alg:GreedyWithThreshold_Polynomial}, we must have $m_{u'_i, i} < 0$, which implies:
\begin{align*}
	\frac{c \cdot g(S_{i - 1})}{pk} + \ee(4i)^{-3} \cdot f(OPT)
	>{} &
	g(S_{i - 1} + u_i - u'_i) - g(S_{i - 1})\\
	\geq{} &
	g(u_i \mid A_{i - 1}) - \frac{g(S_{i - 1})}{pk} - 2\ee(4i)^{-3} \cdot f(OPT)
	\enspace,
\end{align*}
where the second inequality follows from Lemma~\ref{le:maximum_better_than_average_random_Polynomial}. Rearranging yields:
\[
	g(u_i \mid A_{i - 1})
	<
	\frac{c+1}{pk} \cdot g(S_{i - 1}) + 3\ee(4i)^{-3} \cdot f(OPT)
	\leq
	\frac{c+1}{pk} \cdot g(S_n) + 4\ee(8i)^{-2} \cdot f(OPT)
	\enspace,
\]
where the last inequality uses Corollary~\ref{cor:monotonicity}. By the submodularity of $g$ and Lemma~\ref{le:S_fraction_of_A_random_Polynomial}, we now get:
\begin{align*}
	g(A_n \cup OPT)
	\leq{} &
	g(A_n) + \sum_{u \in OPT \setminus A_n} g(u_i \mid A_n)\\
	<{} &
	g(A_n) + \sum_{u \in OPT \setminus A_n} \left(\frac{c + 1}{pk} \cdot g(S_n)\right) + 4\ee \cdot \sum_{i = 1}^n (8i)^{-2} \cdot f(OPT)\\
	\leq{} &
	g(A_n) + p^{-1}(c + 1) \cdot g(S_n) + 4\ee \cdot \sum_{i = 1}^n (8i)^{-2} \cdot f(OPT)\\
	\leq{} &
	\frac{(c + 1)(1 + p^{-1}c)}{c} \cdot g(S_n) + \frac{c + 1}{c} \cdot \frac{\ee}{4} \cdot f(OPT) + 4\ee \cdot \sum_{i = 1}^n (8i)^{-2} \cdot f(OPT)
	\enspace.
\end{align*}

Observe now that:
\[
	4\ee \cdot \sum_{i = 1}^n (8i)^{-2} \cdot f(OPT)
	\leq
	\frac{\ee}{16} + \int_1^n \frac{dx}{16x^2}
	=
	\frac{\ee}{16} - \left. \frac{1}{16x}\right|_1^n
	\leq
	\frac{\ee}{16} + \frac{1}{16}
	<
	\frac{\ee}{4}
	\enspace,
\]
and, by Lemma~\ref{le:sampling_lowerbound},
\begin{align*}
	g(A_n \cup OPT)
	={} &
	F(p^{-1} \cdot OPT + p^{-1} \cdot (A_n \setminus OPT))\inConference{\\}
	\geq\inConference{{} &}
	p^{-1}(1 - p^{-1}) \cdot f(OPT)
	\enspace.
\end{align*}

Combining the three last inequalities yields:
\[
	\left[p^{-1}(1 - p^{-1}) - \frac{c + 1}{c} \cdot \frac{\ee}{2}\right] \cdot f(OPT)
	\leq
	\frac{(c + 1)(1 + p^{-1}c)}{c} \cdot g(S_n)
	\enspace.
\]
The corollary now follow by dividing the last inequality by $\frac{(c + 1)(1 + p^{-1}c)}{c}$ and observing that $1 + p^{-1}c \geq 1$.
\end{proof}

\begin{corollary}
Algorithm~\ref{alg:GreedyWithThreshold_Polynomial} produces a set $S_n$ such that:
\[
	\mathbb{E}[g(S_n)] \geq \left(\frac{cp^{-1}(1 - p^{-1})}{(c + 1)(1 + p^{-1}c)} - \ee\right) \cdot f(OPT)
	\enspace.
\]
Hence, for $c = 7/4$ and $p = 3$, $g(S_n) \geq (56/627 - \ee) \cdot f(OPT)$.
\end{corollary}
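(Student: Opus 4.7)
The plan is to combine the two key corollaries already established: Corollary~\ref{cor:approximations_right}, which says that the event $\EE$ (all the sampling-based approximations $m_{u,i}$ are within $\ee(4i)^{-3}\cdot f(OPT)$ of the true quantities) holds with probability at least $1-\ee/2$, and Corollary~\ref{co:S_n_guarantee_Polynomial}, which gives a deterministic lower bound on $g(S_n)$ conditioned on $\EE$.

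First, I would apply the law of total expectation, splitting on whether $\EE$ occurs. Since $g$ is non-negative (it is the multilinear extension of the non-negative function $f$ evaluated at a point in $[0,1]^\NN$, hence $g(S_n) \geq 0$ always), the contribution from $\neg \EE$ can simply be dropped. This yields
\[
\mathbb{E}[g(S_n)] \;\geq\; \Pr[\EE]\cdot \mathbb{E}[g(S_n)\mid \EE] \;\geq\; (1-\ee/2)\cdot\left(\frac{cp^{-1}(1-p^{-1})}{(c+1)(1+p^{-1}c)} - \frac{\ee}{2}\right)\cdot f(OPT).
\]

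Second, I would simplify the error term. Let $\rho = \frac{cp^{-1}(1-p^{-1})}{(c+1)(1+p^{-1}c)}$. Since $g(S_n) \leq f(OPT)$ whenever the algorithm's solution is meaningful (or, more formally, since $\rho \leq 1$), the product $(1-\ee/2)(\rho - \ee/2)$ expands as $\rho - \ee/2 - (\ee/2)\rho + \ee^2/4 \geq \rho - \ee/2 - \ee/2 = \rho - \ee$, giving the first claimed inequality. The arithmetic for the second claim reduces to checking $\rho = 56/627$ at $c=7/4$, $p=3$: numerator equals $(7/4)(1/3)(2/3) = 7/18$ and denominator equals $(11/4)(1+7/12) = (11/4)(19/12) = 209/48$, so $\rho = (7/18)\cdot(48/209) = 336/3762 = 56/627$.

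I do not anticipate any significant obstacle here — this corollary is purely a wrapping-up step that turns the deterministic conditional bound of Corollary~\ref{co:S_n_guarantee_Polynomial} into an unconditional expectation bound. The only thing to be careful about is using the non-negativity of $g$ to discard the low-probability bad event, and verifying that the two sources of $\ee/2$-loss (the probability $\ee/2$ that $\EE$ fails, and the $\ee/2$ slack already absorbed inside Corollary~\ref{co:S_n_guarantee_Polynomial}) combine cleanly into a single $\ee$.
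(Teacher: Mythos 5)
Your proposal is correct and matches the paper's own proof: both apply the law of total expectation with the bounds $\Pr[\EE] \geq 1 - \ee/2$ from Corollary~\ref{cor:approximations_right} and the conditional guarantee of Corollary~\ref{co:S_n_guarantee_Polynomial}, then absorb the two $\ee/2$ losses into a single $\ee$ (your justification via $\rho \leq 1$ is the right one, and your arithmetic giving $56/627$ checks out). No differences worth noting.
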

\begin{proof}
By the law of total expectation,
\begin{align*}
	\mathbb{E}[g(S_n)]
	\geq{} &
	\Pr[\EE] \cdot \mathbb{E}[g(S_n) \mid \EE]
	\geq
	(1 - \ee/2) \cdot \left(\frac{cp^{-1}(1 - p^{-1})}{(c + 1)(1 + p^{-1}c)} - \frac{\ee}{2}\right) \cdot f(OPT)\\
	\geq{} &
	\left(\frac{cp^{-1}(1 - p^{-1})}{(c + 1)(1 + p^{-1}c)} - \ee\right) \cdot f(OPT)
	\enspace,
\end{align*}
where the second inequality holds by Corollaries~\ref{cor:approximations_right} and~\ref{co:S_n_guarantee_Polynomial}.
\end{proof}}

\end{document}